\pgfplotsset{compat=1.5}
\def\AA{\mathbf{A}}
\def\B{\mathbf{B}}
\def\J{\mathbf{J}}
\def\Y{\mathbf{Y}}
\def\U{\mathbf{U}}
\def\V{\mathbf{V}}
\def\S{\mathbf{S}}
\def\T{\mathbf{T}}
\def\W{\mathbf{W}}
\def\Q{\mathbf{Q}}
\def\G{\mathbf{G}}
\def\RR{\mathbf{R}}
\newtheorem{theorem}{Theorem}
\newtheorem{proposition}{Proposition}
\newtheorem{lemma}[theorem]{Lemma}
\newtheorem{observation}[theorem]{Observation}
\newtheorem{definition}[theorem]{Definition}
\newtheorem{corollary}[theorem]{Corollary}
\newcommand{\nnz}{\mathtt{nnz}}
\newcommand{\norm}[1]{\left\lVert#1\right\rVert}
\newcommand{\eps}{\epsilon}
\newcommand{\poly}{\mathtt{poly}}
\newcommand{\R}{\mathbb{R}}
\newcommand{\tr}{\mathtt{tr}}
\def\Snospace~{\S{}}
\title{In-Database Regression in Input Sparsity Time }
\author{
Rajesh Jayaram\thanks{Rajesh Jayaram and David Woodruff would like to thank the partial support
from the Office of Naval Research (ONR) grant N00014-18-1-2562, and the National Science Foundation (NSF) under
Grant No. CCF-1815840.}\\
Carnegie Mellon University\\
\texttt{rkjayara@cs.cmu.edu} 
\and 
Alireza Samadian \\
University of Pittsburgh \\
\texttt{samadian@cs.pitt.edu} 
\and 
David P. Woodruff\footnotemark[1] \\
Carnegie Mellon University \\
\texttt{dwoodruf@cs.cmu.edu} 
\and
Peng Ye\\
Tsinghua University\\
\texttt{yep17@mails.tsinghua.edu.cn}
}
\begin{document}

\maketitle

\begin{abstract}

Sketching is a powerful dimensionality reduction technique for accelerating algorithms for data analysis. A crucial step in sketching methods is to compute a \textit{subspace embedding} (SE) for a large matrix $\AA \in \R^{N \times d}$. SE's are the primary tool for obtaining extremely efficient solutions for many linear-algebraic tasks, such as least squares regression and low rank approximation.
 Computing an SE often requires an explicit representation of $\AA$ and running time proportional to the size of $\AA$. However, if $\AA= \T_1 \Join \T_2 \Join \dots \Join \T_m$ is the  result of a \textit{database join query} on several smaller tables  $\T_i \in \R^{n_i \times d_i}$, then this running time can be prohibitive, as $\AA$ itself can have as many as $O(n_1 n_2 \cdots n_m)$ rows. 
 
 In this work, we design subspace embeddings for database joins which can be computed significantly faster than computing the join.  For the case of a two table join $\AA = \T_1 \Join \T_2$ we give \textit{input-sparsity} algorithms for computing subspace embeddings, with running time bounded by the number of non-zero entries in $\T_1,\T_2$. This results in input-sparsity time algorithms for high accuracy regression, significantly improving upon the running time of prior FAQ-based methods for regression. We extend our results to arbitrary joins for the ridge regression problem, also considerably improving the running time of prior methods. Empirically, we apply our method to real datasets and show that it is significantly faster than existing algorithms. 
 
\end{abstract}

\section{Introduction}

Sketching is an important tool for dimensionality reduction, whereby one quickly reduces the size of a large-scale optimization problem while approximately preserving the solution space. One can then solve the lower-dimensional problem much more efficiently, and the sketching guarantee ensures that the resulting solution is approximately optimal for the original optimization problem. In this paper, we focus on the notion of a \textit{subspace embedding} (SE), and its applications to problems in databases. Formally, given a large matrix $\AA \in \R^{N \times d}$, an $\eps$-subspace embedding for $\AA$ is a matrix $\S \AA$, where $\S \in \R^{k \times N}$, with the property that 
$\|\S\AA x\|_2 = (1 \pm \eps)\|\AA x\|_2$ 
simultaneously for all $x \in \R^d$. 

A prototypical example of how a subspace embedding can be applied to solve an optimization problem is linear regression, where one wants to solve $\min_x \|\AA x - b\|_2$ for a tall matrix $\AA \in \R^{N \times d}$ where $N \gg d$ contains many data points (rows). Instead of directly solving for $x$, which requires computing the covariance matrix of $\AA$ and which would require $O(Nd^2)$ time for general $\AA$\footnote{This can be sped up to $O(Nd^{\omega-1})$ time in theory, where $\omega \approx 2.373$ is the exponent of matrix multiplication.}, one can first compute a \textit{sketch} $\S \AA,\S b$ of the problem, where $\S \in \R^{k \times N}$ is a random matrix which can be quickly applied to $\AA$ and b. If $[\S \AA,\S b]$ is an $\eps$-subspace embedding for $[\AA,b]$, it follows that the regression problem using the solution $\hat{x}$ to $\min_x \|\S \AA x - \S b\|_2$ will be within a $(1 + \eps)$ factor of the optimal solution cost. However, if $k \ll N$, then solving for $\hat{x}$ can now be accomplished much faster -- in $O(kd^2)$ time. SEs and similar tools for dimensionality reduction can also be used to speed up the running time of algorithms for $\ell_p$ regression, low rank approximation, and many other problems. We refer the reader to the survey \cite{woodruff2014sketching} for a survey of applications of sketching to randomized numerical linear algebra.

One potential downside of most standard subspace embeddings is that the time required to compute $\S \AA$ often scales linearly with the \textit{input sparsity} of $\AA$, meaning the number of non-zero entries of $\AA$, which we denote by $\nnz(\AA)$. This dependence on $\nnz(\AA)$ is in general necessary just to read all of the entries of $\AA$. However, in many applications the dataset $\AA$ is highly structured, and is itself the result of a query performed on a much smaller dataset. A canonical and important example of this is a database join. This example is particularly important since datasets are often
the result of a database join~\cite{hellerstein2012madlib}. In fact, this use-case has motivated companies and teams such as RelationalAI~\cite{RelationalAI} and Google's Bigquery ML~\cite{BigqueryML} to design databases that are capable of handling machine learning queries. Here, we have $m$ tables $\T_1,\T_2,\dots,\T_m$, where $\T_i \in \R^{n_i \times d_i}$, and we can consider their join  $\J=\T_1 \Join \T_2 \Join \dots \Join \T_m \in \R^{N \times d}$ over a set of columns. In general, the number $N$ of rows of $\J$ can be as large as $n_1 n_2 \cdots n_m$, which far exceeds the actual input description of $\sum_i \nnz(\T_i)$ to the problem.

We note that it is possible to do various operations on a join in sublinear time using database algorithms that are developed for Functional Aggregation Queries (FAQ)~\cite{abo2016faq}, and indeed it is possible using so-called FAQ-based techniques \cite{Indatabase:Linear:Regression} to compute the covariance matrix $\J^T \J$ in time $O(d^4 m n \log(n))$ for an acyclic join, where $n = \max(n_1, \allowbreak \ldots, n_m)$, after which one can solve least squares regression in $\text{poly}(d)$ time. While this can be significantly faster than the $\Omega(N)$ time required to compute the actual join, it is still significantly larger than the input description size $\sum_i \nnz(\T_i)$, which even if the tables are dense, is at most $dmn$. When the tables are sparse, e.g., $O(n)$ non-zero entries in each table, one could even hope for a running time close to $O(mn)$. One could hope to achieve such running times with the help of subspace embeddings, by first reducing the data to a low-dimensional representation, and then solving regression exactly on the small representation. However, due to the lack of a clean algebraic structure for database joins, it is not clear how to apply a subspace embedding without first computing the join. Thus, a natural question is: 
%\vspace{-.1 in}
\begin{center}
{\it Is it possible to apply a subspace embedding to a join, without having to explicitly form the join?}
\end{center}
We note that the lack of input-sparsity time algorithms for regression on joins is further exacerbated in the presence of categorical features. 
%While improving the time complexity of linear regression for numerical data by subspace embedding is interesting by itself, the need to improve the time complexity for the data that has categorical features has even a greater importance. 
Indeed, it is a common practice to convert categorical data to their so-called {\it one-hot encoding} before optimizing any statistical model. Such an encoding creates one column for each possible value of the categorical feature and only a single column is non-zero for each row. Thus, in the presence of categorical features, the tables in the join are extremely high dimensional and extremely sparse. Since the data is high-dimensional, one often regularizes it to avoid overfitting, and so in addition to ordinary regression, one could also ask to solve regularized regression on such datasets, such as ridge regression. One could then ask if it is possible to design input-sparsity time algorithms on joins for regression or ridge regression.
%\vspace{-.1 in}
%\begin{center}
%{\it Can one design input-sparsity time algorithms on joins for regression or ridge regression?}
%\end{center}

\subsection{Our Contributions}

We start by describing our results for least squares regression in the important case when the join is on two tables. We note that the two-table case is a very well-studied case, see, e.g., \cite{ams99,agms02,GWWZ15}. Moreover, one can always reduce to the case of a two-table join by precomputing part of a general join. 
The following two theorems state our results for computing a subspace embedding and for solving regression on the join $\J = \T_1 \Join \T_2$ of two tables. Our results demonstrate a substantial improvement over all prior algorithms for this problem. In particular, they answer the two questions above, showing that it is possible to compute a subspace embedding in input-sparsity time, and that regression can also be solved in input-sparsity time. 

To the best of our knowledge, the fastest algorithm for linear regression on two tables has a worst-case time complexity of $\tilde{O} (n d + n D^2)$, where $n = \max(n_1, n_2)$, $d$ is the number of columns, and $D$ is the dimensionality of the data after encoding the categorical data. Note that in the case of numerical data (dense case) $D=d$ since there is no one-hot encoding and the time complexity is $O(n d^2)$, and it can be further improved to $O(n d^{\omega-1})$  where $\omega < 2.373$ is the exponent of fast matrix multiplication; this time complexity is the same as the fastest known time complexity for exact linear regression on a single table. In the case of categorical features (sparse data), using sparse tensors, $D^2$ can be replaced by a constant that is at least the number of non-zero elements in $\J^T \J$ (which is at least $d^2$ and at most $D^2$) using the algorithm in \cite{Indatabase:Linear:Regression}.
% \peng{In the introduction, it is claimed that the "results demonstrate a substantial improvement over all prior algorithms for regression on two tables." It would improve the paper to provide a concise table comparing the asymtotic runtime of the new approach with all prior algorithms. The only algorithm that is explicitly discussed in the paper is the FAQ-based technique [1] to compute the covariance matrix.}

We state two results with differing leading terms and low-order additive terms, as one may be more useful than the other depending on whether the input tables are dense or sparse. 

%\begin{theorem}[In-Database Subspace Embedding]\label{thm:subspace} 
%	Suppose  $\J=\T_1 \Join \T_2 \in \R^{n \times d}$ is a join of two tables, where $\T_1 \in \R^{n_1 \times d_1},\T_2 \in \R^{n_2 \times d_2}$. Then Algorithm \ref{alg:1} outputs a sketching matrix $\S^* \in \R^{k \times n}$ with $k = \tilde{O}(d^2 \gamma^2 /\eps^2)$ (where $\gamma$ is chosen as in Lemma \ref{lem:twotablemain})  such that $\tilde{\J} = \S^* \J$ is an $\eps$-subspace embedding for $\J$, meaning
%	\[\|\S^*\J x\|_2^2 = (1 \pm \eps)\|\J x\|_2^2 \]
%	for all $x \in \R^{d}$ with probability at least $9/10$. The runtime to return $\S^* \J$ is the minimum of $\tilde{O}((n_1 + n_2)d/\eps^2  +  d^{3}/\eps^2  )$ and  $\tilde{O}((\nnz(\T_1) + \nnz(\T_2))/\eps^2 +d^5/\eps^2   )$.
%\end{theorem} 

\begin{theorem}[In-Database Subspace Embedding]\label{thm:subspace} 
	 Suppose  $\J=\T_1 \Join \T_2 \in \R^{N \times d}$ is a join of two tables, where $\T_1 \in \R^{n_1 \times d_1},\T_2 \in \R^{n_2 \times d_2}$. Then Algorithm \ref{alg:1} outputs a sketching matrix $\S^* \in \R^{k \times N}$ such that $\tilde{\J} = \S^* \J$ is an $\eps$-subspace embedding for $\J$, meaning
	\[\|\S^*\J x\|_2^2 = (1 \pm \eps)\|\J x\|_2^2 \]
	simultaneously for all $x \in \R^{d}$ with probability\footnote{We remark that using standard techniques for amplifying the success probability of an SE (see Section 2.3 of \cite{woodruff2014sketching}) one can boost the success probability to $1-\delta$ by repeating the entire algorithm $O(\log \delta^{-1})$ times, increasing the running time by a multiplicative $O(\log \delta^{-1})$ factor. One must then compute the SVD of each of the $O(\log \delta^{-1})$ sketches, which results in an additive $O( k d^{\omega-1}\log \delta^{-1} )$ term in the running time, where $\omega \approx 2.373$ is the exponent of matrix multiplication. Note that this additive dependence on $d$ is only slightly ($\approx d^{.373}$) larger than the dependence required for constant probability as stated in the theorem.} at least $9/10$. The running time to return $\S^* \J$ is the minimum of $\tilde{O}((n_1 + n_2)d/\eps^2  +  d^{3}/\eps^2  )$ and  $\tilde{O}((\nnz(\T_1) + \nnz(\T_2))/\eps^2 + (n_1 + n_2)/\eps^2 +d^5/\eps^2   )$.\footnote{We use $\tilde{O}$ notation to omit factors of $\log(N)$.} In the former case, we have $k = \tilde{O}(d^2 /\eps^2)$, and in the latter case we have $k = \tilde{O}(d^4 /\eps^2)$.
\end{theorem}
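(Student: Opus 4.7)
The idea is to reduce the subspace-embedding problem on the $N \times d$ join $\J$ to the same problem on a much smaller matrix $\M \in \R^{O(n_1+n_2) \times d}$ having \emph{exactly the same} quadratic form $\|\M x\|_2^2 = \|\J x\|_2^2$, and then invoke a standard input-sparsity-time SE on $\M$. This is natural because the row of $\J$ corresponding to the matched pair $(i,j)$, applied to $x=[x_1;x_2]$, equals $y_1[i]+y_2[j]$, where $y_1=\T_1 x_1$ and $y_2=\T_2 x_2$. Thus $\|\J x\|_2^2$ depends on $x$ only through $y_1,y_2$ and the join multiplicities.

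\textbf{The key identity.} Let $S_i^v=\{\,r:\T_i[r,:]\text{ has join key }v\}$, $n_i^v=|S_i^v|$, and write $\bar a^v,\bar b^v$ for the group means of $y_1,y_2$ over $S_1^v,S_2^v$. Expanding $(y_1[i]+y_2[j])^2$ and grouping by key yields
\[\sum_{i\in S_1^v}\sum_{j\in S_2^v}(y_1[i]+y_2[j])^2 = n_1^v n_2^v(\bar a^v+\bar b^v)^2 + n_2^v\!\sum_{i\in S_1^v}(y_1[i]-\bar a^v)^2 + n_1^v\!\sum_{j\in S_2^v}(y_2[j]-\bar b^v)^2.\]
Summing over all keys $v$ writes $\|\J x\|_2^2$ as a sum of $|V|+n_1+n_2$ squared linear functionals of $x$, which defines $\M$: one ``key'' row per distinct $v$ of the form $\bigl[\sqrt{n_2^v/n_1^v}\,Z_1^v,\ \sqrt{n_1^v/n_2^v}\,Z_2^v\bigr]$ with $Z_i^v=\sum_{r\in S_i^v}\T_i[r,:]$, and one ``centered'' row per data row of each table of the form $\sqrt{n_{3-i}^{k_r}}(\T_i[r,:]-\bar\T_i^{k_r})$ (with zeros in the other table's columns). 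By construction $\|\M x\|_2^2=\|\J x\|_2^2$ for every $x\in\R^d$, so it suffices to SE-sketch $\M$.

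\textbf{Sketching $\M$.} For the first (dense) bound, materialize $\M$ in $O((n_1+n_2)d)$ time and apply a CountSketch (refined on the small sketch by a Gaussian/SRHT) to obtain $\tilde\J=\S\M$ with $k=\tilde O(d^2/\eps^2)$ rows; the additive $O(d^3/\eps^2)$ term covers post-processing on the $k\times d$ sketch. For the second (sparse) bound, the centered rows must not be materialized since mean-subtraction destroys sparsity. The remedy is to write the centered block as $D_i\T_i - E_i\bar\T_i$, where $D_i$ is diagonal, $E_i\in\R^{n_i\times|V|}$ has exactly one non-zero per row (so $\nnz(E_i)=n_i$), and $\bar\T_i\in\R^{|V|\times d_i}$ stores the group means (so $\nnz(\bar\T_i)\le\nnz(\T_i)$). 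Applying the sketch separately, $\S(D_i\T_i)$ takes $O(\nnz(\T_i))$ time via CountSketch and $\S(E_i\bar\T_i)=(\S E_i)\bar\T_i$ is built by first sketching the one-non-zero-per-row matrix $E_i$ in $O(n_i)$ time and then combining with $\bar\T_i$; the ``key'' rows contribute at most $\nnz(\T_1)+\nnz(\T_2)$ non-zeros and are sketched directly. In both regimes $\|\tilde\J x\|_2^2=(1\pm\eps)\|\M x\|_2^2=(1\pm\eps)\|\J x\|_2^2$.

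\textbf{Main obstacle.} The conceptual crux is the sum-of-squares identity above: it collapses $\J$ (with potentially $n_1 n_2$ rows) into an equivalent matrix of only $O(n_1+n_2)$ rows, bypassing the join entirely, after which the dense bound is essentially immediate. The real technical difficulty is the sparse bound: one must sketch the rank-$|V|$ mean-correction $E_i\bar\T_i$ without ever forming an $\Omega(nd)$-sized intermediate, which forces the sparse-plus-low-rank decomposition above, a per-part running-time accounting for the CountSketch, and an enlarged $k=\tilde O(d^4/\eps^2)$ to compensate for the weaker per-stage concentration.
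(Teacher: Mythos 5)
Your sum-of-squares identity is correct, and it is a genuinely different route from the paper's. Writing, per join key $v$,
\[
\sum_{i\in S_1^v}\sum_{j\in S_2^v}(a_i+b_j)^2 \;=\; n_1^v n_2^v(\bar a^v+\bar b^v)^2+n_2^v\sum_{i}(a_i-\bar a^v)^2+n_1^v\sum_{j}(b_j-\bar b^v)^2
\]
does collapse $\J$ into an $(n_1+n_2+|\mathcal{B}|)\times d$ matrix $\M$ with $\M^T\M=\J^T\J$ exactly, and each row of $\M$ is a linear combination of rows of $\J$, so $\S^*=\S\P$ is a legitimate sketching matrix. For the \emph{dense} bound this is a clean, arguably simpler argument than the paper's: materialize $\M$ in $O((n_1+n_2)d)$ time and apply CountSketch/OSNAP, versus the paper's split into $\mathcal{B}_{\text{big}}$/$\mathcal{B}_{\text{small}}$, Fast Tensor-Sketch on big blocks, and approximate generalized leverage-score sampling on small blocks implemented via the bespoke $O(\log N)$-per-sample $\ell_2$ sampler of Algorithms \ref{alg:L2presample}--\ref{alg:L2sample}. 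What the paper's heavier machinery buys is precisely the sparse bound, and that is where your proposal has a genuine gap.

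Concretely, the mean-correction term does not sketch in input-sparsity time the way you claim. You compute $\S(E_i\bar\T_i)$ as $(\S E_i)\bar\T_i$; column $v$ of $\S E_i$ is $\S\mathbf{1}_{S_i^v}$, which has up to $\min(n_i^v,k)$ non-zeros, and row $v$ of $\bar\T_i$ has up to $\nnz(\T_i^{(v)})$ non-zeros, so the sparse product costs $\sum_v \min(n_i^v,k)\cdot\nnz(\bar\T_i^v)$. This is not $O(\nnz(\T_i))$: take $n_i/d$ keys, each with $n_i^v=d$ rows having $O(1)$ non-zeros spread over $d$ distinct columns, so $\nnz(\T_i)=O(n_i)$ but each group mean is $d$-dense and the product costs $\Theta(n_i d)$ --- i.e., you fall back to the dense bound exactly in the one-hot-encoded regime the sparse bound is meant for. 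Enlarging $k$ to $\tilde O(d^4/\eps^2)$ makes this worse, not better, and the individual bounds $\nnz(E_i)=n_i$, $\nnz(\bar\T_i)\le\nnz(\T_i)$ do not control the cost of their product. Nor can you sketch the two PSD pieces $D_i\T_i$ and the group-mean rows separately and subtract, since subspace-embedding error on each piece is additive in $\|\B x\|_2^2+\|\CC x\|_2^2$ and need not be small relative to their difference. This is the obstruction the paper circumvents by never centering: big blocks keep their Kronecker form $\hat\T_1^{(i)}\otimes\mathbf{1}+\mathbf{1}\otimes\hat\T_2^{(i)}$ and are hit with Fast Tensor-Sketch in $\tilde O(\nnz(\T_1^{(i)})+\nnz(\T_2^{(i)})+d^2)$ time, while the small blocks (at most $(n_1+n_2)d\gamma$ rows by Proposition \ref{prop:nbound}) are subsampled by approximate leverage scores using Lemma \ref{lem:sampfast}. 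So your argument establishes the theorem's first running-time bound but not the second; the sparse case needs a different mechanism.
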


Next, by following a standard reduction from a subspace embedding to an algorithm for regression, we obtain extremely efficient machine precision regression algorithms for two-table database joins. 

\vspace{.1in}

\begin{theorem}[Machine Precision Regression]\label{thm:reg} 
%\textbf{Theorem \ref{thm:reg} } [Machine Precision Regression]{\it 
	\; Suppose  $\J=\T_1 \Join \T_2 \in \R^{N \times d}$ is a join of two tables, where $\T_1 \in \R^{n_1 \times d_1},\T_2 \in \R^{n_2 \times d_2}$. Let $\U \subseteq [d]$ be any subset, and let $\J_U \in \R^{N \times |U|}$ be $\J$ restricted to the columns in $U$, and let $b \in \R^N$ be any column of the join $\J$. Then there is an algorithm which outputs $\hat{x} \in \R^{|U|}$ such that with probability $9/10$\footnote{The probability of success here is the same as the probability of success of constructing a subspace embedding; see earlier footnote about amplifying this success probability.} we have
	\[     \|\J_{U} \hat{x} - b\|_2 \leq (1+\eps) \min_{x \in \R^{|U|}}\|\J_{U} x - b\|_2 .\]
	The running time required to compute $\hat{x}$ is the minimum of $\tilde{O}( ((n_1 + n_2)d  +  d^{3}) \log(1/\eps) )$ and  $\tilde{O}( (\nnz(\T_1) + \nnz(\T_2)  +d^5 ) \log(1/\eps) )$.
\end{theorem}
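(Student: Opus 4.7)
The plan is the standard sketch-and-precondition reduction to high-accuracy regression, specialized so that each iteration runs in input-sparsity time over the join. First, apply Theorem~\ref{thm:subspace} with constant distortion $\eps_0 = 1/4$ to the augmented column set $[\J_U,\,b]$ (a sub-selection of columns of $\J$) to obtain a sketch $\S^*$ for which $\tilde{\J}_U := \S^*\J_U$ is a constant-factor SE and, simultaneously, $\|\S^*(\J_U x - b)\|_2 = (1\pm\eps_0)\|\J_U x - b\|_2$ for every $x$. Compute the thin QR decomposition $\tilde{\J}_U = \Q\RR$; the SE guarantee rewrites as $(1-\eps_0)\HH \preceq \tilde{\HH} \preceq (1+\eps_0)\HH$, where $\HH := \J_U^\top \J_U$ and $\tilde{\HH} := \RR^\top\RR$, so $\RR^{-1}$ is an $O(1)$-preconditioner for $\J_U$.

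Next, iterate the preconditioned Newton (Hessian-sketch) update
\[
x_{t+1} \;=\; x_t - (\RR^\top \RR)^{-1}\bigl(\J_U^\top \J_U\, x_t - \J_U^\top b\bigr),
\]
initialized at $x_0 = \RR^{-1}\Q^\top \S^* b$, the exact minimizer of the sketched regression $\min_x \|\tilde{\J}_U x - \S^* b\|_2$. A direct spectral calculation using $(1\pm\eps_0)\HH \succeq \pm\tilde{\HH}$ gives $\|\J_U(x_{t+1}-x^*)\|_2 \le c\,\|\J_U(x_t-x^*)\|_2$ with contraction factor $c = O(\eps_0) < 1$; combined with $\|\J_U(x_0-x^*)\|_2 = O(\mathrm{OPT})$ (from the SE and Pythagoras) and the identity $\|\J_U x - b\|_2^2 = \mathrm{OPT}^2 + \|\J_U(x-x^*)\|_2^2$, the choice $T = O(\log(1/\eps))$ yields the target $(1+\eps)$-approximation.

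The main obstacle is executing one iteration in input-sparsity time \emph{without} materializing the length-$N$ vector $\J_U x_t - b$. I compute $\J_U^\top \J_U x_t$ (and, once, $\J_U^\top b$) directly from $\T_1,\T_2$ by exploiting the block structure of a two-table join. Partition $x_t = (x_t^{(1)}, x_t^{(2)})$ according to the column partition of $\T_1,\T_2$ in $U$, and precompute the per-row scalars $u_i := \langle [\T_1]_{i,*}, x_t^{(1)}\rangle$ and $v_j := \langle [\T_2]_{j,*}, x_t^{(2)}\rangle$ in $O(\nnz(\T_1)+\nnz(\T_2))$ time. For each join-key $k$ also precompute $S_\ell^k := \sum_{i:\mathrm{key}(i)=k}[\T_\ell]_{i,*}$, the counts $w_\ell^k$, and the scalar per-key aggregates of the $u_i$'s and $v_j$'s. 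The four blocks of $\J_U^\top \J_U x_t$ then collapse into sums indexed by rows of $\T_1$ or $\T_2$ weighted by these precomputed per-key aggregates; in particular, the cross-table block $\sum_{(i,j)\text{ joining}}[\T_1]_{i,*}[\T_2]_{j,*}^\top x_t^{(2)}$ becomes $\sum_i [\T_1]_{i,*}\bigl(\sum_{j:\mathrm{key}(j)=\mathrm{key}(i)} v_j\bigr)$, costing $O(\nnz(\T_1)+n_1)$ rather than $\Theta(N)$. The same decomposition computes $\J_U^\top b$ once since $b$ is itself a column of the join. Each iteration therefore costs $O(\nnz(\T_1)+\nnz(\T_2)+n_1+n_2+d^2)$, the $d^2$ arising from the two triangular solves against $\RR$.

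Summing costs---one invocation of Theorem~\ref{thm:subspace} at constant $\eps_0$ together with $O(\log(1/\eps))$ iterations as above---yields both quoted running-time bounds. The hardest step is the careful decomposition of the cross-table block via per-key aggregates; once that matvec is shown to run in input-sparsity time, the rest is classical preconditioned iterative refinement.
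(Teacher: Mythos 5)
Your proposal is correct and follows essentially the same route as the paper: a constant-accuracy subspace embedding from Theorem~\ref{thm:subspace}, a QR factorization of the sketch to obtain an $O(1)$-preconditioner $\RR$, preconditioned iterative refinement initialized at the sketched solution, and an input-sparsity-time evaluation of $\J_U^\top \J_U x_t$ via per-key aggregates, which is exactly the content of the paper's Proposition~\ref{prop:fastprod} (stated there via the Kronecker decomposition $\J^{(i)} = \hat{\T}_1^{(i)}\otimes\mathbf{1} + \mathbf{1}\otimes\hat{\T}_2^{(i)}$ and the mixed-product property). The one accounting issue: you take QR of $\tilde{\J}_U$ directly, which has $k=\tilde{O}(d^2\gamma^2)$ rows, so the factorization costs $\tilde{O}(kd^{\omega-1})$, exceeding the stated $d^3$ (dense) and $d^5$ (sparse) additive terms; the paper fixes this by first compressing the sketch with a second OSNAP to $\tilde{O}(d)$ rows before the QR, and you should do the same.
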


\paragraph{General Joins}
We next consider arbitrary joins on more than two tables. In this case, we primarily focus on the ridge regression problem $\min_x \|\J x-b\|_2^2 + \lambda \|x\|_2^2,$ for a regularization parameter $\lambda$. This problem is a popular regularized variant of regression and was considered in the context of database joins in \cite{Indatabase:Linear:Regression}. We introduce a general framework to apply sketching methods over arbitrary joins in Section \ref{sec:general}; our method is able to take a sketch with certain properties as a black box, and can be applied both to TensorSketch \cite{avron2014subspace,pagh2013compressed,pham2013fast}, as well as recent improvements to this sketch  \cite{ahle2020oblivious,Woodruff2020near} for certain joins. Unlike previous work, which required computing $\J^T \J$ exactly, we show how to use sketching to approximate this up to high accuracy, where the number of entries of $\J^T \J$ computed depends on the so-called statistical dimension of $\J$, which can be much smaller than the data dimension $D$.

%when the algorithms meet certain properties. We also show the algorithm \cite{ahle2020oblivious} can be implemented for certain types of the joins, and depending on the number of rows in the sketch, can be faster than finding the exact solution of linear regression.

%Empirically, we focus on the ridge regression problem. We compare our algorithms on various databases to the previous best FAQ-based algorithm of \cite{Indatabase:Linear:Regression}, which computes each entry of the covariance matrix $\J^T \J$. We view $\J^T \J$ as a kernel matrix and use the Recursive RLS-Nystr\"om \cite{musco2016recursive} method to apply sketching and obtain an approximation to it, where our complexity is in terms of the statistical dimension rather than the actual dimension $d$. Our results demonstrate significant speedups over the previous best algorithm, with only a small sacrifice in accuracy. For example, for the natural join of three tables in the real MovieLens data set, which is a join with 27 features, we obtain a 3-fold speedup over the FAQ algorithm, while only paying $.14\%$ in relative error. Moreover, we only compute $130$ entries of $\J^T \J$, while the FAQ algorithm needs to compute $378$ entries. We defer further details to Section \ref{section:experiments}. 
\paragraph{Evaluation}
Empirically, we compare our algorithms on various databases to the previous best FAQ-based algorithm of \cite{Indatabase:Linear:Regression}, which computes each entry of the covariance matrix $\J^T \J$. For two-table joins, we focus on the standard regression problem. We use the algorithm described in Section \ref{section:twotable}, replacing the Fast Tensor-Sketch with Tensor-Sketch for better practical performance. For general joins, we focus on the ridge regression problem; such joins can be very high dimensional and ridge regression helps to prevent overfitting. We apply our sketching approach to the entire join and obtain an approximation to it, where our complexity is in terms of the statistical dimension rather than the actual dimension $D$. Our results demonstrate significant speedups over the previous best algorithm, with only a small sacrifice in accuracy. For example, for the join of two tables in the MovieLens data set, which has 23 features, we obtain a 10-fold speedup over the FAQ-based algorithm, while maintaining a $0.66\%$ relative error. For the natural join of three tables in the real MovieLens data set, which is a join with 24 features, we obtain a 3-fold speedup over the FAQ-based algorithm with only $0.28\%$ MSE relative error. % Also, we analyze our algorithms for these two cases in the context of a  two-table join, and provide experimental results.
Further details can be found in Section \ref{section:experiments}. 

\subsection{Related Work on Sketching Structured Data} %\cite{10.1145/3406325.3465353, Arenas19, Arenas2020when}
The use of specialized sketches for different classes of structured matrices $\AA$ has been a topic of substantial interest. %, and we contribute to this growing body of work.
The TensorSketch algorithm of \cite{pagh2013compressed} can be applied to Kronecker products $\AA = \AA_1 \otimes \cdots \otimes \AA_m$ without explicitly computing the product. The efficiency of this algorithm was recently improved by \cite{ahle2020oblivious}.
The special case when all $\AA_i$ are equal is known as the polynomial kernel, which was considered in \cite{pham2013fast} and extended 
by \cite{avron2014subspace}. % to problems in linear algebra such as principle component analysis, principle component regression, and canonical correlation analysis. One drawback of TensorSketch is its large $3^m$ dependence in the time to compute the sketch. Recently, the dependence on $m$ was improved to polynomial \cite{ahle2020oblivious}.

Kronecker Product Regression has also been studied for the $\ell_p$ loss functions \cite{diao2018,diao2019optimal}, which also gave improved algorithms for $\ell_2$ regression. %Sketching of Kronecker products is also applicable to regression on cross product joins -- these are special join queries for which no two tables share a column, which can be formulated as a summation of $m$ Kronecker products. 
%A similar task is that of \textit{all-pairs regression}, where $\AA$ consists of all pairwise differences of rows of a smaller matrix $\B$. Thus if $\B$ has $n$ rows, then $\AA$ will have $n^2$. In \cite{diao2019optimal}, it was shown how to compute approximate solutions to all-pairs regression in $\nnz(\B) \ll n^2$ time.
In \cite{asw13,ShiW19} 
%a general class of structured matrices $\AA \in \R^{n \times d}$ is consider for which regression and several other linear algebraic tasks can be carried in much less than $\nnz(\AA)$ time. Specifically, 
it is shown that regression on $\AA$ can be solved in time $T(\AA)\cdot \poly(\log(nd))$, where $T(\AA) \leq \nnz(\AA)$ is the time needed to compute the matrix-vector product $\AA y$ for any $y \in \R^d$. For many classes of $\AA$, such as Vandermonde matrices, $T(\AA)$ is substantially smaller than $\nnz(\AA)$. 

Finally, a flurry of work has used sketching to obtain faster algorithms for low rank approximation of structured matrices. In \cite{musco2017sublinear,Bakshi2019RobustAS}, low rank approximations to positive semidefinite (PSD) matrices are computed in time sublinear in the number of entries of $\AA$. This was also shown for distance matrices in \cite{bakshi2018sublinear, indyk2019sample,Bakshi2019RobustAS}.

\subsection{Related In-Database Machine Learning Work}
The work of \cite{abo2016faq} introduced Inside-out, a polynomial time algorithm for calculating functional aggregation queries (FAQs) over joins without performing the joins themselves, which can be utilized to train various types of machine learning models. 
The Inside-Out algorithm builds on several earlier papers, including~\cite{aji2000generalized,Dechter:1996,Kohlas:2008,GM06}. Relational linear regression, singular value decomposition, and factorization machines are studied extensively in multiple prior works~\cite{rendle2013scaling,Kumar:2015:LGL:2723372.2723713,SystemF,khamis2018ac,Indatabase:Linear:Regression,Kumar:2016:JJT:2882903.2882952,elgamal2017spoof,kumar2015demonstration}. The best known time complexity for training linear regression when the features are continuous, is $O(d^4 m n^{\text{fhtw}} \log(n))$ where fhtw is the fractional hypertree width of the join query.  Note that the fractional hypertree width is $1$ for acyclic joins.  For categorical features, the time complexity is $O(d^2 m n^{\text{fhtw}+2})$ in the worst-case; however, \cite{Indatabase:Linear:Regression} uses sparse computation of the results to reduce this time depending on the join instance. In the case of polynomial regression, the calculation of pairwise interactions among the features can be time-consuming and it is addressed in~\cite{Indatabase:Linear:Regression,li2019enabling}. 
 A similar line of work \cite{Arenas19, 10.1145/3406325.3465353, Arenas2020when,focke2021approximately} has developed polynomial time algorithms for sampling from and estimating the size of certain database queries, including join queries with bounded fractional hypertree width, without fully computing the joins themselves. 

Relational support vector machines with Gaussian kernels are studied in \cite{yangtowards}. In \cite{cheng2019nonlinear}, a relational algorithm is introduced for Independent Gaussian Mixture Models, which can be used for kernel density estimation by estimating the underlying distribution of the data.

\section{Preliminaries}
\label{section:preliminaries}

\subsection{Database Joins}\label{sec:databasenotation}

%\alireza{Raj, Please check if the new def is fine}
We first introduce the notion of a \textit{block} of a join, which will be important in our analysis. 
Let $\T_1,\dots,\T_m$ be tables, with $\T_i \in \R^{n_i \times d_i}$. 
 Let $\J=\T_1 \Join \T_2 \Join \dots \Join \T_m \in \R^{N \times d}$ be an arbitrary join on the tables $\T_i$.
 %Note that there are cases where $N = \Omega(n_1 n_2 \cdots n_m)$ and $d = \Omega(\sum_i d_i)$ in general. 
 Let $Q$ be the subset of columns which are contained in at least two tables, e.g., the columns which are joined upon.  For any subset $U$ of columns and any table $T$ containing a set of columns $U'$, let $T|_U$ be the projection of $T$ onto the columns in $U \cap U'$.  Similarly define $r|_U$ for a row $r$. Let $C$ be the set of columns in $\J$, and let $C_j \subset C$ be the columns contained in $T_j$.
%  Define $\mathcal{B} \subset \R^{|Q|}$ to be the set of \textit{blocks} of the join, as follows: $\mathcal{B}(\J) = \mathcal{B}$, is the set of indices  $(i_1,\dots,i_{|Q|}) \in R^{|Q|}$ such that there is a row of the join $J$ with the entry $i_j$ in the $j$-th column of $Q$. 
Define the set of \textit{blocks} $\mathcal{B} = \mathcal{B}(\J) \subset \R^{|Q|}$ of the join to be the set of distinct rows in the projection of $\J$ onto $Q$. In other words, $\mathcal{B}$ is the set of distinct rows which occur in the restriction of $\J$ to the columns being joined on.
 %For each table $\dT_j$, let $S_j \subset [q]$ be the set of columns that are joined on which are contained in $T_j$. 
 For any $j \in [m]$, let $\hat{\T}_j \in \R^{n_j \times d}$ be the embedding of the rows of $\T_j$ into the join $\J$, obtained by padding $\T_j$ with zero-valued columns for each column not contained in $\T_j$, and such that for any column $c$ contained in more than one $\T_j$, we define the matrices $\hat{\T}_j$ so that exactly one of the $\hat{\T}_j$ contains $c$ (it does not matter which of the tables containing the column $c$ has $c$ assigned to it in the definition of $\hat{\T}_j$). More formally, we fix any partition $\{\hat{C}_j\}_{j \in [m]}$ of $C$, such that $C = \cup_j \hat{C}_j$ and $\hat{C}_j \subseteq C_j$ for all $j$.

 For simplicity, given a block $\vec{i} \in \mathcal{B}$, which was defined as a row in $\R^{|Q|}$, we drop the vector notation and write $\vec{i} = i \in \mathcal{B}$. 
 For a given $i =  (i_1,\dots,i_{|Q|})  \in \mathcal{B}$, let $s_{(i)}$ denote the \textit{size} of the block, meaning the number of rows $r$ of the join $\J$ such that $i_j$ is in the $j$-th column of $r$ for all $j \in Q$. For $i \in \mathcal{B}$, let $\T_j^{(i)}$ be the subset of rows $r$ in $\T_j$ such that $r|_Q =i|_{C_j}$, and similarly define $\hat{\T}_j^{(i)}, \J^{(i)}$ to be the subset of rows $r$ in $\hat{\T}_j$ (respectively $\J$) such that $r|_Q = i|_{\hat{C}_j}$ (respectively $r|_Q = i$). 
 For a row $r$ such that $r|_Q = i$ we say that $r$ ``belongs'' to the block $i \in \mathcal{B}$.
 Let $s_{(i),j}$ denote the number of rows of $\T_j^{(i)}$, so that $s_{(i)} = \prod_{j=1}^m s_{(i),j}$. 

As an example, considering the join $T_1(A,B) \Join T_2(B,C)$, we have one block for each distinct value of $B$ that is present in both $T_1$ and $T_2$, and for a given block $B=b$, the size of the block can be computed as the number of rows in $T_1$, with $B=b$, multiplied by the number of rows in $T_2$, with $B=b$.
%  \peng{Providing some examples would help the readers understand the concept of blocks}
 
 Using the above notion of blocks of a join, we can construct $\J$ as a stacking of matrices $\J^{(i)}$ for $i \in \mathcal{B}$. For the case of two table joins $\J = \T_1 \Join \T_2$, we have $\J^{(i)} = \left(\hat{\T}_1^{(i)} \otimes \mathbf{1}^{s_{(i),2}} +  \mathbf{1}^{s_{(i),1}} \otimes   \hat{\T}_2^{(i)}\right) \in \R^{s_{(i)} \times d}$. In other words, $\J^{(i)}$ is the subset of rows of $\J$ contained in block $i$. Observe that the entire join $\J$ is the result of stacking the matrices $\J^{(i)}$ on top of each other, for all $i \in \mathcal{B}$. In other words, if $\mathcal{B} = \{i_1,i_2,\dots, i_{|\mathcal{B}|}\}$, the join $\J = \T_1 \Join \T_2$ is given by 
$ \J = \left[
 (\J^{(i_1)})^T ,
 (\J^{(i_2)})^T,
 \dots ,
 (\J^{(i_{|\mathcal{B}|})})^T \right]^T$.
 
Figure \ref{fig:example_two_tables} illustrates an example of blocks in a two table join. In this example column $f_2$ is the column that we are joining the two tables on, and there are two values for $f_2$ that are present in both tables, namely the values$\{1,2\}$. Thus $\mathcal{B} = \{B_1,B_2\}$, where $B_1 =1$ and $B_2 = 2$. In other words, Block $B_1$ is the block for value $1$, and its size is $s_1 = 4$, and similarly $B_2$ has size $s_2 = 4$. Figure \ref{fig:example_two_tables} illustrates how the join $\mathcal{J}$ can be written as stacking together the block-matrices $\J^{(1)}$ and $\J^{(2)}$. Figure \ref{fig:example:partial_tables} shows the tables $T_i^{(j)}$ for different values of $i$ and $j$ in the same example.

\begin{figure}[t!]
     \centering
     \includegraphics[scale=0.65]{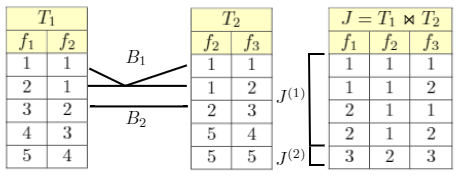}
     \caption{Example of two table join blocks}
     \label{fig:example_two_tables}
 \end{figure}
 
\begin{figure}[t!]
\centering
\begin{tabular}{|c|c|}
\hline
\rowcolor[HTML]{FFFFC7} 
\multicolumn{2}{|c|}{\cellcolor[HTML]{FFFFC7}$T_1^{(1)}$} \\ \hline
\rowcolor[HTML]{FFFFC7} 
$f_1$                    & $f_2$                    \\ \hline
1                        & 1                        \\ \hline
2                        & 1                        \\ \hline
\end{tabular}
\quad
\begin{tabular}{|c|c|}
\hline
\rowcolor[HTML]{FFFFC7} 
\multicolumn{2}{|c|}{\cellcolor[HTML]{FFFFC7}$T_1^{(2)}$} \\ \hline
\rowcolor[HTML]{FFFFC7} 
$f_1$                    & $f_2$                    \\ \hline
3                        & 2                        \\ \hline
\end{tabular}
\quad
\begin{tabular}{|c|c|}
\hline
\rowcolor[HTML]{FFFFC7} 
\multicolumn{2}{|c|}{\cellcolor[HTML]{FFFFC7}$T_2^{(1)}$} \\ \hline
\rowcolor[HTML]{FFFFC7} 
$f_2$                    & $f_3$                    \\ \hline
1                        & 1                        \\ \hline
1                        & 2                        \\ \hline
\end{tabular}
\quad
\begin{tabular}{|c|c|}
\hline
\rowcolor[HTML]{FFFFC7} 
\multicolumn{2}{|c|}{\cellcolor[HTML]{FFFFC7}$T_2^{(2)}$} \\ \hline
\rowcolor[HTML]{FFFFC7} 
$f_2$                    & $f_3$                    \\ \hline
2                        & 3                        \\ \hline
\end{tabular}
\caption{Examples of $T_i^{(j)}$}
\label{fig:example:partial_tables}
\end{figure}

Finally, for any subset $U \subseteq [N]$, let $\J_{U}$ denote the set of rows of $\J$ belonging to $U$. If $L$ is a set of blocks of $J$, meaning $L \subseteq \mathcal{B}(\J)$, then let $\J_L$ denote the set of rows of $\J$ belonging to some block $i \in L$ (recall that a row $r$ ``belongs'' to a block $i \in L \subseteq \mathcal{B}$ if $r|_Q = i$). A table of notation summarizing the above can be found in Figure \ref{tab:TableOfNotation}. 

\begin{table}[h]\caption{Table of Notation}
\begin{center}
\begin{tabular}{r c p{6cm} }
\toprule
$T_{i}$ & $\triangleq$ & a $n_i \times d_i$ sized table\\
$\J$ & $\triangleq$ & join of all tables, i.e., $\J = T_1 \Join T_2 \Join \dots \Join T_m$ \\
$C$ & $\triangleq$ & set of columns of $\J$\\
$C_i$ & $\triangleq$ & set of columns of $T_i$\\
$\hat{C}_i$ & $\triangleq$ & Partition of $C$ such that $\hat{C}_i \subseteq C_i$ for $i \in [m]$. \\
$T|_U$ & $\triangleq$ & projection of $T$ onto columns $U \cap U'$, where $U'$ are the columns of $T$\\
$\hat{T}_{i}$ & $\triangleq$ &  Result of padding $T_i|_{\hat{C}_i}$ with zero-valued columns in $C \setminus \hat{C}_i$\\
$\mathcal{B}$ & $\triangleq$ & set of \textit{blocks}, i.e., distinct rows of $\J|_Q$\\
$s_{(i)}$ & $\triangleq$ & size of block $i$, i.e., number of rows $r$ of $\J$ with $r|_q = i \in \mathcal{B}$\\  
$T_j^{(i)}$ & $\triangleq$ & subset of rows $r$ in $\T_j$ such that $r|_Q =i|_{C_j}$ \\
$\hat{T}_j^{(i)}$ & $\triangleq$ & subset of rows $r$ in $\hat{T}_j$ such that $r|_Q =i|_{\hat{C}_j}$ \\
$\J_j^{(i)}$ & $\triangleq$ & subset of rows $r$ in $\J$ such that $r|_Q =i$ \\
\bottomrule
\end{tabular}
\end{center}
\label{tab:TableOfNotation}
\end{table}

\subsection{Background for General Database Joins}\label{appendix:database:background}

We begin with some additional definitions relating to database joins.
\begin{definition}[Join Hypergraph]
Given a join $\J=\T_1 \Join \dots \Join \T_m$, the hypergraph associated with the join is $H=(V,E)$ where $V$ is the set of vertices and for every column $c_i$ in $J$, there is a vertex $v_i$ in $V$, and for every table $\T_i$ there is a hyper-edge $e_i$ in $E$ that has the vertices associated with the columns of $\T_i$.
\end{definition}

\begin{definition}[Acyclic Join]
We call a join query \textbf{acyclic} if one can repeatedly apply one of the two operations and convert the query to an empty query:
\begin{enumerate}
    \item remove a column that is only in one table.
    \item remove a table for which its columns are fully contained in another table.
\end{enumerate}
\end{definition}

\begin{definition}[Hypergraph Tree Decomposition]
Let $H=(V,E)$ be a hypergraph and $T=(V',E')$ be a tree on a set of vertices, where each vertex $v' \in V'$ is called the \textbf{bag} of $v'$, denoted by $b(v')$, and corresponds to a subset of vertices of $V$. Then $T$ is called a \textbf{hypergraph tree decomposition} of $H$ if the following holds:
\begin{enumerate}
    \item for each hyperedge $e \in E$, there exists $v' \in V'$ such that $e \subseteq b(v')$, and 
    \item for each vertex $v \in V$, the set of vertices in $V'$ that have $v$ in their bag is non-empty and they form a connected subtree of $T$.
\end{enumerate}
\end{definition}

\begin{definition}
Let $H=(V,E)$ be a join hypergraph and $T=(V',E')$ be its tree decomposition. For each $v' \in V'$, let $X^{v'} = \allowbreak (x_1^{v'},x_2^{v'}, \allowbreak \dots, x_m^{v'})$ be the optimal solution to the following linear program: $\texttt{min}  \sum_{j=1}^{t} x_{j}$,  $\text{subject to }  \sum_{j:v_{i} \in e_{j} }x_{j} \geq 1, \forall v_i \in b(v')$ where $0 \leq x_{j} \leq 1$ for each $j \in [t]$.
%\begin{alignat*}{3}
% & \text{minimize} & \sum_{j=1}^{t} x_{j}& \\
%& \text{subject to} \quad& \sum_{\mathclap{{j:v_{i} \in e_{j}}}}x_{j}& \geq 1, & \forall v_i &\in b(v')\\
 %                && 0 \leq x_{j}& \leq 1,\quad & j &=1 ,..., t
%\end{alignat*}
Then the \textbf{width of $v'$} is $\sum_i x^{v'}_i$, denoted by $w(v')$, and the \textbf{fractional width of $T$} is $\max_{v' \in V'} w(v')$.
\end{definition}

\begin{definition}[fhtw]
Given a join hypergraph $H=(V,E)$, the \textbf{fractional hypertree width of $H$}, denoted by fhtw, is the minimum fractional width of its hypergraph tree decomposition. Here the minimum is taken over all possible hypertree decompositions.
\end{definition}

\begin{observation}
The fractional hypertree width of an \allowbreak acyclic join is $1$, and each bag in its hypergraph tree decomposition is a subset of the columns in some input table.
\end{observation}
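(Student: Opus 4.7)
The plan is to prove the observation by constructing an explicit tree decomposition from the acyclicity-witnessing reduction, and then arguing that each bag has fractional edge cover equal to $1$.

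First, I would exploit the acyclic elimination order. Since the join is acyclic, there exists a sequence of operations, each either (i) removing a column contained in only one table, or (ii) removing a table whose columns are contained in another, that reduces the hypergraph to the empty one. I would record the table-removal steps in order: let $\T_{j_1}, \T_{j_2}, \ldots, \T_{j_m}$ be an ordering in which tables get fully eliminated (column-only removals simply shrink a table before it is deleted). When a table $\T_{j_k}$ gets absorbed at step $k$, there is a ``host'' table $\T_{\pi(k)}$, still present at that moment, whose current column set contains the remaining columns of $\T_{j_k}$. This defines a parent map $\pi$ on the tables that forms a rooted forest (which I can make a tree by adding a dummy root if necessary).

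Next, I would build the tree decomposition $T=(V',E')$ by creating one bag $b(v'_k)$ per table $\T_{j_k}$, setting $b(v'_k) := C_{j_k}$ (the full column set of that table), and adding an edge from $v'_k$ to $v'_{\pi(k)}$. By construction, every hyperedge $e_i$ (i.e., the column set $C_i$ of table $\T_i$) coincides with the bag created for $\T_i$, so condition (1) of the tree-decomposition definition is immediate. For condition (2), I need to check that for each column $v$, the set of bags containing $v$ is a connected subtree. The key claim here is that the elimination order respects containment: if $v \in C_{j_k}$ and $v \in C_{j_\ell}$ with $k < \ell$, then as we trace the ancestor chain $j_k \to \pi(k) \to \pi(\pi(k)) \to \cdots$ in the forest, every intermediate table also contains $v$. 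This is exactly the property that makes the GYO-style reduction work: a table $\T_{j_k}$ is absorbed only into a host that currently contains all of its remaining shared columns, so any column shared with later tables propagates up the tree. I would verify this by induction on $k$.

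Having the tree decomposition in hand, the second part of the claim is immediate: each bag $b(v'_k) = C_{j_k}$ is by definition the column set of the input table $\T_{j_k}$, so each bag is a subset (in fact equal) to an input table's columns. Then for the fractional width, I turn to the LP: for bag $b(v'_k)$, setting $x_{j_k} = 1$ and $x_i = 0$ for $i \neq j_k$ is feasible, because $e_{j_k} = C_{j_k} = b(v'_k)$, so every $v_i \in b(v'_k)$ satisfies $\sum_{j : v_i \in e_j} x_j \geq x_{j_k} = 1$. This shows $w(v'_k) \leq 1$. The matching lower bound $w(v'_k) \geq 1$ follows since the bag is nonempty and every constraint forces the sum $\sum_j x_j \geq 1$ by picking any vertex. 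Thus the fractional width of this decomposition is exactly $1$, giving $\mathrm{fhtw} \leq 1$; and since any nonempty hypergraph has fractional edge cover at least $1$ in every bag, $\mathrm{fhtw} \geq 1$.

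The main obstacle I expect is the connectivity condition (2) of the tree decomposition. Proving it correctly requires being careful about the interaction between the two reduction operations: removing a column from a single-occurrence table can change which table qualifies as a ``host'' at the next step, and one must track remaining columns rather than original columns. I would handle this by defining, for each time step, the ``current'' column set of each surviving table and showing that the parent-of-host relationship preserves the invariant ``a column once shared between an absorbed child and its host remains in all ancestors until it is deleted.'' This invariant directly yields the subtree connectivity needed to complete the proof.
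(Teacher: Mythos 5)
The paper states this Observation without proof, treating it as a classical fact from database theory, so there is no in-paper argument to compare against. Your construction is the standard one (essentially the GYO reduction producing a join tree), and it is correct: one bag per table equal to that table's column set, parent edges given by the absorption step, width $1$ per bag because each bag coincides with a single hyperedge, and the matching lower bound from any single vertex's covering constraint. The only delicate point is the one you flag, the running-intersection/connectivity condition, and your stated invariant is the right one. To nail it down: for a column $v$, let $S_v$ be the set of tables whose original schema contains $v$. A column can be deleted by operation (i) only when it lies in exactly one surviving table, so every table in $S_v$ except the \emph{last-eliminated} one still has $v$ in its current column set at the moment it is absorbed, and therefore hands $v$ to its host; hence every node of $S_v$ other than that last one has its parent in $S_v$, and since indices strictly increase along parent chains, all of $S_v$ is connected through that unique apex. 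This is slightly sharper than your phrasing ``every intermediate table on the ancestor chain contains $v$,'' which is only true up to the apex, not all the way to the root; with that adjustment the induction closes. One further small remark: when you add a dummy root to turn the forest into a tree, connectivity of each $S_v$ is unaffected because any two tables sharing $v$ necessarily lie in the same forest component (both of their absorption chains lead to the apex of $S_v$).
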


\begin{definition}[FAQ]
Let $J=T_1\Join\dots \Join T_m$ be a join of $m$ input tables. For each table $T_i$, let $F_i: T_i \to S$ be a function mapping the rows of $T_i$ to a set $S$. For every row $X\in J$, let $X_i$ be the projection of $X$ onto the columns of $T_i$. Then the following is a SumProd Functional Aggregation Query (FAQ):
\begin{align}
    \bigoplus_{X \in J} \bigotimes_i F_i(X_i)
\end{align}
where $(S,\oplus,\otimes)$ is a commutative semiring.
\end{definition}

\begin{theorem}[\cite{abo2016faq}]
\label{thm:inside_out}
Inside-out is an algorithm which computes the result of a FAQ in time $O(T md^2n^{\text{fhtw}}\log(n))$ where $m$ is the number of tables, $d$ is the number of columns in $J$, $n$ is the maximum number of rows in any input table, $T$ is the time to compute the operators $\oplus$ and $\otimes$ on a pair of operands, and $fhtw$ is the fractional hypertree width of the query.
\end{theorem}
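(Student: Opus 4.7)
The plan is to establish the bound via variable elimination over the hypertree decomposition, combined with a worst-case optimal join algorithm at each bag. Take a hypertree decomposition $T=(V',E')$ of the join hypergraph attaining fractional width fhtw, and root it at an arbitrary node. For each bag $b(v')$, the LP that defines $w(v')$ is exactly the fractional edge cover LP on the sub-hypergraph induced by $b(v')$, so by the Atserias--Grohe--Marx (AGM) inequality the number of tuples in the sub-join on the columns of $b(v')$ is at most $\prod_i n_i^{x_i^{v'}} \leq n^{w(v')} \leq n^{\text{fhtw}}$. A worst-case optimal join algorithm (e.g., Generic-Join or LeapFrog TrieJoin) can then materialize this relation in time $\tilde{O}(n^{\text{fhtw}})$ after sorting each input table in $\tilde{O}(n)$ time.

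Next I would perform bottom-up message passing on $T$, a semiring generalization of junction-tree belief propagation. Assign each input table $\T_i$ to exactly one bag whose bag-set contains $C_i$; this is possible because the decomposition places every hyperedge inside some bag. For each bag $v'$, set the local function
\[
G^{v'}(x_{b(v')}) \;=\; \bigotimes_{i \text{ assigned to } v'} F_i(X_i),
\]
and for every non-root $v'$ with parent $p$, compute the outgoing message by multiplying in incoming messages from children and marginalizing out the variables unique to $v'$:
\[
M^{v'}(x_{b(v') \cap b(p)}) \;=\; \bigoplus_{x_{b(v') \setminus b(p)}} G^{v'}(x_{b(v')}) \otimes \bigotimes_{c \in \text{children}(v')} M^{c}(x_{b(v') \cap b(c)}).
\]
Processing bags from leaves to root, the value obtained at the root equals $\bigoplus_{X \in J} \bigotimes_i F_i(X_i)$ by the Generalized Distributive Law of Aji--McEliece. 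Correctness relies on the running-intersection property of $T$: every variable appears in a connected subtree, so any eliminated variable has had all of its occurrences folded into $G^{v'}$ or into an incoming message before it is summed out.

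The per-bag cost is $\tilde{O}(T \cdot m \cdot d \cdot n^{\text{fhtw}})$: AGM bounds the bag's intermediate relation by $n^{\text{fhtw}}$ tuples; a factor of $m$ accounts for the semiring products over input tables and incoming messages at a bag; a factor of $d$ arises from manipulating the $d$-wide row labels during projection, hashing, and merge steps; and the $\log n$ from sorting and indexing is absorbed into $\tilde{O}$. Summing over the $O(d)$ bags of a reduced tree decomposition yields the stated $\tilde{O}(T \cdot m \cdot d^2 \cdot n^{\text{fhtw}} \log n)$ total time. The main obstacle is the correctness of the message-passing recursion, which hinges on the distributivity axiom $a \otimes (b \oplus c) = (a \otimes b) \oplus (a \otimes c)$ of the commutative semiring together with the running-intersection property; once these are in hand, the runtime bound follows immediately from AGM and the standard analysis of worst-case optimal joins.
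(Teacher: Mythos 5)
The paper does not prove this theorem; it imports it verbatim from \cite{abo2016faq}, so there is no internal proof to compare against. Your sketch is a correct reconstruction of the standard argument from that reference: bounding each bag's intermediate relation by $n^{\text{fhtw}}$ via the fractional-edge-cover LP and the AGM inequality, materializing it with a worst-case optimal join, and running semiring message passing (the generalized distributive law) whose correctness rests on distributivity plus the running-intersection property — which is essentially how InsideOut is analyzed in the cited work.
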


In \cite{Indatabase:Linear:Regression}, given a join $J=T_1 \Join \dots \Join T_m$, it is shown that the entries of $J^T J$ can be expressed as a FAQ and computed using the inside-out algorithm.

\subsection{Linear Algebra}
We use boldface font, e.g., $\AA,\B,\J$, throughout to denote matrices.
Given $\AA \in \R^{n \times d}$ with rank $r$, we write $\AA = \U \mathbf{\Sigma} \V^T$ to denote the singular value decomposition (SVD) of $\AA$, where $\U \in \R^{n \times r}, \V \in \R^{d \times r}$, and $\mathbf{\Sigma} \in \R^{ r \times r}$ is a diagonal matrix containing the non-zero singular values of $\AA$. For $i \in [d]$, we write $\sigma_i(\AA)$ to denote the $i$-th (possibly zero-valued) singular value of $\AA$, so that $\sigma_1(\AA) \geq \sigma_2(\AA) \geq \dots \geq \sigma_d(\AA)$. We also use $\sigma_{\max(\AA)}$ and $\sigma_{\min}(\AA)$ to denote the maximum and minimum singular values of $\AA$ respectively, and let $\kappa(\AA) = \frac{\sigma_{\max}(\AA)}{\sigma_{\min}(\AA)}$ denote the condition number of $\AA$. Let $\AA^+$ denote the Moore-Penrose pseudoinverse of $\AA$, namely $\AA^+ =  \V \mathbf{\Sigma}^{-1} \U^T$. Let $\|\AA\|_F = (\sum_{i,j} \AA_{i,j}^2 )^{1/2}$ denote the Frobenius norm of $\AA$, and $\|\AA\|_2 = \sigma_{\max}(\AA)$ the spectral norm. We write $\mathbb{I}_n \in \R^{n \times n}$ to denote the $n$-dimensional identity matrix. For a matrix $\AA \in \R^{n \times d}$, we write $\nnz(\AA)$ to denote the number of non-zero entries of $\AA$. We can assume that each row of the table $\T_j$ is non-zero, since otherwise the row can be removed, and thus $\nnz(\T_j) \geq n_i$. Given $\AA \in \R^{n \times d}$, we write $\AA_{i,*} \in \R^{1 \times d}$ to denote the $i$-th row (vector) of $\AA$, and $\AA_{*,i} \in \R^{n \times 1}$ to denote the $i$-th column (vector) of $\AA$.

For values $a,b \in \R$ and $\eps >0$, we write $a = (1 \pm \eps)b$ to denote the containment $(1-\eps)b \leq a \leq (1+\eps) b$. For $n \in \mathbb{Z}_+$, let $[n] = \{1,2,\dots,n\}$. Throughout, we will use  $\tilde{O}(\cdot)$ notation to omit poly$(\log N)$ factors. 

\begin{definition}[Statistical Dimension]
For a matrix $\AA \in \R^{n \times d}$, and a non-negative scalar $\lambda$, the $\lambda$-statistical dimension is defined to be $d_\lambda = \sum_i \frac{\lambda_i(\AA^T \AA)}{\lambda_i(\AA^T \AA) + \lambda}$, where $\lambda_i(\AA^T \AA)$ is the $i{\text{-th}}$ eigenvalue of $\AA^T \AA$.
\end{definition}

\begin{definition}[Subspace Embedding]
For an $\eps \geq 0$, we say that $\tilde{\AA} \in \R^{m \times d}$ is an $\eps$-subspace embedding for $\AA \in \R^{n \times d}$ if for all $x \in \R^d$ we have
\[  (1 - \eps)\|\AA x\|_2 \leq \|\tilde{\AA} x\|_2 \leq (1+\eps) \|\AA x\|_2. \]    
\end{definition}
Note that if $\tilde{\AA} \in \R^{m \times d}$ is an $\eps$-subspace embedding for $\AA \in \R^{n \times d}$, in particular this implies that $\sigma_i(\AA) = (1 \pm \eps) \sigma_i(\tilde{\AA})$ for all $i \in [d]$.

\subsubsection{Leverage Scores} \label{app:leverage}
 The leverage score of the $i$-th row $\AA_{i,*}$ of $\AA \in \R^{n \times d}$ is defined to be $\tau_i(\AA) =     \AA_{i,*} (\AA^T \AA)^+ \AA_{i,*}^T$. 
Let $\tau(\AA) \in \R^{n}$ be the vector such that $(\tau(\AA))_i = \tau_i(\AA)$. Then $\tau(\AA)$ is the diagonal of $\AA  (\AA^T \AA)^+ \AA^T$, which is a projection matrix. Thus $\tau_i(\AA) \leq 1$ for all $i \in [n]$. It is also easy to see that $\sum_{i=1}^n \tau_i(\AA) \leq d$ \cite{cohen2015uniform}. Our algorithm will utilize the \textit{generalized leverage scores}. Given matrices $\AA \in \R^{n \times d}$ and $\B \in \R^{n_1 \times d}$, the generalized leverage scores of $\AA$ with respect to $\B$ are defined as 
\[ \tau_i^{\B}( \AA)  = \begin{cases}
\AA_{i,*} (\B^T \B)^+ \AA_{i,*}^T  & \text{if } \AA_{i,*} \bot \text{ker}(\B) \\
1 & \text{otherwise}  \\
\end{cases}\] 
We remark that in the case were $\AA_{i,*}$ has a component in the kernel (null space) of $B$, denoted by $\text{ker}(\B)$, $\tau_i^{\B}( \AA)$ is defined to be $\infty$ in \cite{cohen2015uniform}. However, as stated in that paper, this definition was simply for notational convenience, and the results would equivalently hold setting $\tau_i^{\B}( \AA) =1$ in this case.  
 Note that for a matrix $\AA \in \R^{n \times d}$ with SVD $\B = \U \mathbf{\Sigma} \V^T$, we have $(\B^T \B)^+ = \V \Sigma^{-2} \V^T$. Thus, for any $x \in \R^d$, we have $x^T (\B^T \B)^+ x = \|x^T \V \Sigma^{-1}\|_2^2$, and in particular $\tau_i^{\B}( \AA) = \|\AA_{i,*}^T \V \Sigma^{-1}\|_2^2$ if  $\AA_{i,*} \bot \text{ker}(\B)$, where $\AA_{i,*} \bot \text{ker}(\B)$ means $\AA_{i,*}$ is perpendicular to the kernel of $\B$.
 
\begin{proposition}\label{prop:tausubspace}
If $\B' \in \R^{n_1' \times d}$ is an $\eps$-subspace embedding for $\B \in \R^{n_1 \times d}$, and $\AA \in \R^{n \times d}$ is any matrix, then $\tau_i^{\B'}(\AA) = (1 \pm O(\eps)) \tau_i^{\B}(\AA)$
\end{proposition}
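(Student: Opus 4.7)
The plan is to prove the two cases of the definition of $\tau_i^\B(\AA)$ separately, and observe that the subspace embedding property ensures the cases coincide. The main tool will be the operator antimonotonicity of the pseudoinverse on a common range.

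First I would unpack the subspace embedding hypothesis as the quadratic form statement
\[ (1-\eps)^2 \, \B^T\B \;\preceq\; (\B')^T\B' \;\preceq\; (1+\eps)^2 \, \B^T\B, \]
which follows directly from $\|\B'x\|_2 = (1\pm\eps)\|\B x\|_2$ for every $x$. In particular this immediately implies $\ker(\B) = \ker((\B')^T\B') = \ker(\B^T\B) = \ker(\B')$, so the two matrices $\B^T\B$ and $(\B')^T\B'$ have the same null space, and equivalently the same row/range space $V$.

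Next, the key algebraic step. On the subspace $V$ (the orthogonal complement of the common kernel), both $\B^T\B$ and $(\B')^T\B'$ restrict to genuinely positive-definite operators, and their pseudoinverses restricted to $V$ coincide with the inverses of these restrictions. Invoking operator antimonotonicity of the inverse for PD matrices, the displayed spectral inequality above becomes
\[ (1+\eps)^{-2}\,(\B^T\B)^+ \;\preceq\; \bigl((\B')^T\B'\bigr)^+ \;\preceq\; (1-\eps)^{-2}\,(\B^T\B)^+, \]
where both pseudoinverses vanish on the common kernel and so the inequality holds as quadratic forms on all of $\R^d$. Using $(1\pm\eps)^{-2} = 1 \pm O(\eps)$ for small $\eps$, this reads as $\bigl((\B')^T\B'\bigr)^+ = (1\pm O(\eps))(\B^T\B)^+$ in the PSD sense.

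Finally I split on whether $\AA_{i,*} \perp \ker(\B)$. In that case $\AA_{i,*} \perp \ker(\B')$ as well (kernels agree), so both leverage scores are defined by the quadratic form, and plugging $\AA_{i,*}^T$ into the spectral inequality above gives
\[ \tau_i^{\B'}(\AA) \;=\; \AA_{i,*}\bigl((\B')^T\B'\bigr)^+ \AA_{i,*}^T \;=\; (1 \pm O(\eps))\,\AA_{i,*}(\B^T\B)^+\AA_{i,*}^T \;=\; (1\pm O(\eps))\tau_i^{\B}(\AA). \]
In the other case, $\AA_{i,*}$ has a component in $\ker(\B) = \ker(\B')$, so by the definition $\tau_i^{\B}(\AA) = \tau_i^{\B'}(\AA) = 1$ and the conclusion is trivial. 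The only subtle step is the pseudoinverse inequality: one has to remember that operator monotonicity of inversion requires positive definiteness, and justify the restriction-to-common-range argument — this is where the hypothesis that $\B'$ is a subspace embedding (not merely an approximation in some weaker norm) is used, because it forces equal kernels.
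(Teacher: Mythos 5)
Your proof is correct and follows essentially the same route as the paper's: both pass from the subspace-embedding guarantee to a $(1\pm\eps)^{2}$ Loewner-order approximation of the Gram matrices and hence a $(1\pm\eps)^{-2}$ approximation of the pseudoinverse quadratic forms. You are somewhat more careful than the paper's one-line argument, in that you explicitly justify the inversion step via equality of kernels and operator antimonotonicity on the common range, and you separately dispose of the case where $\AA_{i,*}$ has a component in $\ker(\B)$ (where both scores equal $1$ by definition) --- details the paper leaves implicit.
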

\begin{proof}
If $\B' $ is an $\eps$-subspace embedding for $\B$, then the spectrum of $(\B' (\B')^T)^+$ is a $(1 \pm \eps)^{-2}$ approximation to the spectrum of  $(\B \B^T)^+$, so $x^T(\B' (\B')^T)^+x = (1 \pm \eps)^{-2} x^T(\B \B^T)^+x$ for all $x \in \R^d$, which completes the proof. 
\end{proof}

Our algorithm will employ a mixture of several known oblivious subspace embeddings as tools to construct our overall database join SE. The first result we will need is an improved variant of \textit{Tensor-Sketch}, which is an SE that can be applied quickly to tensor products of matrices. %We state the result for the case of ridge regression; however, 

%We will also need the improved variant of tensor sketch, which improves the additive $\poly(d)$ dependency from $d^3$ to $d^2$. 

\begin{lemma}[Fast Tensor-Sketch,  Theorem \allowbreak 3  of \cite{ahle2020oblivious}]\label{lem:fastTenSketch}
	Fix any matrices $\AA_1, \AA_2, \dots, \AA_m$, where $\AA_i \in \R^{n_i \times d_i} $, fix $\eps >0$ and $\lambda \geq 0$. Let $n = n_1 \cdots n_m$ and $d = d_1 \cdots d_m$. Let $\AA = \AA_1 \otimes \AA_2 \otimes \cdots \otimes \AA_m$ have statistical dimension $d_\lambda$. Then there is an oblivious randomized sketching algorithm which produces a matrix  $\S \in \R^{k \times n}$, where $k = O(d_\lambda m^4 / \eps^2)$, such that with probability $1-1/\poly(n)$, we have that for all $x \in \R^d$
	\[ \|\S \AA x\|_2^2 + \lambda \|x\|_2^2 = (1 \pm \eps)(\|\AA x\|_2^2  +  \lambda \|x\|_2^2).\]
	Note for the case of $\lambda = 0$, this implies that  $\S\AA$ is an $\eps$-subspace embedding for $\AA$. Moreover, $\S\AA$ can be computed in time $\tilde{O}( \sum_{i=1}^m \nnz(\AA_i)/\eps^2\cdot m^5 + kdm)$.\footnote{Theorem 3 of \cite{ahle2020oblivious} is written to be applied to the special case of the polynomial kernel, where $A_1 = A_2 = \dots = A_m$. However, the algorithm itself does not use this fact, nor does it require the factors in the tensor product to be non-distinct.} 
\end{lemma}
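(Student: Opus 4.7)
The plan is to build $\S$ recursively along a balanced binary tree whose leaves correspond to the factors $\AA_1,\dots,\AA_m$. Associate to each node $v$ the Kronecker product of the matrices at the leaves below $v$, so the root corresponds to $\AA$ itself and every internal node is the product of exactly its two children. The base primitive is a two-factor TensorSketch operator $\T$ that, given vectors $u,v$, computes $\T(u\otimes v)$ as the circular convolution $\mathrm{CountSketch}_1(u)\star\mathrm{CountSketch}_2(v)$ in near-linear time via FFT. First I would sketch each leaf by applying a sparse oblivious SE (e.g.\ an OSNAP sketch) directly to $\AA_i$ in $\tilde{O}(\nnz(\AA_i)/\eps^2)$ time, and then combine the sketched leaves two-at-a-time up the tree using $\T$.

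To control the target dimension $k$ in terms of the ridge statistical dimension $d_\lambda$ rather than the ambient rank $d$, I would invoke the ridge-leverage-score / approximate matrix product machinery: at every node the sketch we need is ``regularization aware,'' i.e.\ it must preserve $\|\cdot\|_2^2+\lambda\|\cdot\|_2^2$ on the column span of the corresponding Kronecker product. The standard analysis shows that a target dimension $O(d_\lambda/\eps^2)$ is sufficient for this guarantee, independent of the ambient rank. Crucially, the statistical dimension of any sub-Kronecker-product is dominated by $d_\lambda$ of the full $\AA$ (tensoring with another block only adds directions whose squared singular values scale multiplicatively), so every intermediate level requires only $O(d_\lambda/\eps^2)$ target dimension rather than $O(d/\eps^2)$. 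The regularizer $\lambda\|x\|_2^2$ itself is handled by the usual trick of appending $\sqrt{\lambda}\,\mathbb{I}_d$ to the bottom of $\AA$ before sketching, so that the plain subspace embedding guarantee immediately yields the regularized guarantee.

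Because the tree has depth $O(\log m)$ and the sketches compose multiplicatively, I would rescale $\eps \leftarrow \eps/(c\log m)$ at each level and take a union bound over the $O(m)$ internal nodes. Bookkeeping the per-level distortion, applying the two-factor TensorSketch SE guarantee at each merge, and tracking how the statistical dimension propagates through the recursion together give the final $k=O(d_\lambda m^4/\eps^2)$: one factor of $m^2$ enters from the $\log^2 m$ appearing after the $\eps$-rescaling (absorbed into $\tilde{O}$ as $m^2$ after the union bound), and the remaining $m^2$ from the fact that the worst-case SE dimension at a merge scales with the product of the two children's effective ranks, each of which is already $O(d_\lambda \cdot\poly(m))$. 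For the running time, the leaves contribute $\tilde{O}(\sum_i \nnz(\AA_i)/\eps^2)\cdot\poly(m)$ from sparse sketching, each of the $O(\log m)$ internal FFT convolutions contributes $\tilde{O}(k)$ per column with at most $d$ columns alive at the root, and the remaining $\poly(m)$ overhead from the rescaled $\eps$ gives the claimed $\tilde{O}(\sum_i \nnz(\AA_i)/\eps^2\cdot m^5 + kdm)$ bound.

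The main obstacle is precisely controlling error accumulation together with the statistical dimension along the recursion: a naive argument that simply composes TensorSketches gives $k=2^{\Omega(m)}$, which is the classical weakness of Pagh's TensorSketch. The crux is a ``level-by-level rebasing'' argument that views each internal sketch as being applied to a \emph{fresh} low-rank object, namely the already-sketched children, so its required sketching dimension depends only on $d_\lambda$ of that object rather than on any cumulative Kronecker blowup of the ambient rank. Once this rebasing step is justified via Proposition~\ref{prop:tausubspace} and a union bound over the tree, the stated $\poly(m)$ dependence and the running time bound follow by routine accounting.
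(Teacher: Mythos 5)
The paper does not prove this lemma at all: it is imported verbatim as Theorem 3 of \cite{ahle2020oblivious}, and the only content the authors add is the footnote observing that the construction and analysis there never use the assumption $\AA_1=\cdots=\AA_m$, so the statement extends to distinct factors. A ``proof'' in the context of this paper therefore amounts to checking that the cited theorem's hypotheses are met and that the identical-factor assumption is inessential; your proposal instead attempts to reprove the external theorem from scratch. Your high-level architecture does match the cited construction (balanced binary tree over the factors, OSNAP at the leaves, an FFT-implementable tensor sketch such as TensorSRHT at each internal merge, and the observation that the ridge guarantee is a subspace-embedding guarantee for $\AA$ augmented by $\sqrt{\lambda}\,\mathbb{I}$), and your diagnosis of why naive TensorSketch composition gives $k=2^{\Omega(m)}$ is correct.

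However, the crux of your argument has a genuine gap. You claim that the statistical dimension of every sub-Kronecker-product is dominated by $d_\lambda$ of the full product because ``tensoring with another block only adds directions whose squared singular values scale multiplicatively.'' This is false in general: the singular values of $\AA_1\otimes\AA_2$ are the products $\sigma_i(\AA_1)\sigma_j(\AA_2)$, and if, say, $\AA_2$ has a single singular value $t<1$, then $d_\lambda(\AA_1\otimes\AA_2)=\sum_i \sigma_i^2 t^2/(\sigma_i^2 t^2+\lambda) < d_\lambda(\AA_1)$, i.e.\ the sub-product can have \emph{strictly larger} statistical dimension than the full product. Consequently your ``level-by-level rebasing,'' which requires each internal sketch to be a ridge subspace embedding for its own sub-product with target dimension $O(d_\lambda/\eps^2)$ measured against the full product's $d_\lambda$, does not go through as stated. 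The actual analysis in \cite{ahle2020oblivious} avoids this entirely: the internal sketches are only required to satisfy a JL-second-moment property, the spectral ridge guarantee is established solely for the fully composed sketch against the full matrix $\AA$ via a composition lemma over the $O(\log m)$ levels, and no intermediate ridge subspace embedding is ever invoked. Your accounting of the $m^4$ factor and the $m^5$ in the runtime is likewise asserted rather than derived. If you want a self-contained proof you would need to reproduce that moment-composition argument; otherwise the correct move here is the paper's, namely to cite the theorem and verify that nothing in its proof uses the non-distinctness of the factors.
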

For the special case of $\lambda=0$ in Lemma \ref{lem:fastTenSketch}, the statistical dimension is $d$, and Tensor-Sketch is just a standard SE. 
%We will also need the \textit{OSNAP Transform} of \cite{nelson2013osnap}, and the Count-Sketch of \cite{10.1145/2488608.2488620}.%, which produces a subspace embedding with a number of rows roughly linear in the dimension of the preserved subspace. %Moreover, the OSNAP transform can be computed in time linear in the sparsity of the input matrix.
\begin{lemma}[OSNAP Transform \cite{nelson2013osnap}]\label{lem:OSNAP}
Given any $\AA \allowbreak \in \R^{N \times d}$, there is a randomized oblivious sketching algorithm that produces a matrix $\W \in \R^{t \times N}$ with $t = \tilde{O}(d/\eps^2)$, such that $\W\AA$ can be computed in time $\tilde{O}(\nnz(\AA)/\eps^2)$, and such that $\W \AA$ is an $\eps$-subspace embedding for $\AA$ with probability at least $99/100$. Moreover, each column of $\W$ has at most $\tilde{O}(1)$ non-zero entries.
\end{lemma}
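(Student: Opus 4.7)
The plan is to give the standard OSNAP construction and sketch why it satisfies the three conclusions (dimension bound, running time, subspace embedding property). First I would define $\W \in \R^{t\times N}$ as follows: choose $t = \tilde{O}(d/\eps^2)$ and a sparsity parameter $s = \tilde{O}(1)$, and for each column $j \in [N]$, independently pick $s$ (approximately) uniformly random hash values $h_1(j),\ldots,h_s(j) \in [t]$ and $s$ independent random signs $\sigma_1(j),\ldots,\sigma_s(j) \in \{\pm 1\}$, setting $\W_{h_i(j),\,j} = \sigma_i(j)/\sqrt{s}$ and all other entries of column $j$ to zero. This immediately gives the bound of $s = \tilde{O}(1)$ non-zeros per column, and furthermore lets us compute $\W\AA$ in time $\tilde{O}(s \cdot \nnz(\AA)) = \tilde{O}(\nnz(\AA)/\eps^2)$, since each non-zero entry $\AA_{i,j}$ contributes to at most $s$ entries of $\W\AA$.

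Next, I would reduce the subspace-embedding claim to a spectral-norm estimate on a fixed $N\times d$ matrix. Let $\U \in \R^{N\times d}$ be any matrix whose columns form an orthonormal basis for the column span of $\AA$; then for every $y$ in the column span of $\AA$ we can write $y = \U z$ with $\|y\|_2 = \|z\|_2$, so proving $\|\W y\|_2 = (1\pm\eps)\|y\|_2$ simultaneously for all such $y$ reduces to showing
\[
\bigl\|\U^T \W^T \W \U - \mathbb{I}_d\bigr\|_2 \le \eps
\]
with probability $\ge 99/100$. Writing $\MM \coloneqq \U^T\W^T\W\U - \mathbb{I}_d$, the on-diagonal entries of $\E[\MM]$ vanish because $\E[\W^T\W] = \mathbb{I}_N$ (by construction with $s$ scaled signs), and the off-diagonals vanish by independence of the signs $\sigma_i(j)$.

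The heart of the argument, and the step I expect to be the main obstacle, is to control the operator norm of $\MM$ via the moment method. Concretely, for $p = \Theta(\log d)$ even, I would bound $\E[\,\tr(\MM^p)\,]$ by expanding $\tr(\MM^p)$ into a sum over closed walks in an auxiliary ``collision graph'' whose vertices are indexed by $[N]$ and whose edges encode hash collisions among the columns of $\W$. The sign symmetry forces each surviving term in the expansion to visit every edge an even number of times, and the sparsity condition (each column has only $s$ non-zeros) together with the orthonormality of $\U$ (which controls $\|\U_{i,*}\|_2^2$ summed to $d$) lets one count these walks and bound their contribution. Carefully choosing $t = \tilde{O}(d/\eps^2)$ and $s = \tilde{O}(1)$ so that $\E[\tr(\MM^p)] \le \eps^p/100$, Markov's inequality on $\|\MM\|_2^p \le \tr(\MM^p)$ then yields $\|\MM\|_2 \le \eps$ with probability $\ge 99/100$. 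The combinatorial accounting of the surviving walks is exactly the technically delicate portion of Nelson--Nguyen; with $\U$ having orthonormal columns in hand, the remaining linear-algebraic pieces and the running-time analysis follow directly from the construction above, completing the proof.
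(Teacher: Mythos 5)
The paper does not prove this lemma at all: it is imported verbatim as a black-box citation to \cite{nelson2013osnap}, so there is no in-paper argument to compare yours against. Your outline faithfully reproduces the strategy of the cited source -- the $s$-sparse sign construction, the reduction to bounding $\|\U^T\W^T\W\U - \mathbb{I}_d\|_2 \le \eps$ for an orthonormal basis $\U$ of the column span, and the trace moment method with $p = \Theta(\log d)$ -- and the running-time and column-sparsity claims follow from the construction as you say. Two caveats. First, the entire technical content of the lemma lives in the walk-counting bound on $\E[\tr(\MM^p)]$, which you explicitly defer ("exactly the technically delicate portion of Nelson--Nguyen"); as written this is a correct roadmap rather than a proof, which is acceptable here only because the paper itself treats the result as external. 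Second, be careful with the sparsity parameter: to achieve $t = \tilde{O}(d/\eps^2)$ rows the moment method forces $s$ to grow like $\tilde{O}(1/\eps)$ (with $s$ truly constant and independent of $\eps$ one only gets the CountSketch-type bound $t = O(d^2/\eps^2)$), so your simultaneous claims that $s = \tilde{O}(1)$ and that the sketch costs $\tilde{O}(\nnz(\AA)/\eps^2)$ are consistent with the paper's (equally loose) statement but should really carry the $1/\eps$ dependence in $s$ if one wants the stated row count.
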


%The following Count-Sketch can produce a subspace embedding in input sparsity time. We will need it to reduce the target dimension of the stacked matrices. 

\begin{lemma}[Count-Sketch\cite{10.1145/2488608.2488620}]\label{lem:CountSketch}
  For any fixed matrix $\AA\in\mathbb{R}^{n\times d}$, and any $\eps>0$, there exists an algorithm which produces a matrix $\S\in\mathbb{R}^{k\times n}$, where $k = O(d^2/\eps^2)$, such that $\S\AA$ is an $\eps$-subspace embedding for $\AA$ with probability at least $99/100$. Moreover, each column of $\S$ contains exactly one non-zero entry, and therefore $\S\AA$ can be computed in $O(\nnz(\AA))$ time.% (this follows since, for each column of $\AA$, for each non-zero entry in that column, it corresponds to adding a single entry to $\S \cdot \AA$, using that each column of $\S$ contains exactly one non-zero entry; see, e.g., \cite{woodruff2014sketching} for more background).  
\end{lemma}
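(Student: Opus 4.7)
The plan is to construct $\S$ as the standard CountSketch matrix and reduce the subspace embedding property to an approximate matrix product statement on an orthonormal basis for the column space of $\AA$. Specifically, for each column $j \in [n]$, I would independently draw a hash value $h(j) \in [k]$ uniformly at random, and a sign $\sigma(j) \in \{-1,+1\}$ uniformly, and set $\S_{h(j),j} = \sigma(j)$ with all other entries of $\S$ equal to zero. The sparsity structure is immediate: each column of $\S$ has exactly one nonzero entry, so forming $\S\AA$ amounts to scanning the nonzero entries of $\AA$ and adding $\sigma(j) \AA_{j,*}$ into row $h(j)$ of the output, giving $O(\nnz(\AA))$ total time.

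For correctness, let $\AA = \U \mathbf{\Sigma} \V^T$ be the thin SVD, so $\U \in \R^{n \times r}$ has orthonormal columns with $r \leq d$. A standard reduction shows that $\S\AA$ is an $\eps$-subspace embedding for $\AA$ if and only if $\|\U^T \S^T \S \U - \mathbb{I}_r\|_2 \leq \eps$, since for any $x$, writing $y = \mathbf{\Sigma}\V^T x$, we have $\|\S\AA x\|_2^2 - \|\AA x\|_2^2 = y^T(\U^T\S^T\S\U - \mathbb{I}_r) y$ and $\|y\|_2^2 = \|\AA x\|_2^2$. Since spectral norm is dominated by Frobenius norm, it suffices to prove $\mathbb{E}\bigl[\|\U^T \S^T \S \U - \mathbb{I}_r\|_F^2\bigr] \leq \eps^2/100$ and apply Markov's inequality.

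To carry out the moment bound, I would expand entries: $(\S^T\S)_{i,j} = \sigma(i)\sigma(j)\,\mathbbm{1}[h(i)=h(j)]$, which equals $1$ on the diagonal and has mean zero off the diagonal. The diagonal contribution to $\U^T \S^T \S \U$ is exactly $\U^T\U = \mathbb{I}_r$, so the error matrix $E$ has entries $E_{a,b} = \sum_{i \neq j} \U_{i,a}\U_{j,b}\,\sigma(i)\sigma(j)\mathbbm{1}[h(i)=h(j)]$. Squaring and taking expectation, only pairs $(i,j), (i',j')$ with matched signs survive (either $\{i,j\}=\{i',j'\}$ via pairwise independence of the signs), and each such term contributes a collision probability of $1/k$. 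Summing over $a,b \in [r]$ and using $\sum_i \U_{i,a}^2 = 1$, the total is bounded by $O(r^2/k) \leq O(d^2/k)$. Setting $k = C d^2/\eps^2$ for a sufficiently large constant $C$ yields the claim.

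The main obstacle is the fourth-moment computation: one must verify that only 4-wise independence of $\sigma$ and pairwise independence of $h$ are required, and that the cross terms combine cleanly to $O(d^2/k)$ rather than blowing up with $n$. This is where the choice to place the sign function inside the hash bucket matters — the signs kill the diagonal of the error and force surviving off-diagonal pairs to be matched, which eliminates the potentially large $n$-dependent terms that would otherwise appear in a naive bound. Once this moment bound is in hand, Markov's inequality and the Frobenius-dominates-spectral observation close out the proof.
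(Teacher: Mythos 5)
Your proposal is correct: it reconstructs the standard argument for the CountSketch oblivious subspace embedding (reduce to $\|\U^T\S^T\S\U-\mathbb{I}_r\|_2\leq\eps$ via the SVD, bound $\mathbb{E}\|\U^T\S^T\S\U-\mathbb{I}_r\|_F^2=O(d^2/k)$ by the second-moment/collision computation, then apply Markov), which is exactly the route taken in the cited reference; the paper itself states this lemma as a black box and gives no proof. The only nitpick is the parenthetical attributing the survival of matched pairs to ``pairwise independence of the signs''---the fourth-moment expansion needs $4$-wise independence (or full independence, which suffices here since the lemma only asserts existence), as you correctly note in your final paragraph.
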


\section{Subspace Embeddings for Two-Table Database Joins}\label{section:twotable}

 In this section, we will describe our algorithms for fast computation of in-database subspace embeddings for joins of two tables $\J = \T_1 \Join \T_2$, where $\T_i \in \R^{n_i \times d_i}$, and $\J \in \R^{N \times d}$. As a consequence of our subspace embeddings, we obtain an input sparsity time algorithm for machine precision in-database regression. Here, machine precision refers to a convergence rate of $\log(1/\eps)$ to the optimal solution.  
 
Our subspace embedding algorithm can be run with two separate hyper-parameterizations, one of which we refer to as the \textit{dense case} where the tables $\T_1,\T_2$ have many non-zero entries, and the other is referred to as the \textit{sparse case}, where we exploit the sparsity of the tables $\T_1,\T_2$. In the former, we will obtain $\tilde{O}(\frac{1}{\eps^2} ((n_1+ n_2)d + d^3))$ runtime for construction of our subspace embedding, and in the latter we will obtain $\tilde{O}( \frac{1}{\eps^2} (\nnz(\T_1) + \nnz(\T_2) + d^5))$ time.  Thus, when the matrices $\T_1,\T_2$ are dense, we have $(n_1+ n_2)d = \Theta(\nnz(\T_1) + \nnz(\T_2))$, in which case the former algorithm has a strictly better runtime dependence on $n_1,n_2$, and $d$. However, for the many applications where $\T_1,\T_2$ are sparse, the latter algorithm will yield substantial improvements in runtime. By first reading off the sparsity of $\T_1,\T_2$ and choosing the hyperparameters which minimize the runtime, the final runtime of the algorithm is the minimum of the two aforementioned runtimes. 
  
 \begin{algorithm}[!ht]
	\caption{Subspace embedding for join  $\J=\T_1 \Join \T_2$. } \label{alg:1}
	
		\begin{algorithmic}[1]
\STATE
		In the \textit{dense} case set $\gamma =1$. In the \textit{sparse} case, case $\gamma = d$. 
Compute block sizes $s_{(i)}$, and let $\mathcal{B}_{\text{big}} = \{i\in \mathcal{B}\; | \; \max_j \{s_{(i),j} \} \geq d \cdot \gamma \}$. Set $\mathcal{B}_{\text{small}} = \mathcal{B}\setminus \mathcal{B}_{\text{big}}$, $n_{\text{small}} = |\mathcal{B}_{\text{small}}|$.	
		
		\STATE	For each $i \in \mathcal{B}_{\text{big}}$, generate a Fast Tensor-Sketch  matrix $\S^i \in \R^{t \times s_{(i)}}$ (Lemma \ref{lem:fastTenSketch}) and compute $\S^{i} \J^{(i)}$.
	\STATE	Let $\tilde{\J}_{\text{big}}$ be the matrix from stacking the matrices $\{\S^{i}\J^{(i)}\}_{i\in\mathcal{B}_{\text{big}}}$. Generate a Count-Sketch matrix $\S'$ (Lemma \ref{lem:CountSketch}) and compute $\S' \tilde{\J}_{\text{big}}$.
		\STATE	Let $\J_{\text{small}} = \J_{\mathcal{B}_{\text{small}}}$, and sample uniformly a subset $U$ of $m = \Theta((n_1 + n_2)/\gamma)$ rows of $\J_{\mathcal{B}_{small}} \in \R^{n_{\text{small} \times d}}$ and form the matrix $\tilde{\J}_{\text{small}} =( \J_{\text{small}})_U \in \R^{m \times d }$. \label{line:3}
					\STATE	Generate OSNAP transform $\W$ (Lemma \ref{lem:OSNAP}) and compute $\W \tilde{\J}_{\text{small}}$ and the SVD $\W\cdot\tilde{\J}_{\text{small}} = \U \Sigma \V^T$. \label{line:4}

		\STATE	Generate Gaussian matrix $\G \in \R^{d \times t}$ with entries drawn i.i.d. from $\mathcal{N}(0,1/t^2)$, $t = \Theta(\log N)$, and Gaussian vector $g\sim  \mathcal{N}(0,\mathbb{I}_d)$. \label{line:5}
		
		\STATE	For all rows $i$ of $\J_{\text{small}}$, set $ \|(\J_{\text{small}})_{i,*} \left(\mathbb{I}_{d} - \V \V^T \right) g \|_2^2 = \alpha_i$, and 
		\[\tilde{\tau}_i = \begin{cases}
		1 & \text{if } \alpha_i > 0 \\
		 \|(\J_{\text{small}})_{i,*} \V \Sigma^{-1} \G\|_2^2  & \text{otherwise} 
		\end{cases}.\] 
	\STATE	Using Algorithms \ref{alg:L2presample} and \ref{alg:L2sample}, construct diagonal row sampling matrix $\S  \in \R^{n_{\text{small}} \times n_{\text{small}}}$   such that $\S_{i,i} = \frac{1}{\sqrt{p_i}}$ with probability $p_i$, and $\S_{i,i}  = 0$ otherwise, where 
	$p_i \geq   \min\left\{1, \frac{\log d }{\eps^2}\cdot\tilde{\tau}_i \right\}$ 		\label{line:6}

	\STATE	Return $\tilde{\J}$, where $\tilde{\J}$ is the result of stacking the matrices $\S'\tilde{\J}_{\text{big}}$ with  $\S \J_{\text{small}}$. 
	 
		\end{algorithmic}
\end{algorithm}

 Our main subspace embedding is given in Algorithm \ref{alg:1}. We begin by informally describing the algorithm and analysis, before proceeding to formal proofs in Section \ref{sec:twotableanalysis}.
 As noted in Section \ref{sec:databasenotation}, we can describe the join $\J$ as the result of stacking several ``blocks'' $\J^{(i)}$, where the rows of $\J^{(i)}$ consist of all pairs of concatenations of a row of $\T_1^{(i)}$ and $\T_2^{(i)}$, where the $\T_j^{(i)}$'s partition $\T_j$. We deal separately with blocks $i$ for which $\J^{(i)}$ contains a very large number of rows, and smaller blocks. Formally, we split the set of blocks $\mathcal{B}(\J)$ into $\mathcal{B}_{\text{big}}$ and $\mathcal{B}_{\text{small}}$. For each block $\J^{(i)}$ from $\mathcal{B}_{\text{big}}$, we apply a fast tensor sketch transform to obtain a subspace embedding for that block.
 
 For the smaller blocks, however, we need a much more involved routine. Our algorithm computes a random sample of the rows of the blocks $\J^{(i)}$ from $\mathcal{B}_{\text{small}}$, denoted $\tilde{\J}_{\text{small}}$. Using the results of \cite{cohen2015uniform}, it follows that sampling sufficiently many rows from the distribution induced by the generalized leverage scores $\tau_i^{\tilde{\J}_{\text{small}}}(\J_{\text{small}})$ of $\J_{\text{small}}$ with respect to $\tilde{\J}_{\text{small}}$ yields a subspace embedding of $\J_{\text{small}}$. However, it is not possible to write down (let alone compute) all the values $\tau_i^{\tilde{\J}_{\text{small}}}(\J_{\text{small}})$, since there can be more rows $i$ in $\J_{\text{small}}$ than our entire allowable running time. 
 
 To handle this issue, we first note that by Proposition \ref{prop:tausubspace} and the discussion prior to it, the value $\tau_i^{\tilde{\J}_{\text{small}}}(\J_{\text{small}})$ is well-approximated by $\|(\J_{\text{small}})_{i,*} \V \Sigma^{-1}\|_2^2$, which in turn is well-approximated by $\|(\J_{\text{small}})_{i,*} \V \Sigma^{-1} \G\|_2^2$ if $\G$ is a Gaussian matrix with only a small $\Theta(\log N)$ number of columns. Thus, sampling from the generalized leverage scores $\tau_i^{\tilde{\J}_{\text{small}}}(\J_{\text{small}})$ can be approximately reduced to the problem of sampling a row $i$ from $\J_{\text{small}} \Y$ with probability proportional to  $\|(\J_{\text{small}})_{i,*} \Y\|_2^2$, where $\Y$ is any matrix given as input. We then design a fast algorithm which accomplishes precisely this task: namely, for any join $\J'$ and input matrix $\Y$ with a small number of columns, it samples rows from $\J'\Y$ with probability proportional to the squared row norms of $\J'\Y$. Since $\J_{\text{small}} = \J'$ is itself a database join, this is the desired sampler. This procedure is given in Algorithms \ref{alg:L2presample} (pre-processing step) and \ref{alg:L2sample} (sampling step), described in  Section \ref{sec:twotableanalysis}.
 We can apply this sampling primitive to efficiently sample from the generalized leverage scores in time substantially less than constructing $\J_{\text{small}}$, which ultimately allows for our final subspace embedding guarantee of Theorem \ref{thm:subspace}.

Finally, to obtain our input sparsity runtime machine precision regression algorithm, we apply our subspace embedding with constant $\eps$ to precondition the join $\J$, after which the regression problem can be solved quickly via gradient descent. While a general gradient step is not always possible to compute efficiently with respect to the join $\J$, we demonstrate that when the products used in the gradient step arise from vectors in the column span of $\J$, the updates can be computed efficiently, which will yield our main regression result (Theorem \ref{thm:reg}). %The full proofs are deferred to the supplementary material.

\subsection{Analysis}\label{sec:twotableanalysis}

 We will begin by proving our main technical sampling result, which proceeds in a series of lemmas, and demonstrates that the construction of the diagonal sampling matrix $\S_{\text{small}}$ in Algorithm \ref{alg:1} can be carried out extremely quickly.  
 
 \begin{proposition}\label{prop:nbound}
 Let $\J_{\text{small}} \in \R^{n_{\text{small}} \times d}$ be the matrix constructed as in Algorithm \ref{alg:1}. Then we have $n_{\text{small}} \leq  (n_1 + n_2) d \cdot \gamma$, where $\gamma=1$ in the dense case and $\gamma =d$ in the sparse case.
 \end{proposition}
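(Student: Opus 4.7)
\textbf{Proof plan for Proposition \ref{prop:nbound}.}

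The plan is to bound $n_{\text{small}}$ by summing the sizes of all small blocks and using the definition of $\mathcal{B}_{\text{small}}$ directly. Since $\J_{\text{small}} = \J_{\mathcal{B}_{\text{small}}}$ is exactly the set of rows of $\J$ belonging to some block $i \in \mathcal{B}_{\text{small}}$, we have
\[
 n_{\text{small}} \;=\; \sum_{i \in \mathcal{B}_{\text{small}}} s_{(i)} \;=\; \sum_{i \in \mathcal{B}_{\text{small}}} s_{(i),1}\, s_{(i),2},
\]
using the identity $s_{(i)} = s_{(i),1} s_{(i),2}$ for two-table joins noted in Section \ref{sec:databasenotation}.

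Next I would use the threshold defining $\mathcal{B}_{\text{big}}$: a block $i$ lies in $\mathcal{B}_{\text{small}}$ precisely when $\max_j s_{(i),j} < d\gamma$, so in particular \emph{both} $s_{(i),1} < d\gamma$ and $s_{(i),2} < d\gamma$. Therefore, for each $i \in \mathcal{B}_{\text{small}}$,
\[
 s_{(i),1}\, s_{(i),2} \;\le\; d\gamma \cdot s_{(i),2}.
\]

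The last step is to sum this across $i \in \mathcal{B}_{\text{small}}$ and use the fact that each row of $\T_2$ belongs to at most one block (the one determined by its projection onto $Q$), so $\sum_{i \in \mathcal{B}} s_{(i),2} \le n_2$. Combining gives
\[
 n_{\text{small}} \;\le\; d\gamma \sum_{i \in \mathcal{B}_{\text{small}}} s_{(i),2} \;\le\; d\gamma \cdot n_2 \;\le\; (n_1 + n_2)\, d\gamma,
\]
as required. (Symmetrically one could bound by $d\gamma \cdot n_1$, so in fact the tighter bound $d\gamma \min(n_1,n_2)$ holds, but the stated inequality suffices for the subsequent runtime analysis.)

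There is essentially no obstacle here; the only point to watch is interpreting $n_{\text{small}}$ as the \emph{row count} of $\J_{\text{small}}$ (consistent with the display $\J_{\text{small}} \in \R^{n_{\text{small}} \times d}$ in the proposition statement), rather than $|\mathcal{B}_{\text{small}}|$, and then applying the block threshold to extract exactly one factor of $d\gamma$ per block.
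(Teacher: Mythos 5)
Your proof is correct, and it takes a slightly different (and in fact cleaner) route than the paper. Both arguments rest on the same two facts: each row of $\T_j$ lies in at most one block, so $\sum_{i\in\mathcal{B}_{\text{small}}} s_{(i),j} \le n_j$, and every block in $\mathcal{B}_{\text{small}}$ satisfies $s_{(i),j} < d\gamma$. The paper combines these via the Cauchy--Schwarz inequality, writing $n_{\text{small}} = \sum_i s_{(i),1}s_{(i),2} \le \|s^1_{\text{small}}\|_2\|s^2_{\text{small}}\|_2$ and then invoking the extremal fact that under an $\ell_1$ bound of $n_j$ and an $\ell_\infty$ bound of $d\gamma$ the $\ell_2$ norm is maximized by concentrating mass, giving $\|s^j_{\text{small}}\|_2^2 \le n_j d\gamma$ and hence $n_{\text{small}} \le d\gamma\sqrt{n_1 n_2} \le (n_1+n_2)d\gamma$. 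You instead bound one factor pointwise by $d\gamma$ and sum the other, which avoids Cauchy--Schwarz entirely and yields the slightly tighter bound $d\gamma\min(n_1,n_2)$ (versus the paper's $d\gamma\sqrt{n_1n_2}$); both are dominated by $(n_1+n_2)d\gamma$, which is all that is used downstream. Your reading of $n_{\text{small}}$ as the row count of $\J_{\text{small}}$ and of the $\mathcal{B}_{\text{small}}$ threshold is consistent with the algorithm's definitions, so there is no gap.
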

 
 \begin{proof}
Recall that $\J_{\text{small}}$ consists of all blocks $i$ of $\J$ with $\max \{s_{(i),1}, \allowbreak s_{(i),2} \} < d \cdot \gamma $, and thus $s_{(i)} \leq d^2 \gamma^2$.  
The total number $n_{\text{small}}$ of rows in  $\J_{\text{small}}$ is then $\sum_{i \in \mathcal{B}_{\text{small}} } s_{(i),1}\cdot  s_{(i),2} \leq \|s^1_{\text{small}}\|_2 \|s^2_{\text{small}}\|_2$ by the Cauchy-Schwarz inequality, where $s^j_{\text{small}}$ is the vector with coordinates given by the values $s_{(i),j}$ for $i \in \mathcal{B}_{\text{small}}$. 
Observe that these vectors admit the $\ell_1$ bound of $\|s^j_{\text{small}}\|_1 \leq n_j$ since each table $\T_j$ has only $n_j$ rows. Moreover, they admit the $\ell_\infty$ bound of $\|s^j_{\text{small}}\|_\infty \leq d \gamma$.
With these two constraints, it is standard that the $\ell_2$ norm is maximized by placing all of the $\ell_1$ mass on coordinates with value given by the $\ell_\infty$ bound. 
It follows that $\|s^j_{\text{small}}\|_2$ is maximized by having $n_i/(d\gamma)$ coordinates equal to $d \gamma$, giving $\|s^j_{\text{small}}\|_2^2 \leq  n_j d \gamma$ for $j \in [2]$, so $n_{\text{small}} \leq \|s^1_{\text{small}}\|_2 \|s^2_{\text{small}}\|_2 \leq (n_1 + n_2) d \gamma$ as required. 
 \end{proof}
 
 We now demonstrate how we can quickly $\ell_2$ sample rows from a join-vector or join-matrix product after input sparsity time pre-processing. This procedure is split into two algorithms, Algorithm \ref{alg:L2presample} and \ref{alg:L2sample}.  Algorithm \ref{alg:L2presample} is an input sparsity time pre-processing step, which given $\J \in \R^{n \times d}$ and $\Y \in \R^{d \times t}$, constructs several binary tree data structures.  Algorithm  \ref{alg:L2sample} then uses these data structures to sample a row of the product $\J \Y$ with probability proportional to its $\ell_2$ norm, in time $O( \log N)$.

The following lemma shows we can compute $\S\J_{\text{small}}$ in input sparsity time using Algorithm \ref{alg:L2presample} and \ref{alg:L2sample}. 

\begin{lemma} \label{lem:twotablemain} Set the value $\gamma = 1$ in the dense case, and $\gamma = d$ in the sparse case.
Let $\J_{\text{small}} \in \R^{n_{\text{small}} \times d}$ and let $\tilde{\J}_{\text{small}} \in \R^{m \times d}$ be the subset of rows of $\J_{\text{small}} $ constructed in Algorithm \ref{alg:1}, where $m = \Theta( (n_1 + n_2)/\gamma)$. Let $\S$ be the diagonal sampling matrix as constructed in Algorithm \ref{alg:1}.  Then with probability $1-1/d$, we have that $\S \J_{\text{small}}$ is an $\eps$-subspace embedding for $\J_{\text{small}}$. Moreover, $\S$ has at most $\tilde{O}(d^2 \gamma^2 /\eps^2 )$ non-zero entries, and %$\S\J_{\text{small}}$ can be computed in time at most  $\tilde{O}(d^3 \gamma^2 /\eps^2 )$ time given $\S$. Moreover, 
$\S\J_{\text{small}}$ can be computed in time $\tilde{O}(\nnz(\T_1) + \nnz(\T_2) + d^3 \gamma^2 /\eps^2 )$.
\end{lemma}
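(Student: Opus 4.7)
The plan is to invoke the standard leverage-score sampling theorem (as in \cite{cohen2015uniform}), where the estimates $\tilde{\tau}_i$ play the role of overestimates of the true leverage scores $\tau_i(\J_{\text{small}})$. The main task is therefore to show three things in sequence: (a) that each $\tilde{\tau}_i$ is, up to constants, at least $\tau_i(\J_{\text{small}})$; (b) that $\sum_i \tilde{\tau}_i = \tilde O(d^2\gamma^2)$; and (c) that the entire computation (including the $\ell_2$-sampling of rows) fits within the stated time budget. Once (a) and (b) are established, sampling with $p_i \geq \min\{1, (\log d / \eps^2) \tilde{\tau}_i\}$ gives an $\eps$-subspace embedding with the right failure probability, and the total sample count is $\tilde O(\sum_i \tilde{\tau}_i / \eps^2) = \tilde O(d^2\gamma^2/\eps^2)$, matching the claimed $\nnz(\S)$ bound.

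For (a), I would peel the estimates apart in two layers. First, by Lemma~\ref{lem:OSNAP}, $\W\tilde{\J}_{\text{small}}$ is a constant-factor subspace embedding for $\tilde{\J}_{\text{small}}$, so by Proposition~\ref{prop:tausubspace} the quantity $\|(\J_{\text{small}})_{i,*}\V\Sigma^{-1}\|_2^2$ is, up to constants, the generalized leverage score $\tau_i^{\tilde{\J}_{\text{small}}}(\J_{\text{small}})$ (handling the ``in-range'' case, which is exactly the case $\alpha_i = 0$, since the kernel-projection test detects when $(\J_{\text{small}})_{i,*}$ has a component outside $\mathrm{range}(\tilde{\J}_{\text{small}}^T)$). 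Second, since $\G \in \R^{d\times t}$ with $t = \Theta(\log N)$ is a Gaussian JL map, $\|(\J_{\text{small}})_{i,*}\V\Sigma^{-1}\G\|_2^2 = (1\pm O(1))\|(\J_{\text{small}})_{i,*}\V\Sigma^{-1}\|_2^2$ for all $i$ simultaneously with probability $1 - 1/\mathrm{poly}(N)$ by a union bound. Because $\tilde{\J}_{\text{small}}$ is a row-subset of $\J_{\text{small}}$ we have $\tilde{\J}_{\text{small}}^T\tilde{\J}_{\text{small}} \preceq \J_{\text{small}}^T\J_{\text{small}}$, so $\tau_i^{\tilde{\J}_{\text{small}}}(\J_{\text{small}}) \geq \tau_i(\J_{\text{small}})$ in-range; for rows with $\alpha_i > 0$, the value $\tilde{\tau}_i = 1$ trivially upper-bounds $\tau_i(\J_{\text{small}})$.

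For (b), the key calculation is $\sum_i \tau_i^{\tilde{\J}_{\text{small}}}(\J_{\text{small}}) = \tr(\J_{\text{small}}^T\J_{\text{small}}(\tilde{\J}_{\text{small}}^T\tilde{\J}_{\text{small}})^+)$. The downsampling rate is $n_{\text{small}}/m = \Theta(d\gamma^2)$ (using Proposition~\ref{prop:nbound}), and a matrix Chernoff/Bernstein argument for uniform row-sampling shows that with high probability $\tilde{\J}_{\text{small}}^T\tilde{\J}_{\text{small}} \succeq \Omega(1/(d\gamma^2))\,\J_{\text{small}}^T\J_{\text{small}}$ on $\mathrm{range}(\tilde{\J}_{\text{small}}^T)$, provided the matrix is sufficiently incoherent. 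The incoherence is forced by the block structure, which caps every block at $(d\gamma)^2$ rows formed from at most $d\gamma$ rows of each table; this controls the maximum row-norm contribution, and any directions uncaptured by $\tilde{\J}_{\text{small}}$ (at most $d$ dimensions of them) contribute at most $d$ to the sum through the $\tilde{\tau}_i = 1$ branch. Plugging this spectral lower bound into the trace gives $\sum_i \tilde{\tau}_i = \tilde O(d^2\gamma^2)$.

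For (c) and the running time, the OSNAP product $\W\tilde{\J}_{\text{small}}$ costs $\tilde O((n_1+n_2)d/(\gamma\eps^2))$, its SVD costs $\tilde O(d^3/\eps^2)$, and drawing $\tilde O(d^2\gamma^2/\eps^2)$ row samples from the $\ell_2$-sampling primitives (Algorithms~\ref{alg:L2presample} and \ref{alg:L2sample}) fits in $\tilde O(\nnz(\T_1) + \nnz(\T_2))$ preprocessing plus $\tilde O(d)$ per sample, totaling $\tilde O(\nnz(\T_1) + \nnz(\T_2) + d^3\gamma^2/\eps^2)$. Combined with the $O(\log N)$ per-sample evaluation of $\tilde{\tau}_i$, this gives the claimed running time. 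The main obstacle I anticipate is step (b): carefully verifying that the uniform-sampling rate $n_{\text{small}}/m = \Theta(d\gamma^2)$ combined with the per-block row-count cap of $d\gamma$ in each table suffices to drive a matrix-concentration inequality that yields the one-sided spectral inequality $\tilde{\J}_{\text{small}}^T\tilde{\J}_{\text{small}} \succeq \Omega(1/(d\gamma^2))\,\J_{\text{small}}^T\J_{\text{small}}$ with the needed probability; everything else is a careful bookkeeping of known sketching primitives.
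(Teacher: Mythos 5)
Your overall architecture matches the paper's: show the $\tilde{\tau}_i$ are constant-factor overestimates of leverage scores, bound their sum by $\tilde{O}(d^2\gamma^2)$, invoke the leverage-score sampling theorem of \cite{cohen2015uniform}, and account for the running time via the OSNAP/SVD/JL steps and the $\ell_2$-sampling primitive. Steps (a) and (c) are essentially the paper's argument. However, your step (b) has a genuine gap. You propose to establish $\sum_i \tilde{\tau}_i = \tilde{O}(d^2\gamma^2)$ by proving a one-sided matrix-Chernoff bound $\tilde{\J}_{\text{small}}^T\tilde{\J}_{\text{small}} \succeq \Omega(m/n_{\text{small}})\,\J_{\text{small}}^T\J_{\text{small}}$ for the uniform subsample, ``provided the matrix is sufficiently incoherent,'' and you claim the incoherence is forced by the block-size cap. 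It is not: the cap $\max_j s_{(i),j} < d\gamma$ bounds the \emph{number of rows per block}, not their leverage scores. A single row of $\J_{\text{small}}$ can still have leverage score $1$ (a direction realized by very few rows), in which case a uniform sample of $m \ll n_{\text{small}}$ rows misses it with high probability and no spectral lower bound of the proposed form can hold. Relatedly, your claim that the out-of-span rows ``contribute at most $d$ to the sum through the $\tilde{\tau}_i=1$ branch'' is incorrect as stated: the number of rows with a component outside $\mathrm{range}(\tilde{\J}_{\text{small}}^T)$ is not bounded by the number of missing dimensions and could a priori be $\Omega(n_{\text{small}})$.

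The fix — and what the paper actually does — is to use Theorem 1 of \cite{cohen2015uniform} directly: for a \emph{uniform} row sample $\tilde{\AA}$ of $m$ out of $n$ rows of $\AA$, one has $\mathbb{E}\bigl[\sum_i \tau_i^{\tilde{\AA}}(\AA)\bigr] \leq \tfrac{n}{m}\,d$ unconditionally, with out-of-span rows assigned score $1$. This is proved by a leave-one-out argument (relating $\tau_i^{\tilde{\AA}}(\AA)$ to the ordinary leverage score of row $i$ in the sample augmented by that row, whose scores sum to at most $d$) and requires no incoherence and no spectral lower bound on the subsample. Combined with Proposition~\ref{prop:nbound}, which gives $n_{\text{small}}/m = O(d\gamma^2)$, this yields $\sum_i \tilde{\tau}_i = O(d^2\gamma^2)$ with constant probability by Markov, and Theorem 4 of \cite{cohen2015uniform} then delivers the subspace embedding with the stated sample count. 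With step (b) replaced by this citation (or by reproducing its leave-one-out argument), the rest of your proposal goes through.
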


We defer the proof of the Lemma to Section \ref{app:twotablemain}, and first show how our main results follow given Lemma \ref{lem:twotablemain}.

%We now prove our main theorem for in-database subspace embeddings, given Lemma \ref{lem:twotablemain}.
\smallskip

\textbf{Theorem  \ref{thm:subspace} }[In-Database Subspace Embedding] {\it 
%\begin{theorem}[In-Database Subspace Embedding, Theorem \ref{thm:subspace}] 
	Suppose  $\J=\T_1 \Join \T_2 \in \R^{n \times d}$ is a join of two tables, where $\T_1 \in \R^{n_1 \times d_1},\T_2 \in \R^{n_2 \times d_2}$. Then Algorithm \ref{alg:1} outputs a sketching matrix $\S^* \in \R^{k \times n}$ with $k = \tilde{O}(d^2 \gamma^2 /\eps^2)$ (where $\gamma$ is chosen as in Lemma \ref{lem:twotablemain})  such that $\tilde{\J} = \S^* \J$ is an $\eps$-subspace embedding for $\J$, meaning
	\[\|\S^*\J x\|_2^2 = (1 \pm \eps)\|\J x\|_2^2 \]
	for all $x \in \R^{d}$ with probability at least $9/10$. The runtime to return $\S^* \J$ is the minimum of $\tilde{O}((n_1 + n_2)d/\eps^2  +  d^{3}/\eps^2  )$ and  $\tilde{O}((\nnz(\T_1) + \nnz(\T_2))/\eps^2 +d^5/\eps^2   )$.
%\end{theorem} 
}

\begin{proof}
Our algorithm partitions the rows of $\J$ into those from $\mathcal{B}_{\text{small}}$ and  $\mathcal{B}_{\text{big}}$, and outputs the result of stacking sketches for $\J_{\text{small}}$ and $\J^{(i)}$ for each $i \in \mathcal{B}_{\text{big}}$. Thus it suffices to show that each sketch $\S^i \J^{(i)}$ is a subspace embedding for $\J^{(i)}$, and $\S \J_{\text{small}}$ is a subspace embedding for $\J_{\text{small}}$. The latter holds by Lemma \ref{lem:twotablemain}, and the former follows directly from applying Lemma \ref{lem:fastTenSketch} and a union bound over the at most $O(n)$ such $i$. Since each such $\J^{(i)}$  can be written as $\J^{i} = \left( \hat{\T}_1^{(i)} \otimes \mathbf{1}^{s_{(i),2} } +  \mathbf{1}^{s_{(i),1} } \otimes   \hat{\T}_2^{(i)} \right) \in \R^{s_{(i)} \times d}$, the Fast Tensor-sketch lemma can be applied to $\J^{(i)}$ in time $\tilde{O}((\nnz(\T_1^{(i)}) + \nnz(\T_2^{(i)}) + d^2)/\eps^2)$. Note that $|\mathcal{B}_{\text{big}}| \leq 2(n_1 + n_2)/(d\gamma)$, since there can be at most this many values of $s_{(i)}^1  +s_{(i)}^2  > d \gamma$. Thus the total running time is $\tilde{O}(\sum_{i \in \mathcal{B}_{\text{big}}} ( \nnz(\T_1^{(i)}) + \nnz(\T_2^{(i)}) + d^2)/\eps^2)  =  \tilde{O}( (\nnz(\T_1) + \nnz(\T_2))/\eps^2 + (n_1 + n_2)d/(\gamma\eps^2))$. Finally, applying  CountSketch will cost $\tilde{O}((n_1 + n_2)/(d\gamma)\cdot d/\eps^2\cdot d)$, which is $\tilde{O}((n_1 + n_2)d/\eps^2)$ for the dense case and $\tilde{O}((n_1 + n_2)/\eps^2)$ for the sparse case. The remaining runtime analysis follows from Lemma \ref{lem:twotablemain}, setting $\gamma$ to be either $1$ or $d$. 

%\peng{Isn't the running time $\tilde{O}( (\nnz(\T_1) + \nnz(\T_2))/\eps^2 + (n_1 + n_2)d/(\gamma\eps^2))$ ? }
% Alireza: I think the runtime is correct
\end{proof}

%We remark that the runtime for obtaining a subspace embedding can be further improved if the join $\J$ is well-conditioned, by replacing the Fast Tensor-Sketch with the recent sketch in \cite{Woodruff2020near}, which would remove the $\eps^{-2}$ dependence multiplying the  $(\nnz(\T_1) + \nnz(\T_2))$ term. We have the following theorem:

%\textbf{Theorem  \ref{thm:subspace_using_kernelICML} } {\it 
 %   Suppose $\J = \T_1 \Join \T_2\in \mathbb{R}^{n\times d}$ is a join of two tables, where $\T_1 \in \R^{n_1 \times d_1},\T_2 \in \R^{n_2 \times d_2}$. Suppose $\kappa(\J^{(i)})/\sqrt{\eps} = O(\poly(n))$ for all $i\in\mathcal{B}_{\text{big}}$, then there exists an algorithm that outputs a sketching matrix $\S^* \in \R^{k \times n}$ with $k = \tilde{O}(d^2\gamma^2/\eps^2)$ (where $\gamma$ is chosen as in Lemma \ref{lem:twotablemain})  such that $\tilde{\J} = \S^* \J$ is an $\eps$-subspace embedding for $\J$. The runtime to return $\S^* \J$ is $\tilde{O}(\nnz(\T_1) + \nnz(\T_2) + d^3\gamma^2/\eps^2 + (n_1 + n_2)d^2/(\gamma\eps^4))$.
%\end{theorem}
%}

%The proof of the above theorem is deferred to Section \ref{app:kernelICML}. In the following analysis we use the Fast Tensor-Sketch, since we do not make any assumption on the condition number of $\J^{(i)}$.

We now demonstrate how our subspace embeddings can be easily applied to obtain fast algorithms for regression. To do this, we will first need the following proposition, which shows that $w^T \J^T \J$ can be computed in input sparsity time for any $w \in \R^d$. \\

\begin{proposition}\label{prop:fastprod}
	Suppose  $\J=\T_1 \Join \T_2 \in \R^{n \times d}$ is a join of two tables, where $\T_1 \in \R^{n_1 \times d_1},\T_2 \in \R^{n_2 \times d_2}$. Let $b \in \R^n$ be any column of the join $\J$. Let $U \subseteq [d]$ be any subset, and let $\J_U \in \R^{n \times |U|}$ be the subset of the columns of $\J$ contained in $\U$. Let $w \in \R^{d}$ be any vector, and let $x = \J w \in \R^n$. Then given $w$, the vector $x^T \J_{U}  = w^T \J^T \J_U \in \R^{|U|}$ can be computed in time $O(\nnz(\T_1)+ \nnz(\T_2))$. %Furthermore, for any vector $x \in \R^{d-1}$, we can compute $\J_{-b} x$ in time $O(\nnz(\T_1) + \nnz(\T_2))$.
\end{proposition}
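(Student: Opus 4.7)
The plan is to compute $x^T \J_U = w^T \J^T \J_U$ block-by-block using the decomposition described in Section~\ref{sec:databasenotation}, never materializing any row of $\J$. Recall that for each block $i \in \mathcal{B}$, a two-table join satisfies
\[
\J^{(i)} \;=\; \hat{\T}_1^{(i)} \otimes \mathbf{1}^{s_{(i),2}} \;+\; \mathbf{1}^{s_{(i),1}} \otimes \hat{\T}_2^{(i)},
\]
and the analogous identity holds for $\J_U^{(i)}$ once $\hat{\T}_j$ is replaced by its restriction to the columns in $U$. This is the structural fact that will let each block be processed in time proportional to the number of nonzeros of its two input-table slices, rather than to the $s_{(i),1}\cdot s_{(i),2}$ rows the block actually contains.

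Fix a block $i$ and write $A := \hat{\T}_1^{(i)}$, $B := \hat{\T}_2^{(i)}$, and let $A', B'$ denote their restrictions to the columns in $U$. The first step is to compute the two short vectors $\alpha := A w \in \R^{s_{(i),1}}$ and $\beta := B w \in \R^{s_{(i),2}}$, which costs $O(\nnz(A)+\nnz(B))$. The key observation is that the row of $\J^{(i)}$ indexed by $(a,b)$ is mapped by $w$ to the scalar $\alpha_a+\beta_b$, while the corresponding row of $\J_U^{(i)}$ equals $A'_{a,*}+B'_{b,*}$. Hence the contribution of block $i$ to $x^T \J_U$ is
\[
\sum_{a,b}(\alpha_a+\beta_b)(A'_{a,*}+B'_{b,*}),
\]
which expands into four terms. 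Two telescope into outer products of already-computed quantities,
\[
\sum_{a,b}\alpha_a B'_{b,*} \;=\; \Bigl(\sum_a \alpha_a\Bigr)\Bigl(\sum_b B'_{b,*}\Bigr), \qquad \sum_{a,b}\beta_b A'_{a,*} \;=\; \Bigl(\sum_a A'_{a,*}\Bigr)\Bigl(\sum_b \beta_b\Bigr),
\]
while the other two reduce to weighted single-table sums, $s_{(i),2}\sum_a \alpha_a A'_{a,*}$ and $s_{(i),1}\sum_b \beta_b B'_{b,*}$. Adding the four terms as sparse updates into a single accumulator of length $|U|$ costs $O(\nnz(A')+\nnz(B')) = O(\nnz(A)+\nnz(B))$ per block.

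Finally, summing the per-block contributions over all $i \in \mathcal{B}$ yields $x^T \J_U$. Since the rows of each $\T_j$ are partitioned across blocks, $\sum_i \nnz(\hat{\T}_j^{(i)}) = \nnz(\T_j)$, so the total cost is $O(\nnz(\T_1)+\nnz(\T_2))$ as claimed. There is no real obstacle beyond bookkeeping: one needs the block partition of each table's rows and the block sizes $s_{(i),1}, s_{(i),2}$, both of which are obtained from a single linear scan of $\T_1,\T_2$ (and are already maintained inside Algorithm~\ref{alg:1}), so they contribute no additional asymptotic cost.
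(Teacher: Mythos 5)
Your proof is correct and is essentially the paper's argument: both expand the per-block contribution $w^T(\J^{(i)})^T\J_U^{(i)}$ into the same four terms (two single-table weighted sums scaled by $s_{(i),2}$ and $s_{(i),1}$, and two rank-one cross terms) and charge each block $O(\nnz(\hat{\T}_1^{(i)})+\nnz(\hat{\T}_2^{(i)}))$, summing to $O(\nnz(\T_1)+\nnz(\T_2))$. The only difference is presentational — you write the expansion as explicit sums over row pairs $(a,b)$, whereas the paper derives the same four terms via the mixed-product property of Kronecker products.
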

\begin{proof}
Fix any $i \in \mathcal{B}(\J)$, and similarly let $\J^{(i)}_{U},\hat{\T}_{1,U}^{(i)},\hat{\T}_{2,U}^{(i)} $ be $\J^{(i)},\hat{\T}_{1}^{(i)},\hat{\T}_{2}^{(i)}$ respectively, restricted to the columns of $U$. Note that we have $\J^{(i)}_{U} = (\hat{\T}_{1,U}^{(i)} \otimes \mathbf{1}^{s_{(i),2}} +  \mathbf{1}^{s_{(i),2}} \otimes   \hat{\T}_{2,U}^{(i)}) \allowbreak \in \R^{s_{i} \times |U|}$. 
Let $x^{(i)} \in \R^{s_{(i)}}$ be $x$ restricted to the rows inside of block $i \in \mathcal{B}(\J)$. 
Since $x = \J w$ for some $w \in \R^d$, we have $x^{(i)} = \J^{(i)} w = (\hat{\T}_{1}^{(i)} \otimes \mathbf{1}^{s_{(i),2}} +  \mathbf{1}^{s_{(i),1}} \otimes   \hat{\T}_{2}^{(i)})w$. 
Then we have
\begin{equation}
    \begin{split}
    x^T \J_{U} &= \sum_{i \in \mathcal{B}} (x^{(i)})^T \J^{(i)}_{U}\\
    &= \sum_{i \in \mathcal{B}}  w^T \left(\hat{\T}_{1}^{(i)} \otimes \mathbf{1}^{s_{(i),2}} +  \mathbf{1}^{s_{(i),1}} \otimes   \hat{\T}_{2}^{(i)}\right)^T \\ 
    & \cdot \left(\hat{\T}_{1,U}^{(i)} \otimes \mathbf{1}^{s_{(i),2}} +  \mathbf{1}^{s_{(i),1}} \otimes   \hat{\T}_{2,U}^{(i)} \right)\\ 
        &= \sum_{i \in \mathcal{B}}  w^T  \Big(s_{(i),2}(\hat{\T}_{1}^{(i)})^T \hat{\T}_{1,U}^{(i)}  + ((\hat{\T}_{1}^{(i)})^T \mathbf{1}^{s_{(i),1}}) \\
        & \otimes(  (\mathbf{1}^{s_{(i),2}})^T \hat{\T}_{2,U}^{(i)})+ (\mathbf{1}^{s_{(i),1}})^T \hat{\T}_{1,U}^{(i)} \otimes((\hat{\T}_{2}^{(i)})^T \mathbf{1}^{s_{(i),2}} )  \\
        & + s_{(i),1}(\hat{\T}_{2}^{(i)})^T \hat{\T}_{2,U}^{(i)}  \Big)\\ 
        \end{split}
\end{equation}
where the last equality follows from the mixed product property of Kronecker products (see e.g.,  \cite{van2000ubiquitous}). 
First note that the products $ w^T  s_{(i),2}(\hat{\T}_{1}^{(i)})^T \hat{\T}_{1,U}^{(i)}$ and $w^T s_{(i),1}(\hat{\T}_{2}^{(i)})^T  \hat{\T}_{2,U}^{(i)}$ can be computed in $O(\nnz(\T_1^{(i)}) + \nnz(\T_2^{(i)}))$ time by computing the vector matrix product first. Thus, it suffices to show how to compute $w^T ((\mathbf{1}^{s_{(i),1}})^T \allowbreak\hat{\T}_{1,U}^{(i)}) \otimes((\hat{\T}_{2}^{(i)})^T \mathbf{1}^{s_{(i),2}} )$ and $w^T ((\hat{\T}_{1}^{(i)})^T \mathbf{1}^{s_{(i),1}}) \otimes(  (\mathbf{1}^{s_{(i),2}})^T \hat{\T}_{2,U}^{(i)})$ quickly. By reshaping the Kronecker products \cite{van2000ubiquitous}, we have 
$w^T \allowbreak(\mathbf{1}^{s_{(i),1}})^T \hat{\T}_{1}^{(i)} \allowbreak \otimes((\hat{\T}_{2,U}^{(i)})^T \mathbf{1}^{s_{(i),2}} )  =  ((\hat{\T}_{2,U}^{(i)})^T \mathbf{1}^{s_{(i),2}}w^T (\hat{\T}_{1}^{(i)})^T \mathbf{1}^{s_{(i),1}} )^T$
Now $w^T (\hat{\T}_{1}^{(i)})^T$ can be computed in $O(\nnz(\T_1^{(i)}))$ time, at which point $w^T (\hat{\T}_{1}^{(i)})^T \mathbf{1}^{s_{(i),1}}$ can be computed in $O(s_{(i),1}) = \allowbreak O( \allowbreak\nnz(\T_1^{(i)}))$ time. Next, we can compute $\mathbf{1}^{s_{(i),2}}(w^T (\hat{\T}_{1}^{(i)})^T \allowbreak \mathbf{1}^{s_{(i),1}})$ in  $O(s_{(i),2})$ $= \allowbreak O(\allowbreak \nnz(\T_2^{ (i)}))$ time. Finally, $(\hat{\T}_{2,U}^{(i\allowbreak )})^T \allowbreak ( \mathbf{1}^{s_{(i),2}}w^T  \cdot \allowbreak (\hat{\T\allowbreak }_{1}^{(i)})^T \allowbreak \mathbf{1}^{s_{(i),1}})$  can be computed in $ O(\nnz(\T_2^{(i)}))$ time. A similar argument holds for computing $w^T (\mathbf{1}^{s_{(i),1}})^T \allowbreak\hat{\T}_{1,U}^{(i)} \allowbreak \otimes((\hat{\T}_{2}^{(i)})^T\allowbreak  \mathbf{1}^{s_{(i),2}} )$, which completes the proof, noting that $\sum_{i \in \mathcal{B}} \nnz(\T_1^{(i)}) + \nnz(\T_2^{(i)})  = \nnz(\T_1) + \nnz(\T_2)$.

\end{proof}

We now state our main theorem for machine precision regression. We remark again that the success probability can be boosted to $1-\delta$ by boosting the success probability of the corresponding subspace embedding to $1-\delta$, as described earlier.  \\

\textbf{Theorem \ref{thm:reg} } [In-Database Regression (Theorem \ref{thm:reg})] {\it  
	 Suppose  $\J\allowbreak =\T_1 \Join \T_2 \in \R^{n \times d}$ is a join of two tables, where $T_1 \in \R^{n_1 \times d_1}, T_2\allowbreak  \in \R^{n_2 \times d_2}$. Let $\U \subset [d]$ be any subset, and let $\J_U \in \R^{N \times |U|}$ be $\J$ restricted to the columns in $U$, and let $b \in \R^n$ be any column of the join $\J$. Then there is an algorithm which returns $\hat{x} \in \R^{|U|}$ such that with probability $9/10$ we have
	\[     \|\J_{U} \hat{x} - b\|_2 \leq (1+\eps) \min_{x \in \R^{|U|}}\|\J_{U} x - b\|_2 \]
	The runtime required to compute $\hat{x}$ is the minimum of \\ %$\tilde{O}\left( ((n_1 + n_2)d  +  d^{3}) \log(1/\eps) \right)$ and  \\$\tilde{O}\left( (\nnz(\T_1) + \nnz(\T_2)  +d^5 ) \log(1/\eps) \right)$.
	$\tilde{O}\left( ((n_1 + n_2)d  +  d^{3}) + (\nnz(\T_1) + \nnz(\T_2) + d^2)\log(1/\eps) \right)$ and  \\$\tilde{O}\left( d^5 + (\nnz(\T_1) + \nnz(\T_2) + d^2)\log(1/\eps) \right)$.
}	
%\end{theorem} 

\begin{proof}
The following argument follows a standard reduction from having a subspace embedding to obtaining high precision regression (see, e.g., Section 2.6 of \cite{woodruff2014sketching}). 
We first compute a subspace embedding $\tilde{\J}_ = \S^* \J$ for $\J$ via Theorem \ref{thm:subspace} with precision parameter $\eps_0 = \Theta(1)$, so that $\S^* \J$ has $k= \tilde{O}(d^2 \gamma^2)$ rows. Note that in particular this implies that  $\S^* \J_{U}$ is an $\eps_0$-subspace embedding for $\J_U$. 
We then generate an OSNAP matrix $\W \in \R^{\tilde{\Theta}(d) \times k}$ via Lemma \ref{lem:OSNAP} with precision $\eps_0$, and condition on the fact that $\W \S^* \J_U$ is an $\eps_0$-subpsace embedding for $\S^*\J_U$, which holds with large constant probability, from which it follows that $\W \S^* \J_{U}$ is an $O(\eps_0)$-subspace embedding for $\J_{U}$. We then compute the QR factorization $\W \S^* \J_{U} = \Q\RR^{-1}$, which can be done in $O(d^{\omega})$ time via fast matrix multiplication \cite{demmel2007fast}. By standard arguments \cite{woodruff2014sketching}, the matrix $\J_{U} \RR$ is now $O(1)$-well conditioned -- namely, we have $\sigma_{\max}(\J_{U}\RR)/\sigma_{\min}(\J_{U}\RR) = O(1)$. 
Given this, we can apply the gradient descent update $x_{t+1} \leftarrow x_t + \RR^T \J^T_{U}(b - \J_{U} \RR x_t)$, which can be computed in $O(\nnz(\T_1) + \nnz(\T_2) + d^2)$ time via Proposition \ref{prop:fastprod} (note we compute $\RR x_t$ in $O(d^2)$ time first, and then compute $(\J_{U} \RR x_t)$. Here we use the fact that $b - \J_{U} \RR x_t = \J w$ for some $w$ which is a vector in the column span of $\J$, and moreover, we can determine the value of $w$ from computing $\RR x_t$ and noting that $b = \J e_{j^*}$, where $j^* \in[d]$ is the index of $b$ in $\J$. Since $ \RR^T \J^T_{U}$ is now well -conditioned, gradient descent now converges in $O(\log 1/\eps)$ iterations given that we have a constant factor approximation $x_0$ \cite{woodruff2014sketching}. Specifically, it suffices to have an $x_0 \in \R^d$ such that $\|\J_U x_0 - b\|_2 \leq (1+\eps_0) \min_{x \in \R^d} \| \J_U x - b\|_2$. But recall that such an  $x_0$ can be obtained by simply solving
$x_0 = \arg \min_x \|\S^*\J_{U} x - \S^*b\|_2$,
using the fact that $\S^*$ is an $\eps_0$- subspace embedding for the span of $\J$, which completes the proof of the theorem. 
\end{proof}

\subsection{Proof of Lemma \ref{lem:twotablemain}}\label{app:twotablemain}
\begin{algorithm}[h]
\caption{Pre-processing step for fast $\ell_2$ sampling from rows of $\J \cdot \Y$, where  $\J=\T_1 \Join \T_2$ and $\Y \in \R^{d \times r}$. } \label{alg:L2presample}
\begin{algorithmic}[1]
\STATE
		For each block $i \in \mathcal{B}$, compute $a_{(i)} = \hat{T}_1^{(i)} \Y$ and $b_{(i)} = \hat{T}_2^{(i)}$ 
		\STATE
			For each block $i \in \mathcal{B}$,  construct a binary tree $\tau_{(i)}(\T_1), \tau_{(i)}(\T_2)$ as follows:
			
		%	\begin{itemize}[topsep=0pt,itemsep=-1ex,partopsep=1ex,parsep=1ex]
		\STATE
			   Each node $v \in \tau_{(i)}(\T_j)$ is a vector $v \in \R^{3 r}$
	    \STATE
			   If $v$ is not a leaf, then  $v = v_{\text{lchild}} + v_{\text{rchild}}$ with $v_{\text{lchild}},v_{\text{rchild}}$ the left and right children of $v$.
	    \STATE
			    The leaves of $\tau_{(i)}(\T_j)$ are given by the set
			    \[     \{  v_{l}^{(i),j} =\left( v_{l,1}^{(i),j}, v_{l,2}^{(i),j}, \dots,v_{l,r}^{(i),j}  \right)\in \R^{3r} \; | \; l \in[ s_{(i),j}] \}\]
			   \noindent where $v_{l,q}^{(i),1} = (1,2(a_{(i)})_{ l,q}, (a_{(i)})_{ l,q}^2 ) \in \R^3$ and  $v_{l,q}^{(i),2} = ( (b_{(i)})_{ l,q}^2,(b_{(i)})_{ l,q},1 ) \in \R^3$. 
			%\end{itemize}
		\STATE
			Compute the values $\langle \text{root}(\tau_i(\T_1)) , \text{root}(\tau_i(\T_2))\rangle$ for all $i \in \mathcal{B}$, and also compute the sum $\sum_{i} \langle \text{root}(\tau_i(\T_1)) , \text{root}(\tau_i(\T_2))\rangle$.
				
				\end{algorithmic}
				\end{algorithm}
			\begin{algorithm}[h]
			\caption{Sampling step for fast $\ell_2$ sampling from rows of $\J \cdot \Y$. } \label{alg:L2sample}
            \begin{algorithmic}[1]
            \STATE
			Sample a block $i \in \mathcal{B}$ with probability \[\frac{\left\langle \text{root}(\tau_i(\T_1)) , \text{root}(\tau_i(\T_2))\right\rangle}{\sum_{j} \left\langle \text{root}(\tau_j(\T_1)) , \text{root}(\tau_j(\T_2))\right\rangle}\]
			\STATE
			Sample $ l_1 \in[ s_{(i),1}]$ with probability \[p_{l_1} = \frac{\left\langle v^{(i),1}_{l_1} , \text{root}(\tau_i(\T_2))\right\rangle}{\sum_{l} \left\langle  v^{(i),1}_{l} , \text{root}(\tau_j(\T_2))\right\rangle}\]	\label{line:l1}
			\STATE
				Sample $ l_2 \in[ s_{(i),2}]$ with probability \[p_{l_2 \; | \; l_1} = \frac{\left\langle v^{(i),1}_{l_1} , v^{(i),2}_{l_2} \right\rangle}{\sum_{l} \left\langle  v^{(i),1}_{l_1} , v^{(i),2}_{l} \right\rangle}\]	\label{line:l2}
			\STATE
	Return the row corresponding to $(l_1,l_2)$ in block $i$.
		
		\end{algorithmic}
		\end{algorithm}
We start our proof by showing Algorithm \ref{alg:L2sample} can sample one row with probability according to the $\ell_2$ norm quickly after running the pre-processing Algorithm \ref{alg:L2presample}.
\begin{lemma}\label{lem:sampfast}
 Let $\J = \T_1 \Join \T_2 \in \R^{N \times d}$ be any arbitrary join on two tables, with $\T_i \in \R^{n_i \times d_i}$, and fix any $\Y \in \R^{d \times t}$. Then Algorithms \ref{alg:L2presample} and \ref{alg:L2sample}, after an $O(( \nnz(\T_1) + \nnz(\T_2)) (t + \log N))$-time pre-processing step (Algorithm \ref{alg:L2presample}), can produce samples $i^* \sim \mathcal{D}_{\Y}$ (Algorithm \ref{alg:L2sample}) from the distribution $\mathcal{D}_{\Y}$ over $[N]$ given by 
     \[     \text{Pr}_{i^* \sim \mathcal{D}_{\Y}} \left[ i^* = j \right] = \frac{\|(\J \cdot \Y)_{j,*}\|_2^2}{ \|\J \cdot \Y\|_F^2   } \] 
     such that each sample is produced in $O(\log N)$ time. 
 \end{lemma}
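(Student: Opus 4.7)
The plan is to split the argument into correctness (Algorithm~\ref{alg:L2sample} produces exactly the $\ell_2^2$-row distribution on $\J\Y$) and runtime (the preprocessing and per-sample budgets). Both hinge on one algebraic identity about the leaves of the trees. Writing $a_{(i)} = \hat{\T}_1^{(i)}\Y$ and $b_{(i)} = \hat{\T}_2^{(i)}\Y$ (the latter is the natural reading of line~1 of Algorithm~\ref{alg:L2presample}), the $(l_1,l_2)$-th row of $\J^{(i)}\Y$ equals $a_{(i)}[l_1,*] + b_{(i)}[l_2,*]$, because every row of block $i$ of the join is the sum of a row from $\hat{\T}_1^{(i)}$ and one from $\hat{\T}_2^{(i)}$. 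Expanding the coordinates of $v_{l,q}^{(i),1}, v_{l,q}^{(i),2} \in \R^3$ directly gives
\begin{equation*}
\langle v_{l_1,q}^{(i),1}, v_{l_2,q}^{(i),2}\rangle \;=\; \bigl(a_{(i)}[l_1,q]\bigr)^2 + 2\, a_{(i)}[l_1,q]\, b_{(i)}[l_2,q] + \bigl(b_{(i)}[l_2,q]\bigr)^2,
\end{equation*}
so summing over $q \in [t]$ yields $\langle v_{l_1}^{(i),1}, v_{l_2}^{(i),2}\rangle = \|(\J^{(i)}\Y)_{(l_1,l_2),*}\|_2^2$.

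Chaining this identity up the trees completes correctness: since each internal node stores the sum of its children and inner products are bilinear, $\langle\text{root}(\tau_i(\T_1)), \text{root}(\tau_i(\T_2))\rangle = \sum_{l_1,l_2} \|(\J^{(i)}\Y)_{(l_1,l_2),*}\|_2^2 = \|\J^{(i)}\Y\|_F^2$, and summing over $i$ gives $\|\J\Y\|_F^2$. Thus the block-selection step of Algorithm~\ref{alg:L2sample} matches the block marginal $\|\J^{(i)}\Y\|_F^2/\|\J\Y\|_F^2$. An analogous computation, pulling $v_{l_1}^{(i),1}$ through the sum defining $\text{root}(\tau_i(\T_2))$ by bilinearity, shows that line~\ref{line:l1} draws $l_1$ with the correct conditional marginal $\sum_{l_2}\|(\J^{(i)}\Y)_{(l_1,l_2),*}\|_2^2/\|\J^{(i)}\Y\|_F^2$, and line~\ref{line:l2} draws $l_2$ with the correct conditional $\|(\J^{(i)}\Y)_{(l_1,l_2),*}\|_2^2/\sum_{l}\|(\J^{(i)}\Y)_{(l_1,l),*}\|_2^2$. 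The product of the three probabilities is the target joint distribution.

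The runtime of Algorithm~\ref{alg:L2presample} is dominated by (i) computing all $a_{(i)}, b_{(i)}$ in $O((\nnz(\T_1)+\nnz(\T_2))\,t)$ time with one pass over the nonzeros, (ii) building trees whose total node count across all $i$ is $O(n_1+n_2)$, each carrying an $O(t)$-sized additive vector summed from its children in $O(t)$ time, and (iii) an $O(\log N)$ sorting/hashing factor to partition rows by join-key into blocks and locate each block's tree; these sum to the claimed $O((\nnz(\T_1)+\nnz(\T_2))(t+\log N))$. The main obstacle is getting per-sample cost down to $O(\log N)$: block selection is a single binary search against precomputed cumulative weights, and descent of $\tau_i(\T_1)$ against the \emph{fixed} vector $\text{root}(\tau_i(\T_2))$ can be done by precomputing the scalars $\langle v, \text{root}(\tau_i(\T_2))\rangle$ for every node $v \in \tau_i(\T_1)$ at no asymptotic extra cost, reducing line~\ref{line:l1} to $O(\log N)$ scalar comparisons. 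The subtle step is line~\ref{line:l2}, where the query vector $v_{l_1}^{(i),1}$ is only revealed after line~\ref{line:l1}, so its partial sums cannot be precomputed; the fix is that bilinearity lets us compute each of the $O(\log N)$ branching probabilities on-the-fly from one $O(t)$-sized inner product against the child's stored vector, consistent with the $O(\log N)$ per-sample bound once the factor of $t$ is absorbed into the $\tilde O$ convention used throughout the paper.
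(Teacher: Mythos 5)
Your proposal is correct and follows essentially the same route as the paper's proof: the same leaf identity $\langle v_{l_1}^{(i),1}, v_{l_2}^{(i),2}\rangle = \|(\J^{(i)}\Y)_{(l_1,l_2),*}\|_2^2$, the same factorization of the target distribution into block marginal and two conditionals, and the same top-down tree descent for sampling. Your reading of $b_{(i)} = \hat{\T}_2^{(i)}\Y$ is the intended one, and your explicit accounting of the $O(t)$ cost per level in the line-\ref{line:l2} descent (absorbed into $\tilde{O}$ since $t = \Theta(\log N)$ in the application) is a point the paper's own runtime discussion glosses over.
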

 \begin{proof}
 We begin by arguing the correctness of Algorithms \ref{alg:L2presample} and \ref{alg:L2sample}. Let $j^*$ be any row of $\J \cdot \Y$. Note that the row $j^*$ corresponds to a unique block $i \in \mathcal{B} = \mathcal{B}(\J)$, and two rows $l_1 \in [s_{(i),1}],l_2 \in [s_{(i),2}]$, such that $\J_{j^*,*} = (\hat{T}_1)_{l_1',*} + (\hat{T}_2)_{l_2',*}$, where $l_1' \in [n_1], l_2' \in [n_2]$ are the indices which correspond to  $l_1,l_2$. For $l \in [s_{(i),j}]$, let $v_l^j, v^j_{l,q}$ be defined as in Algorithm \ref{alg:L2presample}. We first observe that if $j^*$ corresponds to the block $i \in \mathcal{B}$, then $\langle v^{(i),1}_{l_1} , v^{(i),2}_{l_2} \rangle = \|(\J \Y)_{j^*,*}\|_2^2$, since 
 \begin{equation}
     \begin{split}
     & \langle v^{(i),1}_{l_1} , v^{(i),2}_{l_2} \rangle\\
     &= \sum_{q=1}^r v^{(i),1}_{l_1,q} \cdot v^{(i),2}_{l_2,q}\\
         &= \sum_{q=1}^r (a_{(i)})_{l_1,q}^2 +  2  (a_{(i)})_{l_1,q} (b_{(i)})_{l_2,q}+ (b_{(i)})_{l_2,q}^2 \\
         & =\sum_{q=1}^r \left( (a_{(i)})_{l_1,q}+ (b_{(i)})_{l_2,q}\right)^2 \\
         & = \|(\J \Y)_{j^*,*}\|_2^2
     \end{split}
 \end{equation}
 where for each block $i \in \mathcal{B}$, we compute $a_{(i)} = \hat{T}_1^{(i)} \Y$ and $b_{(i)} = \hat{T}_2^{(i)}$ as defined in Algorithm \ref{alg:L2presample}. 
 Thus it suffices to sample a row $j^*$, indexed by the tuple $(i,l_1,l_2)$ where $i \in \mathcal{B}, l_1 \in [s_{(i),1}],l_2 \in [s_{(i),2}]$, such that the probability we sample $j^*$ is given by $p_{j^*} = \langle v^{(i),1}_{l_1} , v^{(i),2}_{l_2} \rangle/( \sum_{i',l_1',l_2'}\langle v^{(i'),1}_{l_1'} , v^{(i'),2}_{l_2'} \rangle  )$. 
 We argue that Algorithm \ref{alg:L2sample} does precisely this. First note that for any $i \in \mathcal{B}$, we have  
 \[\langle v^{(i),1}_{l_1} , \text{root}(\tau_i(\T_2))\rangle =\sum_{l_1\in [s_{(i),1}] ,l_2 \in [s_{(i),2}]}\langle v^{(i),1}_{l_1} , v^{(i),2}_{l_2} \rangle\] 
 Thus we first partition the set of all rows $j^*$ by sampling a block $i$ with probability $\frac{\left\langle \text{root}(\tau_i(\T_1)) , \text{root}(\tau_i(\T_2))\right\rangle}{\sum_{j} \left\langle \text{root}(\tau_j(\T_1)) , \text{root}(\tau_j(\T_2))\right\rangle}$, which is exactly the distribution over blocks induced by the $\ell_2$ mass of the blocks. 
 Conditioned on sampling $i \in \mathcal{B}$, it suffices now to sample $l_1,l_2$ from that block.
 To do this, we first sample $l_1$ with probability  $\langle v^{(i),1}_{l_1} , \text{root}(\tau_i(\T_2))\rangle/ \allowbreak (\sum_{l} \langle  v^{(i),1}_{l} , \text{root}(\tau_j(\T_2)) \allowbreak\rangle)$, which is precisely the distribution  over indices $l_1 \in [s_{(i),1}]$ induced by the contribution of $l_1$ to the total $\ell_2$ mass of block $i$. 
 Similarly, once conditioned on $l_1$, we sample $l_2$ with probability $\langle v^{(i),1}_{l_1} , v^{(i),2}_{l_2} \rangle/(\sum_{l} \langle  v^{(i),1}_{l_1} , v^{(i),2}_{l} \rangle)$, which is the distribution  over indices $l_2\in [s_{(i),2}]$ induced by the contribution of the row $(l_1,l_2)$, taken over all $l_2$ with $l_1$ fixed.
 Taken together, the resulting sample $j^* \cong (i,l_1,l_2)$ is drawn from precisely the desired distribution. 
 
 Finally, we bound the runtime of this procedure. First note that computing $a_{(i)} = \hat{T}_1^{(i)} \Y$ and $b_{(i)} = \hat{T}_2^{(i)}$ for all blocks $i$ can be done in $O(t (\nnz(\T_1) + \nnz(\T_2)))$ time, since each row of the tables $\T_1,\T_2$ is in exactly one of the blocks, and each row is multiplied by exactly $t$ columns of $\Y$. Once the $a_{(i)},b_{(i)}$ are computed, each tree $\tau_{(i)}(\T_j)$ can be computed bottom up in time $O(\log N)$, giving a total time of  $O( \log N(\nnz(\T_1) + \nnz(\T_2)))$ for all trees. Given this, the values $\left\langle \text{root}(\tau_i(\T_1)) , \text{root}(\tau_i(\T_2))\right\rangle$ can be computed in less than the above runtime. Thus the total pre-processing time is bounded by $O((t+ \log N) (\nnz(\T_1) + \nnz(\T_2)))$ as needed. 
For the sampling time, it then suffices to show that we can carry out Lines 2 and 3 in  $O(\log N)$ time. But these samples can be samples from the root down, by first computing $\langle \text{root}_{\text{lchild}}(\tau_i(\T_1)) , \allowbreak \text{root}(\tau_i(\T_2))\rangle$ and $\langle \text{root}_{\text{rchild}}(\tau_i(\T_1)) , \allowbreak \text{root}(\tau_i(\T_2))\rangle$, sampling one of the left or right children with probability proportional to its size, and recursing into that subtree. Similarly, $l_2$ can be sampled by first computing $\langle v^{(i),1}_{l_1} , \text{root}_{\text{lchild}}(\tau_i(\T_2)) \rangle$	and $\langle v^{(i),1}_{l_1} , \allowbreak \text{root}_{\text{rchild}}(\tau_i(\T_2)) \rangle$ sampling one of the left or right children with probability proportional to its size, and recursing into that subtree. This completes the proof of the $O( \log N)$ runtime for sampling after pre-processing has been completed. 
 \end{proof}
 
Then we show how we can construct $\S$ by invoking Algorithm \ref{alg:L2presample} and \ref{alg:L2sample}.
 
\begin{lemma}\label{lem:makeS}
%Let $\J_{\text{small}} \in \R^{n_{\text{small}} \times d}$ be the matrix \allowbreak constructed as in Algorithm \ref{alg:1} in the dense case. Then the diagonal sampling matrix $\S$, as define within lines \ref{line:3} through \ref{line:6} of Algorithm \ref{alg:1}, setting the parameter $\gamma = 1$ in Algorithm \ref{alg:1} for the dense case, can be constructed in time $\tilde{O}((n_1 + n_2) d + d^{\omega}/\eps^2)$. In the sparse case, when the parameter $\gamma$ is set to be $\gamma =d$, the runtime to construct $\S$ is instead $\tilde{O}( \nnz(\T_1) + \nnz(\T_2) + d^{6}/\eps^2)$.
Let $\J_{\text{small}} \in \R^{n_{\text{small}} \times d}$ be the matrix \allowbreak constructed as in Algorithm \ref{alg:1} in the dense case. Then the diagonal sampling matrix $\S$, as defined within lines 4 through 8 of Algorithm \ref{alg:1}, can be constructed in time $\tilde{O}(\nnz(\T_1) + \nnz(\T_2) + d^2\gamma^2 / \eps^2)$.
\end{lemma}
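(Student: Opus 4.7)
The plan is to bound the running time of each sub-step (lines 4--8) of Algorithm \ref{alg:1} separately and confirm that each fits within $\tilde{O}(\nnz(\T_1)+\nnz(\T_2)+d^2\gamma^2/\eps^2)$. The central difficulty is that $\J_{\text{small}}$ can have up to $n_{\text{small}}=O((n_1+n_2)d\gamma)$ rows by Proposition \ref{prop:nbound}, so line 7 cannot afford to iterate over every row and line 8 cannot flip one coin per row. The way to achieve the advertised runtime is to implement the Bernoulli sampling of line 8 implicitly via the $\ell_2$-sampler of Algorithms \ref{alg:L2presample}--\ref{alg:L2sample}, never instantiating the full vector $(\tilde{\tau}_i)_i$.

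First I would dispose of the easier steps. Drawing the uniform sample $U$ in line 4 takes $\tilde{O}(md)$ time: pick each block $i\in\mathcal{B}_{\text{small}}$ with probability $s_{(i)}/n_{\text{small}}$ using a cumulative distribution built from the already-computed block sizes, then pick a row within the block uniformly at cost $O(d)$. With $m=\Theta((n_1+n_2)/\gamma)$, this is $\tilde{O}((n_1+n_2)d/\gamma)$, which is dominated by $\nnz(\T_1)+\nnz(\T_2)$ in both the dense case ($\gamma=1$, so $(n_1+n_2)d=\Theta(\nnz)$) and the sparse case ($\gamma=d$, so $(n_1+n_2)d/\gamma=n_1+n_2\le\nnz$). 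Applying the OSNAP $\W$ of line 5 at constant precision $\eps_0=\Theta(1)$ takes $\tilde{O}(\nnz(\tilde{\J}_{\text{small}}))\le\tilde{O}(md)$ time, similarly absorbed, and computing the SVD of the resulting $\tilde{O}(d)\times d$ matrix takes $\tilde{O}(d^\omega)$ time, absorbed by $d^2\gamma^2/\eps^2$. Generating the Gaussians in line 6 and explicitly forming $\Y=\V\Sigma^{-1}\G\in\R^{d\times t}$ and $h=(\mathbb{I}_d-\V\V^T)g\in\R^d$ (with $t=\Theta(\log N)$) takes $O(d^2\log N)$ time.

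The key step is the joint implementation of lines 7 and 8 without enumerating the rows of $\J_{\text{small}}$. I would argue that with high probability $\tilde{\J}_{\text{small}}$ has the same column span as $\J_{\text{small}}$ (via a matrix-Chernoff bound for uniform sampling, which for $m\gg d$ preserves rank and also yields a constant-factor spectral approximation on the span), so $\alpha_i=0$ for every row and $\tilde{\tau}_i=\|(\J_{\text{small}})_{i,*}\Y\|_2^2$ is simply the squared $\ell_2$-norm of the $i$-th row of the implicit matrix $\J_{\text{small}}\Y$. I would then invoke Algorithm \ref{alg:L2presample} on the join $\T_1\Join\T_2$ restricted to blocks in $\mathcal{B}_{\text{small}}$ with target matrix $\Y$: by Lemma \ref{lem:sampfast}, preprocessing runs in $\tilde{O}((\nnz(\T_1)+\nnz(\T_2))(t+\log N))=\tilde{O}(\nnz(\T_1)+\nnz(\T_2))$ time, and each subsequent $\ell_2$-row-sample of $\J_{\text{small}}\Y$ takes $O(\log N)$ time. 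By the standard with-replacement reduction for leverage-score sampling, drawing $N_s=\Theta((\log d/\eps^2)\sum_i\tilde{\tau}_i)$ independent samples and placing each into $\S$ with the appropriate $1/\sqrt{p_i}$ rescaling realizes Bernoulli inclusion probabilities dominating those required by line 8, which suffices for the subspace-embedding property.

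The main obstacle, and the last step, is bounding $\sum_i\tilde{\tau}_i=\tilde{O}(d^2\gamma^2)$, which controls $N_s$. Proposition \ref{prop:tausubspace} together with the Johnson--Lindenstrauss property of $\G$ (with $t=\Theta(\log N)$ Gaussian columns) gives $\tilde{\tau}_i=(1\pm o(1))\,\tau_i^{\tilde{\J}_{\text{small}}}(\J_{\text{small}})$. For the uniform un-reweighted subsample $\tilde{\J}_{\text{small}}$ of $m$ out of $n_{\text{small}}$ rows, the same matrix-Chernoff bound yields $\tilde{\J}_{\text{small}}^T\tilde{\J}_{\text{small}}\approx(m/n_{\text{small}})\,\J_{\text{small}}^T\J_{\text{small}}$ on the common column span, so $\sum_i\tau_i^{\tilde{\J}_{\text{small}}}(\J_{\text{small}})=O((n_{\text{small}}/m)\cdot d)$, and substituting $n_{\text{small}}/m\le d\gamma^2$ yields $\sum_i\tilde{\tau}_i=O(d^2\gamma^2)$. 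Consequently $N_s=\tilde{O}(d^2\gamma^2/\eps^2)$ and the sampling phase contributes $\tilde{O}(d^2\gamma^2/\eps^2)$ time, so the total cost of constructing $\S$ is $\tilde{O}(\nnz(\T_1)+\nnz(\T_2)+d^2\gamma^2/\eps^2)$ as claimed.
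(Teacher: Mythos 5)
Your treatment of the routine steps (uniform row sampling via block sizes, OSNAP, SVD, forming $\V\Sigma^{-1}\G$, and invoking Algorithms \ref{alg:L2presample}--\ref{alg:L2sample} to $\ell_2$-sample rows of $\J_{\text{small}}\V\Sigma^{-1}\G$) matches the paper. But there is a genuine gap at the center of your argument: you claim that, by a matrix-Chernoff bound, the uniform un-reweighted subsample $\tilde{\J}_{\text{small}}$ of $m$ out of $n_{\text{small}}$ rows has the same column span as $\J_{\text{small}}$ and satisfies $\tilde{\J}_{\text{small}}^T\tilde{\J}_{\text{small}}\approx(m/n_{\text{small}})\J_{\text{small}}^T\J_{\text{small}}$, so that $\alpha_i=0$ for every row. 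This is false in general. Uniform sampling gives a spectral approximation (or even rank preservation) only under an incoherence assumption on the leverage scores of $\J_{\text{small}}$, which is not available here: a single row of $\J_{\text{small}}$ that alone spans some direction will be missed by a uniform sample of $m\ll n_{\text{small}}$ rows with overwhelming probability. The entire point of the generalized leverage scores of \cite{cohen2015uniform}, and of the $\alpha_i>0$ branch in line 7 of Algorithm \ref{alg:1}, is to catch exactly these rows and force $\tilde{\tau}_i=1$ for them. Your proposal silently deletes this case, and with it the part of the construction that makes the lemma true without any incoherence hypothesis. The bound $\sum_i\tilde{\tau}_i=O(n_{\text{small}}d/m)=O(d^2\gamma^2)$ is correct, but it comes from Theorem 1 of \cite{cohen2015uniform} (which bounds the sum of generalized leverage scores with respect to a uniform undersample unconditionally), not from a spectral approximation that does not hold.

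Once the $\alpha_i>0$ rows are acknowledged, a second algorithmic issue arises that your proposal does not address: how to locate them in the stated time budget, given that one cannot enumerate the rows of $\J_{\text{small}}$. The paper's resolution is to observe that each such row contributes $1$ to $\sum_i\tilde{\tau}_i=O(d^2\gamma^2)$, so there are at most $O(d^2\gamma^2)$ of them, and they can all be found deterministically by running Algorithm \ref{alg:L2sample} on the product $\J_{\text{small}}(\mathbb{I}_d-\V\V^T)g$ and enumerating every computation path of the sampling trees that has non-zero probability, each path costing $O(\log N)$. These rows are then included with $p_j=1$ (taking precedence over the $\ell_2$-sampled rows), while the $\ell_2$-sampled rows get their exact inclusion probabilities read off from the tree inner products. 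You would need to add both of these ingredients --- the unconditional bound from \cite{cohen2015uniform} in place of the matrix-Chernoff claim, and the deterministic enumeration of the out-of-span rows --- for the proof to go through.
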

%\peng{the running time should be $\tilde{O}(\nnz(\T_1) + \nnz(\T_2) + d^2\gamma^2 / \eps^2)$}
\begin{proof}

We first show how we can quickly construct the matrix $\tilde{\J}_{\text{small}}$, which consists of $m= \Theta((n_1 + n_2)/\gamma)$ uniform samples from the rows of $\J_{\text{small}}$. First, to sample the rows uniformly, since we already know the size of each block $s_{(i)}$, we can first sample a block $i \in \mathcal{B}_{\text{small}}$ with probability proportional to its size, which can be done in $O(\log(|\mathcal{B}_{\text{small}}|) )= O(\log N)$ time after the $s_{(i)}$'s are computed. Next, we can sample a row uniformly from $T_{i}^j$ for each $j \in [2]$, and output the join of the two chosen rows, the result of which is a truly uniform row from $\J_{\text{small}}$. Since we need $m$ samples, and each sample has $d$ columns, the overall runtime is $\tilde{O}((n_1 + n_2)d/\gamma)$ to construct $\tilde{\J}_{\text{small}}$, which is $O(\nnz(\T_1) + \nnz(\T_2) )$ in the sparse case.

Once we have $\tilde{\J}_{\text{small}}$, we compute in line \ref{line:4} of Algorithm \ref{alg:1} the sketch $\W \cdot  \tilde{\J}_{\text{small}}$, where $\W \in \R^{t \times d}$ is the OSNAP Transformation of Lemma \ref{lem:OSNAP} with $\eps = 1/100$, where $t = \tilde{O}(d)$, which we can compute in $\tilde{O}(md) = \tilde{O}((n_1 + n_2)d/\gamma)$ time by Lemma  \ref{lem:OSNAP}. Given this sketch $\W \cdot  \tilde{\J}_{\text{small}} \in \R^{t \times d}$, the SVD of the sketch can be computed in time $O(d^\omega)$ \cite{demmel2007fast}, where $\omega < 2.373$ is the exponent of fast matrix multiplication. Since $\W   \tilde{\J}_{\text{small}}$ is a $1/100$ subspace embedding for $\tilde{\J}$ with probability $99/100$ by Lemma \ref{lem:OSNAP}, by Proposition \ref{prop:tausubspace} we have $\tau^{\W   \tilde{\J}_{\text{small}}}(\AA) = (1 \pm 1/100)\tau^{\tilde{\J}_{\text{small}}}(\AA)$ for any matrix $\AA$. Next, we can compute $\V \Sigma\G$ in the same $O(d^\omega)$ runtime, where $\G \in \R^{d \times t }$ is a Gaussian matrix with $t= \Theta( \log N)$ and with entries drawn independently from $\mathcal{N}(0,1/t^2)$. By standard arguments for Johnson Lindenstrauss random projections (see, e.g., Lemma 4.5 of \cite{li2013iterative}), we have that $\|( x^T \G)\|_2^2 =(1 \pm 1/100) \|x\|_2^2$ for any fixed vector $x \in \R^d$ with probability at least $1-n^{-c}$ for any constant $c \geq 1$ (depending on $t$). 

We now claim that $\tilde{\tau_i}$ as defined in Algorithm \ref{alg:1} satisfies $C^{-1}  \allowbreak\tau_i^{\tilde{\J}_{\text{small}}} (\J_{\text{small}}) \leq \tilde{\tau_i} \leq C\tau_i^{\tilde{\J}_{\text{small}}} (\J_{\text{small}}) $ for some fixed constant $C \geq 1$. 
As noted above, $\tau_i^{\tilde{\J}_{\text{small}}} (\J_{\text{small}}) = (1 \pm 1/100)\tau^{\W   \tilde{\J}_{\text{small}}}(\AA)$, so it suffices to show $C^{-1} \tau_i^{\W \tilde{\J}_{\text{small}}} (\J_{\text{small}}) \allowbreak \leq \tilde{\tau_i} \allowbreak \leq C\tau_i^{\W\tilde{\J}_{\text{small}}} (\J_{\text{small}})$.
To see this, first note that if $(\J_{\text{small}})_{i,*}$ is contained within the row span of $\tilde{\J}_{\text{small}}$, then
\begin{equation}
    \begin{split}
        \tau_i^{\tilde{\J}_{\text{small}}} (\J_{\text{small}}) &=  \|(\J_{\text{small}})_{i,*} \V \Sigma^{-1}\|_2^2  \\
        &= (1 \pm 1/100) \|(\J_{\text{small}})_{i,*} \V \Sigma^{-1} \G\|_2^2 \\
        &=  (1 \pm 1/100) \tilde{\tau_i}
    \end{split}
\end{equation}
Thus it suffices to show that if $(\J_{\text{small}})_{i,*}$ has a component outside of the span of $\tilde{J}_{\text{small}}$, then we have $\|(\J_{\text{small}})_{i,*} \allowbreak (\mathbb{I}_{d} - \V \V^T ) g \|_2^2 > 0$. 
To see this, note that $\left(\mathbb{I}_{d} - \V \V^T \right)$ is the projection onto the orthogonal space to the span of  $\tilde{\J}_{\text{small}}$. Thus $ \left(\mathbb{I}_{d} - \V \V^T \right) g$ is a random non-zero vector in the orthogonal space of $\tilde{\J}_{\text{small}}$, thus $(\J_{\text{small}})_{i,*} \left(\mathbb{I}_{d} - \V \V^T \right) g \allowbreak \neq 0$ almost surely if  $(\J_{\text{small}})_{i,*}$ has a component outside of the span of $\tilde{\J}_{\text{small}}$, which completes the proof of the claim.

%\textbf{TODO: from here}\\
%Now by Theorem 1 of \cite{cohen2015uniform}, we have $\sum_i \tilde{\tau}_i \leq \frac{n_{\text{small}}d}{m} =O(d^2 \gamma^2)$, thus in order to sample each row $i$ of $\J_{\text{small}}$, with probability $p_i \geq \min \{1, \frac{\log d}{\eps^2} \tilde{\tau}_i \}$, it suffices to sample $\alpha = O(d^2 \gamma^2 \frac{\log d}{\eps^2})$ samples from the distribution given by $\tilde{\tau}_i / \sum_{i'} \tilde{\tau}_{i'}$.

Finally, and most significantly, we show how to implement line 8 %\ref{line:6} 
of Algorithm \ref{alg:1}, which carries out the the construction of  $\S$. Given that $1/C  \tau_i^{\tilde{\J}_{\text{small}}} (\J_{\text{small}}) \leq \tilde{\tau_i} \leq C\tau_i^{\tilde{\J}_{\text{small}}} (\J_{\text{small}}) $, we can apply Theorem 1 of \cite{cohen2015uniform}, using that $n_{\text{small}} \leq (n_1 + n_2) d \cdot \gamma$ via Proposition \ref{prop:nbound}, which yields that $\sum_i  \tilde{\tau_i}   = O(n_{\text{small}}\frac{d}{m}) \allowbreak =   O(d^2 \gamma^2)$. Thus to construct the sampling matrix $\S$, it suffices to sample $\alpha = O(d^2 \gamma^2 \log d/\eps^2)$ samples from the distribution over the rows $i$ of $\J_{\text{small}}$ given by  $q_i = \frac{\tilde{\tau}_i}{\sum_i  \tilde{\tau_i} }$. We now describe how to accomplish this. 

We first show how to sample the rows $i$ with $(\J_{\text{small}})_{i,*} \allowbreak (\mathbb{I}_{d} - \V \V^T ) g = 0$.  To do this, it suffices to sample rows from the distribution induced by the $\ell_2$ norm of the rows of $(\J_{\text{small}}) \V \Sigma^{-1} \G$. To do this, we can simply apply Algorithms \ref{alg:L2presample} and \ref{alg:L2sample} to the product $\J_{\text{small}} \cdot ( \V \Sigma^{-1} \G)$. First note that we can do this because $\J_{\text{small}}$ itself is a join of $(\T_1)_{\text{small}}$ and $(\T_2)_{\text{small}}$, which are just $\T_1,\T_2$ with all the rows contained in blocks $i \in \mathcal{B}_{\text{big}}$ removed. Since $\V \Sigma^{-1} \G \in \R^{d \times t}$ for $t = \tilde{O}(1)$, by Lemma \ref{lem:sampfast} after $\tilde{O}(\nnz(\T_1) + \nnz(\T_2))$ time, for any $s \geq 1$ we can sample $s$ times independently from this induced $\ell_2$ distribution over rows in time $\tilde{O}(s)$. Altogether, we obtain the required $\alpha = O(d^2 \gamma^2 \log d/\eps^2)$ samples from the distribution over the rows $i$ of $\J_{\text{small}}$ given by  $q_i = \frac{\tilde{\tau}_i}{\sum_i  \tilde{\tau_i} }$ in the case that $(\J_{\text{small}})_{i,*} \left(\mathbb{I}_{d} - \V \V^T \right) g = 0$, with total runtime $\tilde{O}(\nnz(\T_1) + \nnz(\T_2) + d^2 \gamma^2  /{\eps^2})$.

Finally, for the case that $(\J_{\text{small}})_{i,*} \left(\mathbb{I}_{d} - \V \V^T \right) g > 0$, we can apply the same Algorithms \ref{alg:L2presample} and \ref{alg:L2sample} to sample from the rows of $\J_{\text{small}} \cdot \left(\mathbb{I}_{d} - \V \V^T \right) g$. First observe, using the fact that $\sum_i \tilde{\tau}_i \leq O(d^2 \gamma^2)$ by Theorem 1 of \cite{cohen2015uniform}, it follows that there are at most $O(d^2 \gamma^2)$ indices $i$ such that $\J_{\text{small}} \cdot \left(\mathbb{I}_{d} - \V \V^T \right) g > 0$. Thus, when applying Algorithm \ref{alg:L2sample} after the pre-processing step is completed, instead of sampling independently from the distribution induced by the norms of the rows, we can deterministically find all rows with $\J_{\text{small}} \cdot \left(\mathbb{I}_{d} - \V \V^T \right) g > 0$ in $\tilde{O}( d^2 \gamma^2)$ time, by simply enumerating over all computation paths of  Algorithm \ref{alg:L2sample} that occur with non-zero probability. Since there are $O( d^2 \gamma^2)$ such paths, and each one is carried out in $O(\log N)$ time by Lemma \ref{lem:sampfast}, the resulting runtime is the same as the case where $(\J_{\text{small}})_{i,*} \left(\mathbb{I}_{d} - \V \V^T \right) g = 0$.

Finally, we argue that we can compute exactly the probabilities $p_j$ with which we sampled a row $j$ of $\J_{\text{small}}$, for each $i$ that was sampled, which will be needed to determine the scalings of the rows of $\J_{\text{small}}$ that are sampled. For all the rows sampled with $(\J_{\text{small}})_{j,*} \left(\mathbb{I}_{d} - \V \V^T \right) g >0$, the corresponding value of $p_j$ is by definition $1$. Note that if a row was sampled in both of the above cases, then it should in fact have been sampled in the case that $(\J_{\text{small}})_{j,*} \left(\mathbb{I}_{d} - \V \V^T \right) g \allowbreak  >0$, so we set $p_j = 1$. For every row $j$ sampled via Algorithm \ref{alg:L2sample} when $(\J_{\text{small}})_{j,*} \left(\mathbb{I}_{d} - \V \V^T \right) g =0$, such that $j$ corresponds to the tuple $(i,l_1,l_2)$ where $i \in \mathcal{B}_{\text{small}}$ and $l_1 \in [s_{(i),1}], l_2 \in [s_{(i),2}]$, we can compute the probability it was sampled exactly via
 \[p_j = p_{(i,l_1,l_2)} = \frac{\langle v^{(i),1}_{l_1} , v^{(i),2}_{l_2} \rangle}{\sum_{c \in \mathcal{B}} \langle \text{root}(\tau_c(\T_1)) , \text{root}(\tau_c(\T_2))\rangle }\] using the notation in Algorithm \ref{alg:L2sample}. Then setting $\S_{j,j} = \frac{1}{\sqrt{p_j}}$ for each sampled row $j$ between both processes yields the desired construction of $\S$.
\end{proof}

With the above lemma in hand, we can give the proof of Lemma \ref{lem:twotablemain}.

\begin{proof}[Proof of lemma \ref{lem:twotablemain}]
As argued in the proof of Lemma \ref{lem:makeS}, we have that $(1/C)\tau_i^{\tilde{\J}_{\text{small}}} (\J_{\text{small}}) \leq \tilde{\tau_i} \leq C\tau_i^{\tilde{\J}_{\text{small}}} (\J_{\text{small}}) $ for some fixed constant $C \geq 1$ and all $i \in \mathcal{B}_{\text{small}}$. The result then follows immediately from Theorem 4 of \cite{cohen2015uniform}, using the fact that $n_{\text{small}}/m = \tilde{O}(d \gamma^2)$, where $m$ is the number of rows subsampled in $\tilde{\J}_{\text{small}}$ from $\J_{\text{small}}$. 
\end{proof}%Since each row of $\J_{\text{small}}$ has exactly $\S$ entries, after computing $\S$ the product $\S \J_{\text{small}}$ can be computed in the stated $O(d^3 \gamma^2 \log( d) / \eps^2 )$  time.

\section{General Join Queries}\label{sec:general}

\label{section:general}
Below we introduce DB-Sketch as a class of algorithms, and show how any oblivious sketching algorithm that has the properties of DB-Sketch can be implemented efficiently for data coming from a join query. Since the required properties are very similar to the properties of linear sketches for Kronecker products, we will be able to implement them inside of a database. Given that the statistical dimension can be much smaller than the actual dimensions of the input data, our time complexity for ridge regression can be significantly smaller than that for ordinary least squares regression, which is important in the context of joins of many tables.

In the following we assume the join query is acyclic; nevertheless, for cyclic queries it is possible to obtain the hypergraph tree decomposition of the join and create a table for each vertex in the tree decomposition by joining the input tables that are a subset of the vertex's bag in the hypergraph tree decomposition. One can then replace the cyclic join query with an acyclic query using the new tables. 

In our algorithm, we use FAQ and inside-out algorithms as subroutines. The definition of FAQ is given in Appendix \ref{appendix:database:background}. Let $\J'=\T_1 \Join \T_2 \Join \dots \Join \T_m$ be an acyclic join. Let $\rho$ be a binary expression tree that shows in what order the algorithm inside-out \cite{abo2016faq} multiplies the factors for an arbitrary single-semiring FAQ; meaning, $\rho$ has a leaf for each factor (each table) and $(m-1)$ internal nodes such that if two nodes have the same parent then their values are multiplied together during the execution of inside-out. We number the tables based on the order that a depth-first-search visits them in this expression tree. We let $\rho$ denote the multiplication order of $\J'$.

Now that the ordering of the tables is fixed, we reformulate the Join table $\J'$ so that it can be expressed as a summation of tensor products. Assign each column $c$ to one of the input tables that has $c$, and then let $E_i$ denote the columns assigned to table $\T_i$, $X_i$ be the projection of $X$ onto $E_i$, and $D_i$ be the domain of the tuples $X_i$ (projection of $\T_i$ onto $E_i$).

Letting $N=|D_1||D_2| \dots |D_m|$, we can reformulate $\J'$ as $\J\in \mathbb{R}^{N\times d}$ to have a row for any possible tuple $X \in D_1 \times \dots \times D_m$. If a tuple $X$ is present in the join, we put its value in the row corresponding to it, and if it is not present we put $0$ in that row. Note that $\J$ has all the rows in $\J'$ and also may have many zero rows; however, we do not need to represent $\J$ explicitly, and the sparsity of $\J$ does not cause a problem. One key property of this  formulation is that by knowing the values of a tuple $x$, the location of $x$ in $\J$ is well-defined. Also note that since we have only added rows that are $0$, any subspace embedding of $\J$ would be a subspace embedding of $\J'$, and for all vectors $x$, $\norm{\J x}_2^2 = \norm{\J'x}_2^2$. 

% Furthermore, we define $u(X_i)$ to be $N$ dimensional vector, that has a entry for any possible value of $y \in D_1 \times \dots \times D_m$ and it is $1$ in the entry $Y$ if $Y \in J'$ and $Y_i=X_i$, and $0$ otherwise. Similarly we def

% Given a Kronecker product of form $V = v_1 \otimes v_2 \otimes \dots \otimes v_m \in \mathbb{R}^{n_1 \dots n_q}$, 
% let $S: \mathbb{R}^{n \times d} \to \mathbb{R}^{k \times d}$ be a function that takes $V$ and returns a subspace embedding of $V$. We call $S$ DBSketch if for some functions $G$ and $F$ it can be written as $S(V)=G(F(V))$ and :.
Given a join query $\J$ with $m$ tables and its multiplication order $\rho$, an oblivious sketching algorithm is an $(m,\rho)$-DB-Sketch if there exists a function $F:\mathbb{R}^{n \times d} \to R_F$ where $R_F$ is the range of $F$ and $F(\AA)$ represents the sketch of $\AA$ in some form and has the following properties:
\begin{enumerate}
    \item $F(\AA_1 + \AA_2) = F(\AA_1) \oplus F(\AA_2)$ where $\oplus$ is a commutative and associative operator.
    \item For any $V$ resulting from a Kronecker product of matrices $\AA = \AA_1 \otimes \AA_2 \otimes \dots \otimes \AA_m \in \mathbb{R}^{n_1 \dots n_m}$, $F(\AA) = F_1(\AA_1) \odot F_2(\AA_2) \odot \dots \odot F_m(\AA_m)$ where $\odot$ is applied based on the ordering in $\rho$, and for all $i$, $F_i$ has the same range as $F$, and $F_i(X_1 + X_2) = F_i(X_1) \oplus F_i(X_2)$. Furthermore, it should be possible to evaluate $F_i(v_i)$ in time $O(T_f \nnz(v_i))$.
    \item For any values $A,B,C$ in the range of $F$, $A \odot (B \oplus C) = (A \odot B) \oplus (A \odot C)$
\end{enumerate}

\begin{theorem}
\label{thm:dbsketch}
Given a join query $\J'=\T_1 \Join \T_2 \Join \dots \Join \T_m$ and a DB-Sketch algorithm, there exists an algorithm to evaluate $F(\J')$ in time $O(m (T_{\odot} + T_{\oplus}) T_{\text{FAQ}} + T_f mnd)$ where $n$ is the size of the largest table and $T_{\text{FAQ}}$ is the time complexity of running a single semiring FAQ, while $T_{\odot}$ and $T_{\oplus}$ are the time complexities of $\odot$ and $\otimes$,  respectively.
\end{theorem}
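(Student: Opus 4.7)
The plan is to express $F(\J')$ as an $\oplus$-sum of $m$ FAQ computations over the semiring $(R_F, \oplus, \odot)$ induced by the DB-Sketch properties, one per column-block of the reformulated join matrix $\J$. First, I would decompose $\J$ columnwise as $\J = \J^{(1)} + \cdots + \J^{(m)}$, where $\J^{(j)}$ has zeros outside the columns in $E_j$. By the additivity property of $F$ (property~1), this yields
\[ F(\J) = F(\J^{(1)}) \oplus \cdots \oplus F(\J^{(m)}). \]

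For each $j$, the row of $\J^{(j)}$ at tuple $X = (X_1, \ldots, X_m) \in D_1 \times \cdots \times D_m$ is the $E_j$-padded vector $u_j(X_j)$ if $X$ is in the join and zero otherwise. Since each row position factors as $e_{X_1} \otimes \cdots \otimes e_{X_m}$, I can write
\[ \J^{(j)} = \sum_{X \in \text{join}} e_{X_1} \otimes \cdots \otimes e_{X_{j-1}} \otimes (e_{X_j} u_j(X_j)^T) \otimes e_{X_{j+1}} \otimes \cdots \otimes e_{X_m}, \]
and invoke property~2 (Kronecker factorization) to obtain
\[ F(\J^{(j)}) = \bigoplus_{X \in \text{join}} F_1(e_{X_1}) \odot \cdots \odot F_j(e_{X_j} u_j(X_j)^T) \odot \cdots \odot F_m(e_{X_m}). \]
Applying distributivity (property~3) then rearranges this into a SumProd FAQ over $(R_F, \oplus, \odot)$ with one factor per table: the factor for $\T_i$ (when $i \neq j$) encodes $F_i(e_{X_i})$ on rows present in $\T_i$ and the $\odot$-absorbing element on absent rows, while the factor for $\T_j$ additionally carries the payload $u_j(X_j)^T$. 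The inside-out algorithm (Theorem~\ref{thm:inside_out}) can then evaluate this FAQ directly on the acyclic query (with the cyclic case reduced as noted before the theorem).

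For the running time, the preprocessing step computes $F_i$-values on each row of each table for each of the $m$ choices of $j$; by property~2 one such evaluation takes $O(T_f \cdot \nnz)$ time, summing to $O(T_f\, m\, n\, d)$ across all tables, rows, and choices of $j$. Evaluating one FAQ on the semiring $(R_F, \oplus, \odot)$ via Theorem~\ref{thm:inside_out} costs $O((T_\odot + T_\oplus)\, T_{\text{FAQ}})$, and running it for all $m$ values of $j$ produces the main term $O(m\,(T_\odot + T_\oplus)\, T_{\text{FAQ}})$. The main obstacle is the Kronecker-sum-to-FAQ conversion in the second paragraph: one must verify that the join-membership indicator $\mathbf{1}[X \in \text{join}]$ factors across tables within the semiring $(R_F, \oplus, \odot)$, rather than only in the usual numerical semiring, so that the resulting expression is a genuine FAQ to which inside-out applies. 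This relies on treating the $\oplus$- and $\odot$-identities of $R_F$ correctly---so that an absent row of any $\T_i$ contributes the $\odot$-absorbing element and is naturally eliminated---and on the acyclicity of the reformulated query, which is what allows inside-out to respect the elimination order dictated by the multiplication tree $\rho$.
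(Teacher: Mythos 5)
Your proposal is correct and follows essentially the same route as the paper: decompose $\J$ column-block-wise into one piece per table, use additivity of $F$ to reduce to computing $F(\J_i)$ for each $i$, express each $\J_i$ as a sum over join tuples of Kronecker products of indicator vectors with one ``payload'' factor, and evaluate the resulting SumProd expression over the semiring $(R_F,\oplus,\odot)$ with inside-out, giving the stated $O(m(T_\odot+T_\oplus)T_{\text{FAQ}} + T_f mnd)$ bound. The one caveat you flag (that the expression be a genuine FAQ) is resolved in the paper by observing that the sum already ranges only over tuples $X\in\J'$ and that fixing the table ordering to the multiplication order $\rho$ lets inside-out apply without needing $\odot$ to be commutative or associative.
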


\begin{corollary}
\label{corollary:general:join}
For any join query $\J$ with multiplication order $\rho$ of depth $m-1$ and any scalar $\lambda$, let $d_\lambda$ be the $\lambda$-statistical dimension of $\J$. Then there exists an algorithm that produces $\S \in \R^{k\times n}$ where $k=O(d_\lambda m^4/\epsilon^2)$ in time $O((mkd)T_{\text{FAQ}} + m^6 n d/\epsilon^2)$ such that with probability $1-\frac{1}{\poly(n)}$ simultaneously for all $x\in \mathbb{R}^d$
\[     \|\S\J x\|_2^2 + \lambda \|x\|_2^2 = (1 \pm \eps)(\|\J x\|_2^2 + \lambda \|x\|_2^2).\] 
\end{corollary}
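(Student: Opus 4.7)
The plan is to instantiate Theorem \ref{thm:dbsketch} using the Fast Tensor-Sketch of Lemma \ref{lem:fastTenSketch} as the underlying DB-Sketch, and to inherit the approximation guarantee directly from Lemma \ref{lem:fastTenSketch}. Under the reformulation of Section \ref{sec:general}, $\J$ sits inside the ambient tensor-product space $\R^{|D_1|\cdots|D_m|}\otimes\R^d$, and each non-zero row of $\J$ corresponds to a tuple $X=(x_1,\dots,x_m)$ whose position vector is the Kronecker product $\chi_X=e_{x_1}\otimes\cdots\otimes e_{x_m}$. This is exactly the structure that Fast Tensor-Sketch is built to handle.

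The first step is to check that Fast Tensor-Sketch is an $(m,\rho)$-DB-Sketch. Since Fast Tensor-Sketch is the linear map $F(\AA)=\S\AA$, property (1) is immediate with $\oplus$ being ordinary addition of $k\times d$ matrices. Property (2) is the defining feature of Fast Tensor-Sketch: for a Kronecker product $\AA=\AA_1\otimes\cdots\otimes\AA_m$, the sketch $\S\AA$ factors as a cascade $F_1(\AA_1)\odot\cdots\odot F_m(\AA_m)$, where each $F_i$ is itself a (sparse, linear) hashing/lifting step and $\odot$ is the convolution-style combiner realized via FFT; the cascade order is dictated by $\rho$, matching the requirement. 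Each $F_i$ is linear, so $F_i(X_1+X_2)=F_i(X_1)\oplus F_i(X_2)$. Finally, property (3), distributivity of $\odot$ over $\oplus$, follows because $\odot$ is bilinear (convolution is bilinear).

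Given these three properties, Theorem \ref{thm:dbsketch} computes $F(\J)=\S\J$ in time $O(m(T_\odot+T_\oplus)T_{\mathrm{FAQ}}+T_f m n d)$. For Fast Tensor-Sketch, each intermediate sketch is a $k\times d$ matrix, so $T_\odot+T_\oplus=\tilde O(kd)$ (convolutions/additions of such matrices). Extracting the per-nonzero cost from the $\tilde O(\nnz(\AA_i)m^5/\eps^2)$ term of Lemma \ref{lem:fastTenSketch} gives $T_f=\tilde O(m^5/\eps^2)$. Substituting $k=O(d_\lambda m^4/\eps^2)$ from Lemma \ref{lem:fastTenSketch} yields exactly the claimed running time $O(mkd\cdot T_{\mathrm{FAQ}}+m^6 nd/\eps^2)$. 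The approximation guarantee $\|\S\J x\|_2^2+\lambda\|x\|_2^2=(1\pm\eps)(\|\J x\|_2^2+\lambda\|x\|_2^2)$ with success probability $1-1/\poly(n)$ is then a direct transcription of Lemma \ref{lem:fastTenSketch} applied with statistical dimension $d_\lambda$.

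The subtle point, and the part I would spend the most care on, is reconciling the fact that $\J$ is \emph{not} itself a Kronecker product with the fact that Lemma \ref{lem:fastTenSketch} is stated for $\AA=\AA_1\otimes\cdots\otimes\AA_m$. The resolution is that Fast Tensor-Sketch is an oblivious sketching matrix in $\R^{k\times (n_1\cdots n_m)}$: its subspace-embedding guarantee depends only on the statistical dimension of the column span of the matrix being sketched, not on that matrix having Kronecker form. The Kronecker hypothesis in Lemma \ref{lem:fastTenSketch} is only used to get efficient \emph{application} of $\S$. Here, the column span of $\J$ lives in the tensor ambient space, so the embedding guarantee transfers; and each row of $\J$ is Kronecker-structured, so the DB-Sketch framework combined with inside-out lets us apply $\S$ to the entire summation $\J=\sum_{X\in\J'}\chi_X\,(\text{value}(X))^\top$ as a single semiring FAQ, which is what yields the stated running time.
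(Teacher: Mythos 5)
Your proposal is correct and follows essentially the same route as the paper: instantiate Theorem \ref{thm:dbsketch} by verifying that the Fast Tensor-Sketch of Lemma \ref{lem:fastTenSketch} satisfies the three DB-Sketch properties (with $F_i$ the sparse linear lifting step, $\odot$ the FFT-based combiner, and $\oplus$ addition), then substitute $T_\odot+T_\oplus=O(kd)$, $T_f=\tilde O(m^5/\eps^2)$, and $k=O(d_\lambda m^4/\eps^2)$ into the running-time bound. Your closing observation---that the subspace-embedding guarantee is a property of the oblivious sketch over the ambient tensor space and depends only on the statistical dimension, with the Kronecker hypothesis needed only for fast application---is actually made more explicitly than in the paper, which handles the only remaining technicality (that the combiner may be applied linearly rather than in a binary tree) by citing the companion version of \cite{ahle2020oblivious}.
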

% \peng{What is $n$ here?}
\begin{proof}
    The proof follows by showing that the algorithm in Lemma \ref{lem:fastTenSketch} is a DB-Sketch. We demonstrate this by introducing the functions $F_i$ and the operators $\oplus$ and $\odot$.
    The function $F_i(v_i)$ is an OSNAP transform (Lemma \ref{lem:OSNAP}) of $v_i$, $\AA \odot \B$ is $\S(\AA \otimes \B)$ where $\S$ is the Tensor Subsampled Randomized Hadamard Transform as defined in \cite{ahle2020oblivious}, and $\AA \oplus \B$ is a summation of tensors $\AA$ and $\B$. Then it is easy to see that all the properties hold since Kronecker product distributes over summation.
    
    Since $F_i$ needs to be calculated for all rows in each table $\T_i$, which takes $O(T_f mnd)$ time, the operator $\oplus$ takes $O(kd)$ time to apply since the size of the sketch is $k \times d$. The operator $\otimes$ takes at most $O(kd)$ time to apply using the Fast Fourier Transform (FFT)  \cite{ahle2020oblivious}. Therefore, the total time complexity can be bounded by $O((mkd)T_{\text{FAQ}} + m^6 n d/\epsilon^2)$.
    
    Lastly, we remark that the algorithm in \cite{ahle2020oblivious} requires that the $\odot$ operator be applied to the input tensors in a binary fashion; however, it is shown in a separate version \cite{ahle2019almost} of the paper \cite{ahle2020oblivious} that the sketching construction and results of \cite{ahle2020oblivious} continue to hold when the tensor sketch is applied linearly. See Lemma 6 and 7 in \cite{ahle2019almost} and Lemma 10 in \cite{ahle2020oblivious}.
\end{proof}

Corollary \ref{corollary:general:join} gives an algorithm for all join queries when the corresponding hypertree decomposition is a path, or has a vertex for which all other vertices are connected to it. Although the time complexity for obtaining an $\epsilon$-subspace embedding ($\lambda=0$) is not better compared to the exact algorithm for ordinary least squares regression, for ridge regression it is possible to create sketches with many fewer rows and still obtain a reasonable approximation. 

In the following we explain the algorithm for DB-Sketch using the FAQ formulation and inside-out algorithm \cite{abo2016faq}. Let $\J_i$ denote the matrix resulting from keeping the column $i$ of $\J$ and replacing all other columns with $0$. Then we have $\J = \sum_i \J_i$. Using $\J_i$ we can define the proposed algorithm as finding $F(\J_i)$ for all tables $\T_i$ and then calculating $\oplus_i F(\J_i)$.
Therefore, all we need to do is to calculate $F(\J_i)$ for all values of $i$. In the following we introduce an algorithm for the calculation of $F(\J_i)$ and then $F(\J)$ is just the summation of the results for the different tables. 

Let $e(X_k)$ be the $D_k$-dimensional unit vector that is $1$ in the row corresponding to $X_k$, and let $v(X_k)$ be the $D_k \times d$ dimensional matrix that agrees with $X_k$ in the row corresponding to $X_k$, and is $0$ everywhere else.

\begin{lemma}
\label{lemma:matrix:and:kronocker:product}
     For all tables, 
     $\J_i = \sum_{X \in \J'} e(X_1) \otimes \dots \otimes e(X_{i-1}) \otimes v(X_i) \otimes e(X_{i+1}) \otimes \dots \otimes e(X_{m})$, where $\otimes$ is the Kronecker product.
\end{lemma}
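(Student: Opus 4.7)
The approach is a direct row-by-row verification: I would fix a single term of the right-hand side (corresponding to a tuple $X = (X_1, \ldots, X_m) \in \J'$) and identify exactly which entries of the resulting $N \times d$ matrix are non-zero; then observe that summing over $X$ fills in the rows of $\J$ in precisely the way required by the definition of $\J_i$.

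The key computation concerns a single term. Each $e(X_k)$ is a $D_k$-dimensional column unit vector whose only non-zero entry is in position $X_k$, and $v(X_i)$ is a $D_i \times d$ matrix whose only non-zero row is the row indexed by $X_i$, with entries in the columns of $E_i$ equal to the coordinates of $X_i$ and zeros in the other columns. By the standard definition of the Kronecker product, the row of $e(X_1) \otimes \cdots \otimes v(X_i) \otimes \cdots \otimes e(X_m)$ indexed by a tuple $(j_1, \ldots, j_m) \in D_1 \times \cdots \times D_m$ equals $\prod_{k \neq i} (e(X_k))_{j_k}$ times the $j_i$-th row of $v(X_i)$. This product is non-zero if and only if $j_k = X_k$ for every $k \neq i$ and $j_i = X_i$, in which case the row equals the non-zero row of $v(X_i)$. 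Thus each term of the right-hand side is a matrix with a single non-zero row, located at the index in $\J$ corresponding to the tuple $X$, whose entries are the coordinates of $X$ on the columns of $E_i$ and zero on all columns outside $E_i$.

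Summing over $X \in \J'$, the non-zero rows of distinct terms live at distinct tuple-indices (since different tuples in $\J'$ have different coordinates), so the terms do not overlap and no cancellation or superposition occurs. The resulting matrix therefore has, for each $X \in \J'$, a single row at the tuple-index of $X$ containing the values of $X$ on columns of $E_i$ and zeros on all other columns; all other rows (those corresponding to tuples not in $\J'$) remain zero. This matches the definition of $\J_i$ exactly, since $\J$ was reformulated so that its row indexed by $X \in D_1 \times \cdots \times D_m$ holds the values of $X$ whenever $X \in \J'$ and is zero otherwise, and $\J_i$ is obtained by zeroing out all columns except those in $E_i$. The only subtlety is keeping the row-indexing conventions of the iterated Kronecker product consistent with the tuple-indexing of $\J$; since the reformulation of $\J$ was set up to match the lexicographic ordering of $D_1 \times \cdots \times D_m$ used by the Kronecker product, this alignment is immediate and the remainder of the argument is routine.
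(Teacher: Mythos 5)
Your proposal is correct and follows essentially the same route as the paper's proof: each summand is a matrix with a single non-zero row located at the index of $X$, that row agrees with the corresponding row of $\J_i$, and summing over the (distinct, non-overlapping) tuples of $\J'$ recovers $\J_i$. Your version simply spells out the Kronecker-product row computation and the non-overlap of distinct terms in more detail than the paper does.
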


\begin{proof}
    For each tuple $X$, the term inside the summation has $N=|D_1||D_2|\dots|D_m|$ rows and only $1$ non-zero row because all of the tensors have only one non-zero row. The non-zero row is the row corresponding to $X$, and its value is the value of the same row in $\J_i$; therefore, $\J_i$ can be obtained by summing over all the tuples of $\J$.
\end{proof}
%\section{Experiments}

\begin{lemma}
    Let $F$ be a DB-Sketch. Then $F(\J_i)$ can be computed in time $O((T_{\odot} + T_{\oplus}) T_{\text{FAQ}} + T_f ndm)$
\end{lemma}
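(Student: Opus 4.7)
The plan is to combine Lemma \ref{lemma:matrix:and:kronocker:product} with the three defining properties of a DB-Sketch to rewrite $F(\J_i)$ as a single-semiring FAQ, and then apply the inside-out algorithm (Theorem \ref{thm:inside_out}) to evaluate it. Starting from the identity
\[ \J_i = \sum_{X \in \J'} e(X_1) \otimes \dots \otimes e(X_{i-1}) \otimes v(X_i) \otimes e(X_{i+1}) \otimes \dots \otimes e(X_m), \]
property 1 of a DB-Sketch lets us push $F$ through the outer sum to $\bigoplus_{X \in \J'}$, and property 2 lets us push $F$ through each Kronecker product to get a $\odot$-product of the $F_j$-sketches of the individual factors. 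The resulting expression is
\[ F(\J_i) = \bigoplus_{X \in \J'}\; F_1(e(X_1))\odot\dots\odot F_{i-1}(e(X_{i-1}))\odot F_i(v(X_i))\odot F_{i+1}(e(X_{i+1}))\odot\dots\odot F_m(e(X_m)), \]
where the $\odot$-products are evaluated in the order $\rho$.

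The next step is to verify that this expression is a legitimate SumProd FAQ on the join $\J'$. The operator $\oplus$ is commutative and associative by property 1, and property 3 gives the distributive law $A \odot (B \oplus C) = (A \odot B) \oplus (A \odot C)$, so $(R_F, \oplus, \odot)$ forms a (commutative) semiring in the sense required by the FAQ framework. For each row $X$ of $\J'$, the term in the sum is a $\odot$-product of $m$ per-table factors, one for each $\T_j$, so this is precisely a SumProd FAQ whose factors live on the hyperedges of the join hypergraph.

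To execute it, I would first precompute, for every row of every input table $\T_j$, the value $F_j(e(X_j))$ or, for the distinguished table $\T_i$, the value $F_i(v(X_i))$. By property 2, each such evaluation costs $O(T_f \cdot \nnz(\cdot))$, and since the aggregate sparsity of all rows over all $m$ tables is $O(ndm)$, this preprocessing runs in $O(T_f \cdot ndm)$ time. Once these values have replaced the raw rows, the FAQ is a standard single-semiring FAQ whose operations each take $T_\odot + T_\oplus$ time (rather than $O(1)$); feeding this into Theorem \ref{thm:inside_out} yields a running time of $O((T_\odot + T_\oplus)\cdot T_{\text{FAQ}})$ for the FAQ phase. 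Adding the preprocessing gives the claimed bound $O((T_\odot + T_\oplus)\, T_{\text{FAQ}} + T_f\, ndm)$.

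The main obstacle I expect is the bookkeeping around the order $\rho$: inside-out fixes an elimination order, and the $\odot$ operator is not assumed to be commutative, only associative in the specific tree order $\rho$ via property 2. Since $\rho$ was constructed precisely as the multiplication order of the inside-out algorithm for $\J'$, the order in which $\odot$-products are formed during elimination coincides with the order required by property 2, so the factorization $F(\AA_1\otimes\cdots\otimes\AA_m)=F_1(\AA_1)\odot\cdots\odot F_m(\AA_m)$ is applied consistently. Combined with the distributive law in property 3, this justifies interleaving the $\bigoplus$ with partial $\odot$-products during the FAQ evaluation, which is what makes the inside-out running time $T_{\text{FAQ}}$ applicable.
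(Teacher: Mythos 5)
Your proposal is correct and follows essentially the same route as the paper: both expand $\J_i$ via Lemma \ref{lemma:matrix:and:kronocker:product}, push $F$ through the sum and the Kronecker product using properties 1 and 2 to obtain the query $\bigoplus_{X \in \J'} \bigodot_j g_j(X_j)$, resolve the non-commutativity of $\odot$ by appealing to the fixed multiplication order $\rho$ of inside-out, and split the cost into the $O(T_f n d m)$ preprocessing of the per-row sketches plus the $O((T_\odot + T_\oplus)T_{\text{FAQ}})$ FAQ evaluation. The only blemish is the parenthetical claim that $(R_F,\oplus,\odot)$ is a commutative semiring, which your own final paragraph correctly walks back.
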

\begin{proof}
    We define a FAQ for $F(\J_i)$ and then show how to calculate the result. Let $g_j(X_j) = F_j(e(X_j))$ for all $j \neq i$ and $g_i(X_i) = F_i(v(X_i))$. Note that the number of non-zero entries in $e(X_i)$ is at most $d$. Therefore, it is possible to find all values of $F_i(v(X_i))$ in time $O(T_f nd)$.
    The claim is it is possible to use the inside-out \cite{abo2016faq} algorithm for the following query and find $\S\J_i$:
    \begin{align*}
    \label{equality:query}
        \bigoplus_{X \in \J'} \bigodot_{j} g_j(X_j)
    \end{align*}
    The mentioned query would be a FAQ if $\bigodot$ were commutative and associative. However, since we defined the ordering of the tables based on the multiplication order $\rho$, the inside-out algorithm multiplies the factors exactly in the same order needed for the DB-sketch algorithm. Therefore, we do not need the commutative and associative property of the $\odot$ operator to run inside-out.
    
    Now we need to show that the query truly calculates $F(\J_i)$. Based on Lemma \ref{lemma:matrix:and:kronocker:product} and the properties of $F$ we have:
   
    \begin{flalign*}
        F(\J_i)         =& 
        F\Big(\sum_{X \in \J'} e(X_1) \otimes \dots 
        \\
        &\ \ \ \otimes e(X_{i-1}) \otimes v(X_i) \otimes e(X_{i+1}) \otimes \dots 
        \\
        &\ \ \ \otimes e(X_{m})\Big)
        \\
        =& \bigoplus_{X \in \J'}F(e(X_1) \otimes \dots
        \\
        &\ \ \ \otimes e(X_{i-1}) \otimes v(X_i) \otimes e(X_{i+1}) \otimes \dots
        \\
        &\ \ \ \otimes e(X_{m}))
        \\
        =&
        \bigoplus_{X \in \J'}\Big(F_1(e(X_1)) \odot \dots 
        \\
        &\ \ \ \odot F_{i-1}(e(X_{i-1})) \odot F_{i}(v(X_i)) \odot F_{i+1}(e(X_{i+1}))
        \\
        &\ \ \  \odot \dots  \odot F_m(e(X_{m}))\Big)
        \\
        =&
        \bigoplus_{X \in \J'} \bigodot_{j} g_j(X_j)
    \end{flalign*}
    as needed.
\end{proof}

% Now we can explain the algorithm for calculating DB-Sketch of a join query.
% \begin{enumerate}
%     \item For each $i$ and each distinct value of $x_i \in D_i$ find $F(u_{x_i})$ using FAQ.
%     \item For each $i$ calculate $S J_i = \sum_{x_{T_j}} ((x_i e_i)^T \otimes S u_{x_{T_j}})$ where $T_j$ is a table that has column $i$.
%     \item Find $S J = \sum_i S J_i$.
% \end{enumerate}

\begin{proof}[Proof of Theorem \ref{thm:dbsketch}]
    Finding the values of $F_i(X_i)$ for all $i$ and all tuples of $X_i$ takes $O(T_f ndm)$ time because $F_i$ can be calculated in time $O(T_f \nnz(\T_i))$, and the total number of non-zero entries can be bounded by $O(ndm)$. The calculation of $F(\J_i)$ can be done in time $O(m (T_{\odot} + T_{\oplus}) T_{\text{FAQ}} + T_f ndm)$ using $m$ rounds of the inside-out algorithm where $T_{FAQ} = O(md^2n^{\text{fhtw}} \log(n))$ \cite{abo2016faq}. After this step, we need to aggregate the results using the $\oplus$ operator to obtain the final result which takes $O(m T_\oplus)$ time.
    \end{proof}
    % Based on lemma \ref{lemma:FAQ:for:columns} the first step of the algorithm can be calculated in time $O(t (T_{\odot} + T_{\oplus}) T_{FAQ})$, then for each column of $J$ we need to find $S J_i$, since each column is present in the input tables we know $|D_i| \leq n$ where $n$ is the size of the largest input table. Therefore, the number of terms in the summation of the second step is at most $n$ and the calculation of each term can be done by replacing the row of $S u_{x_i}$ with $x_i$ scaled by that row, note that this can be done in time $O(k)$ for each row even if $x_i$ is categorical and is replaced by its one hot encoding. Therefore, the second step of the algorithm will take $O(kdn)$ to be executed for all the columns and all the distinct values. 
    % The last step would be adding these sketched matrices which would take $O(kD)$ because the set of non-zero columns of different $J_i$ is disjoint.

\section{Evaluation}\label{section:experiments}

We study the performance of our sketching method on several real datasets, both for two-table joins and general joins.\footnote{Code available at \url{https://github.com/AnonymousFireman/ICML_code}} We first introduce the datasets we use in the experiments. We consider two datasets: LastFM \cite{Cantador:RecSys2011} and MovieLens  \cite{10.1145/2827872}. Both of them contain several relational tables. We will compare our algorithm with the FAQ-based algorithm on the joins of some relations.

The LastFM dataset has three relations: \textbf{Userfriends} (the friend relations between users), \textbf{Userartists} (the artists listened by each user) and \textbf{Usertaggedartiststimestamps} (the tag assignments of artists provided by each particular user along with the timestamps).

The MovieLens dataset also has three relations: \textbf{Ratings} (the ratings of movies given by the users and the timestamps), \textbf{Users} (gender, age, occupation, and zip code information of each user), \textbf{Movies} (release year and the category of each movie).

\begin{comment}
\begin{table}
\centering
\caption{Experimental Results for In-Database Ridge Regression }\label{tab:Experiment_results}
\begin{tabular}{|c|c|c|c|c|c|c|c|c|c|c|c|}
\hline
& $n_1$ & $n_2$ & $n_3$ & $n_4$ & $n_5$ & $d$ & $\lambda$ & $T_{\text{bf}}$ & $T_{\text{ours}}$ & \text{relative error} & \# of computed entries\\
\hline
\multirow{2}{*}{$\J_{3}$} & \multirow{2}{*}{25434} & \multirow{2}{*}{92834} & \multirow{2}{*}{92834} & \multirow{2}{*}{\diagbox[dir=SW,width=4em,height=2\line]{}{}} & \multirow{2}{*}{\diagbox[dir=SW,width=4em,height=2\line]{}{}} & \multirow{2}{*}{6} & $10^5$ & 0.73 & 0.73 & $12\%$ & 20\\ \cline{8-12} & & & & & & & $10^6$ & 0.66 & 0.68 & $1.44\%$ & 20 \\
\hline
\multirow{2}{*}{$\J_{4}$} & \multirow{2}{*}{25434} & \multirow{2}{*}{92834} & \multirow{2}{*}{92834} & \multirow{2}{*}{186479} & \multirow{2}{*}{186479} & \multirow{2}{*}{12} & $10^{11}$ & 40.88 & 36.87 & $0.88\%$ & 72\\ \cline{8-12} & & & & & & & $10^{12}$ & 39.00 & 24.82 & $1.09\%$ & 50\\
\hline
\multirow{2}{*}{$\J_{5}$} & \multirow{2}{*}{1000209} & \multirow{2}{*}{6040} & \multirow{2}{*}{3883} & \multirow{2}{*}{\diagbox[dir=SW,width=4em,height=2\line]{}{}} & \multirow{2}{*}{\diagbox[dir=SW,width=4em,height=2\line]{}{}} & \multirow{2}{*}{27} & $10^5$ & 119.59 & 55.24 & $6.24\%$ & 172\\ \cline{8-12} & & & & & & & $10^6$ & 121.15 & 40.57 & $0.14\%$ & 130\\ 
\hline
\end{tabular}
\end{table}
\end{comment}
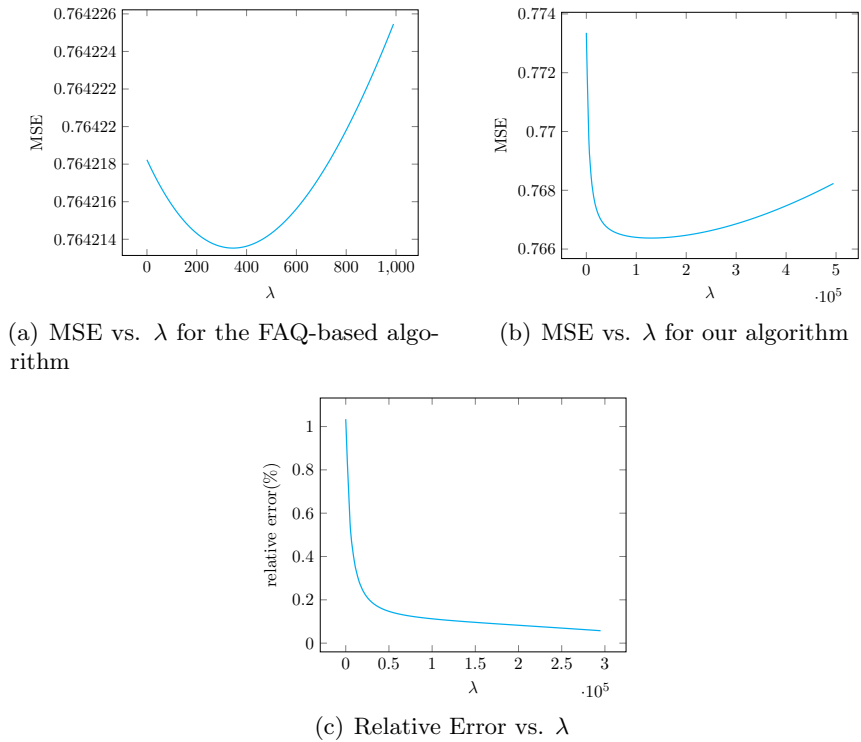
\begin{figure*}[ht!]
    \centering
    \subfigure[MSE vs. $\lambda$ for the FAQ-based algorithm]{
        \resizebox{!}{4cm}{
            \begin{tikzpicture}
            \begin{axis} [xlabel=$\lambda$, ylabel=MSE, yticklabel style={/pgf/number format/precision=6}]
            \addplot[smooth, thick, cyan] coordinates {
            (0, 0.7642182739031661)
            (10, 0.7642179905844355)
            (20, 0.7642177169904162)
            (30, 0.7642174530266036)
            (40, 0.7642171985994681)
            (50, 0.764216953616335)
            (60, 0.7642167179855021)
            (70, 0.7642164916161771)
            (80, 0.764216274418471)
            (90, 0.7642160663033626)
            (100, 0.7642158671827768)
            (110, 0.7642156769694591)
            (120, 0.7642154955770075)
            (130, 0.7642153229199405)
            (140, 0.7642151589135795)
            (150, 0.7642150034740828)
            (160, 0.7642148565184186)
            (170, 0.7642147179644285)
            (180, 0.7642145877307008)
            (190, 0.7642144657366784)
            (200, 0.7642143519025584)
            (210, 0.7642142461493336)
            (220, 0.7642141483987609)
            (230, 0.7642140585733829)
            (240, 0.7642139765964929)
            (250, 0.7642139023921092)
            (260, 0.764213835885035)
            (270, 0.7642137770007574)
            (280, 0.7642137256655269)
            (290, 0.7642136818063124)
            (300, 0.7642136453507603)
            (310, 0.7642136162272426)
            (320, 0.7642135943648014)
            (330, 0.7642135796932172)
            (340, 0.7642135721429043)
            (350, 0.764213571644931)
            (360, 0.7642135781310893)
            (370, 0.764213591533799)
            (380, 0.7642136117861398)
            (390, 0.764213638821819)
            (400, 0.7642136725751641)
            (410, 0.7642137129812143)
            (420, 0.7642137599755191)
            (430, 0.7642138134943374)
            (440, 0.7642138734744942)
            (450, 0.764213939853421)
            (460, 0.7642140125691687)
            (470, 0.764214091560318)
            (480, 0.7642141767661375)
            (490, 0.7642142681263836)
            (500, 0.7642143655814406)
            (510, 0.7642144690722317)
            (520, 0.7642145785402362)
            (530, 0.7642146939274915)
            (540, 0.7642148151766153)
            (550, 0.764214942230727)
            (560, 0.7642150750335135)
            (570, 0.7642152135291704)
            (580, 0.7642153576624393)
            (590, 0.7642155073785919)
            (600, 0.7642156626233627)
            (610, 0.7642158233430528)
            (620, 0.7642159894844293)
            (630, 0.7642161609948073)
            (640, 0.7642163378219189)
            (650, 0.7642165199140936)
            (660, 0.7642167072200531)
            (670, 0.7642168996890064)
            (680, 0.764217097270729)
            (690, 0.7642172999153325)
            (700, 0.764217507573464)
            (710, 0.7642177201962701)
            (720, 0.7642179377352649)
            (730, 0.7642181601424879)
            (740, 0.7642183873703572)
            (750, 0.7642186193717934)
            (760, 0.7642188561001345)
            (770, 0.7642190975091281)
            (780, 0.7642193435529723)
            (790, 0.7642195941862905)
            (800, 0.7642198493641101)
            (810, 0.7642201090418776)
            (820, 0.7642203731754852)
            (830, 0.7642206417211561)
            (840, 0.764220914635613)
            (850, 0.7642211918759193)
            (860, 0.764221473399514)
            (870, 0.7642217591642567)
            (880, 0.7642220491284054)
            (890, 0.7642223432506092)
            (900, 0.7642226414898372)
            (910, 0.7642229438054771)
            (920, 0.7642232501573122)
            (930, 0.7642235605054214)
            (940, 0.764223874810317)
            (950, 0.7642241930328406)
            (960, 0.764224515134204)
            (970, 0.7642248410759563)
            (980, 0.7642251708199791)
            (990, 0.7642255043285509)
            };
            \end{axis}
            \end{tikzpicture}
        }
        \label{fig:plot_brute_force}
    }
    \quad
    \subfigure[MSE vs. $\lambda$ for our algorithm]{
        \resizebox{!}{4cm}{
            \begin{tikzpicture}
            \begin{axis} [xlabel=$\lambda$, ylabel=MSE, yticklabel style={/pgf/number format/precision=3}]
            \addplot[smooth, thick, cyan] coordinates {
            (0, 0.7733565672426844)
            (5000, 0.7697418566571624)
            (10000, 0.7684056576692502)
            (15000, 0.7677521036397992)
            (20000, 0.7673765749881626)
            (25000, 0.767136846993459)
            (30000, 0.7669720500618807)
            (35000, 0.766852418218315)
            (40000, 0.7667619152845634)
            (45000, 0.7666912400746394)
            (50000, 0.7666346741141115)
            (55000, 0.766588531496673)
            (60000, 0.7665503415408141)
            (65000, 0.7665183925000426)
            (70000, 0.7664914643918648)
            (75000, 0.7664686661024908)
            (80000, 0.7664493325538055)
            (85000, 0.7664329577939037)
            (90000, 0.7664191502914537)
            (95000, 0.7664076023554779)
            (100000, 0.7663980687753785)
            (105000, 0.7663903516179067)
            (110000, 0.7663842892214451)
            (115000, 0.766379748105587)
            (120000, 0.7663766169393081)
            (125000, 0.7663748019861365)
            (130000, 0.7663742236238817)
            (135000, 0.766374813655998)
            (140000, 0.766376513213868)
            (145000, 0.7663792711043379)
            (150000, 0.7663830424969686)
            (155000, 0.7663877878724394)
            (160000, 0.7663934721737666)
            (165000, 0.7664000641162384)
            (170000, 0.7664075356229152)
            (175000, 0.7664158613593229)
            (180000, 0.7664250183482274)
            (185000, 0.766434985648567)
            (190000, 0.7664457440864232)
            (195000, 0.7664572760286171)
            (200000, 0.7664695651909709)
            (205000, 0.7664825964756319)
            (210000, 0.7664963558318745)
            (215000, 0.7665108301369634)
            (220000, 0.7665260070935339)
            (225000, 0.7665418751410102)
            (230000, 0.7665584233786127)
            (235000, 0.7665756414985042)
            (240000, 0.7665935197273651)
            (245000, 0.7666120487752366)
            (250000, 0.7666312197903942)
            (255000, 0.7666510243198639)
            (260000, 0.7666714542742332)
            (265000, 0.7666925018965892)
            (270000, 0.7667141597349304)
            (275000, 0.7667364206174424)
            (280000, 0.7667592776306229)
            (285000, 0.7667827240993516)
            (290000, 0.7668067535694375)
            (295000, 0.7668313597914156)
            (300000, 0.7668565367064044)
            (305000, 0.7668822784329564)
            (310000, 0.7669085792552941)
            (315000, 0.7669354336127223)
            (320000, 0.7669628360897915)
            (325000, 0.7669907814074731)
            (330000, 0.7670192644151076)
            (335000, 0.7670482800829139)
            (340000, 0.7670778234953045)
            (345000, 0.7671078898445564)
            (350000, 0.7671384744251625)
            (355000, 0.7671695726285089)
            (360000, 0.7672011799380241)
            (365000, 0.7672332919246594)
            (370000, 0.7672659042427465)
            (375000, 0.7672990126261509)
            (380000, 0.7673326128846104)
            (385000, 0.7673667009005145)
            (390000, 0.7674012726257281)
            (395000, 0.7674363240786394)
            (400000, 0.7674718513416685)
            (405000, 0.7675078505585319)
            (410000, 0.7675443179319832)
            (415000, 0.7675812497215633)
            (420000, 0.7676186422415465)
            (425000, 0.7676564918590081)
            (430000, 0.7676947949918997)
            (435000, 0.7677335481074486)
            (440000, 0.7677727477204608)
            (445000, 0.76781239039174)
            (450000, 0.7678524727267573)
            (455000, 0.7678929913741364)
            (460000, 0.767933943024485)
            (465000, 0.767975324409103)
            (470000, 0.7680171322987529)
            (475000, 0.7680593635027017)
            (480000, 0.7681020148675259)
            (485000, 0.7681450832761861)
            (490000, 0.7681885656470402)
            (495000, 0.7682324589329722)
            };
            \end{axis}
            \end{tikzpicture}
        }
        \label{fig:plot_sketch}
    }
    \quad
    \subfigure[Relative Error vs. $\lambda$]{
        \resizebox{!}{4cm}{
            \begin{tikzpicture}
            \begin{axis} [xlabel=$\lambda$, ylabel=relative error($\%$)]
            \addplot[smooth, thick, cyan] coordinates {
            (0, 1.0340838221929216)
            (5000, 0.5533044752699334)
            (10000, 0.3764104386335454)
            (15000, 0.2904845255714994)
            (20000, 0.241535426982864)
            (25000, 0.21058222531169513)
            (30000, 0.1895004989622251)
            (35000, 0.1743171920627562)
            (40000, 0.16289209607147637)
            (45000, 0.15398425488173384)
            (50000, 0.14683115740643515)
            (55000, 0.14094176609185105)
            (60000, 0.13598736669806488)
            (65000, 0.13174062298790812)
            (70000, 0.12803988540581734)
            (75000, 0.12476742665201712)
            (80000, 0.12183570039727698)
            (85000, 0.11917839965103205)
            (90000, 0.1167444819899277)
            (95000, 0.11449408265551408)
            (100000, 0.11239565999561218)
            (105000, 0.11042396394194132)
            (110000, 0.10855856563141142)
            (115000, 0.1067827767529561)
            (120000, 0.10508284411472335)
            (125000, 0.10344734172833547)
            (130000, 0.10186670645480067)
            (135000, 0.10033287956348058)
            (140000, 0.09883902719627358)
            (145000, 0.09737932038398345)
            (150000, 0.09594876041234546)
            (155000, 0.09454303906593076)
            (160000, 0.09315842598378588)
            (165000, 0.09179167716864001)
            (170000, 0.09043996022903844)
            (175000, 0.08910079288944051)
            (180000, 0.08777199214056175)
            (185000, 0.08645163192464445)
            (190000, 0.08513800779614833)
            (195000, 0.08382960716113441)
            (200000, 0.08252508422030846)
            (205000, 0.08122323866934789)
            (210000, 0.07992299760135602)
            (215000, 0.07862340001338897)
            (220000, 0.0773235835005659)
            (225000, 0.07602277280698289)
            (230000, 0.07472026993873282)
            (235000, 0.07341544555030666)
            (240000, 0.07210773149910477)
            (245000, 0.07079661430642314)
            (250000, 0.06948162944726555)
            (255000, 0.06816235633739787)
            (260000, 0.06683841393453349)
            (265000, 0.06550945682539666)
            (270000, 0.06417517178551879)
            (275000, 0.06283527472794592)
            (280000, 0.06148950798177122)
            (285000, 0.06013763785481796)
            (290000, 0.05877945250942762)
            (295000, 0.05741475998082279)
            };
            \end{axis}
            \end{tikzpicture}
        }
        \label{fig:plot_obj}
    }
    \caption{Experimental Results for General Joins}
\end{figure*}
\subsection{Two-Table Joins}
In the experiments for two-table joins, we solve the regression problem $\min_x \Vert\J_U x - b\Vert_2^2$,
where $\J = T_1 \Join T_2 \in \mathbb{R}^{N\times d}$ is a join of two tables, $U\subset [d]$ and $b$ is one of the columns of $\J$. In our experiments, suppose column $p$ is the column we want to predict. We will set $U = [d] \setminus \{p\}$ and $b$ to be the $p$-th column of $\J$.

To solve the regression problem, the FAQ-based algorithm computes the covariance matrix $\J^T\J$ by running the FAQ algorithm for every two columns, and then solves the normal equations $\J^T\J x = \J^T b$. Our algorithm will compute a subspace embedding $\Tilde{\J}$, and then solve the regression problem $\min_x \Vert\Tilde{\J}_Ux - \Tilde{b}\Vert_2^2$, i.e., solve $\tilde{\J}^T\tilde{\J}x = \tilde{\J}^T \tilde{b}$.

We compare our algorithm to the FAQ-based algorithm on the LastFM and MovieLens datasets. The FAQ-based algorithm employs the FAQ-based algorithm to calculate each entry in $\J^T \J$.

For the LastFM dataset, we consider the join of \textbf{Userartists} and \textbf{Usertaggedartiststimestamps}:
$\J_1 = \mathbf{UA}\Join_{\mathbf{UA}.\text{user} = \mathbf{UTA}.\text{user}}\mathbf{UTA}.$
Our regression task is to predict how often a user listens to an artist based on the tags.
For the MovieLens dataset, we consider the join of \textbf{Ratings} and \textbf{Movies}:
$\J_2 = \mathbf{R}\Join_{\mathbf{R}.\text{movie} = \mathbf{M}.\text{movie}}\mathbf{M}.$
Our regression task is to predict the rating that a user gives to a movie.

In our experiments, we do the dataset preparation mentioned in \cite{SystemF}, to normalize the values in each column to range $[0,1]$. For each column, let $v_{\max}$ and $v_{\min}$ denote the maximum value and minimum value in this column. We normalize each value $v$ to $(v - v_{\min})/(v_{\max} - v_{\min})$.

\subsection{General Joins}
For general joins, we consider the ridge regression problem. Specifically, our goal is to find a vector $x$ that minimizes $\Vert \J x - b\Vert_2^2 + \lambda \Vert x\Vert_2^2$, where $\J=\T_1\Join \cdots\Join \T_m \in \mathbb{R}^{N\times d}$ is an arbitrary join, $b$ is one of the columns of $\J$ and $\lambda>0$ is the regularization parameter. The optimal solution to the ridge regression problem can be found by solving the normal equations $(\J^T \J + \lambda \mathbb{I}_{d})x = \J^Tb$.

The FAQ-based algorithm is the same as in the experiment for two-table joins. It directly runs the FAQ algorithm a total of $d(d+1)/2$ times to compute every entry of $\J^T \J$.

We run the algorithm described in Section \ref{section:general}, as well as the FAQ-based algorithm, on the MovieLens-25m dataset, which is the largest of the MovieLens\cite{10.1145/2827872} datasets. We consider the join of \textbf{Ratings}, \textbf{Users} and \textbf{Movies}:
$\J_3 = \mathbf{R}\Join_{\mathbf{R}.\text{user}=\mathbf{U}.\text{user}}\mathbf{U}\Join_{\mathbf{U}.\text{movie}=\mathbf{M}.\text{movie}}\mathbf{M}.$
Our regression task is to predict the rating that a user gives to a movie.

\subsection{Results}
\begin{table}\centering
\caption{Experimental Results for Two-Table Joins}
\label{tab:Experiment_two-table}
\begin{tabular}{|c|c|c|c|c|c|c|}
\hline
     & $n_1$ & $n_2$ & $d$ &$T_{\text{bf}}$ & $T_{\text{ours}}$ & \text{err}\\
\hline
    $\J_{1}$ & 92834 & 186479 & 6 & .034 & .011 & 0.70\% \\
\hline
    $\J_{2}$ & 1000209 & 3883 & 23 & .820 & .088 & 0.66\% \\
\hline%0.66
\end{tabular}
\end{table}

We run the FAQ-based algorithm and our algorithm on those joins and compare their running times. To measure accuracy, we compute the relative mean-squared error, given by:	
\[ \text{err} =     \frac{\Vert \J_U x_{\text{ours}} - b\Vert_2^2 - \Vert\J_U x_{\text{bf}} - b\Vert_2^2}{\Vert\J_U x_{\text{bf}} - b\Vert_2^2}\]
%\[\frac{\Vert \J_U x_{\text{ours}} - b\Vert_2^2 - \Vert\J_U x_{\text{bf}} - b\Vert_2^2}{\Vert\J_U x_{\text{bf}} - b\Vert_2^2}\]
in the experiments for two-table joins, where $x_{\text{bf}}$ is the solution given by the FAQ-based algorithm, and $x_{\text{ours}}$ is the solution given by our algorithm. All results (runtime, accuracy) are averaged over $5$ runs of each algorithm.

In our implementation, we adjust the target dimension in our sketching algorithm for each experiment, as in practice it appears unnecessary to parameterize according to the worst-case theoretical bounds when the number of features is small, as in our experiments. Additionally, for two-table joins, we replace the Fast Tensor-Sketch with Tensor-Sketch (\cite{ahle2020oblivious, pagh2013compressed}) for the same reason. The implementation is written in MATLAB and run on an Intel Core i7-7500U CPU with 8GB of memory.

We let $T_{\text{bf}}$ be the running time of the FAQ-based algorithm and $T_{\text{ours}}$ be the running time of our approach, measured in seconds. Table \ref{tab:Experiment_two-table} shows the results of our experiments for two-table joins. From that we can see our approach can give a solution with relative error less than $1\%$, and its running time is significantly less than that of the FAQ-based algorithm. 

For general joins, due to the size of the dataset, we implement our algorithm in Taichi \cite{hu2019taichi, hu2019difftaichi} and run it on an Nvidia GTX1080Ti GPU. We split the dataset into a training set and a validation set, run the regression on the training set and measure the MSE (mean squared error) on the validation set. We fix the target dimension and try different values of $\lambda$ to see which value achieves the best MSE.

Our algorithm runs in 0.303s while the FAQ-based algorithm runs in 0.988s. The relative error of MSE (namely, $\frac{\text{MSE}_{\text{ours}} - \text{MSE}_{\text{bf}}}{\text{MSE}_{\text{bf}}}$, both measured under the optimal $\lambda$) is only $0.28\%$. 

We plot the MSE vs. $\lambda$ curve for the FAQ-based algorithm and our algorithm in Figure \ref{fig:plot_sketch} and \ref{fig:plot_brute_force}. We observe that the optimal choice of $\lambda$ is much larger in the sketched problem than in the original problem. This is because the statistical dimension $d_{\lambda}$ decreases as $\lambda$ increases. Since we fix the target dimension, $\eps$ thus decreases. So a larger $\lambda$ can give a better approximate solution, yielding a better MSE even if it is not the best choice in the unsketched problem.

We also plot the relative error of the objective function in Figure \ref{fig:plot_obj}. For ridge regression it becomes
\[\frac{\left(\Vert \J x_{\text{ours}} - b\Vert_2^2 + \lambda \Vert x_{\text{ours}}\Vert_2^2\right)- \left(\Vert\J x_{\text{bf}} - b\Vert_2^2 + \lambda \Vert x_{\text{bf}}\Vert_2^2\right)}{\Vert\J x_{\text{bf}} - b\Vert_2^2 + \lambda \Vert x_{\text{bf}}\Vert_2^2}.\] We can see that the relative error decreases as $\lambda$ increases in accordance with our theoretical analysis.

%Table \ref{tab:Experiment_results} shows the results of our experiments for general joins. Note that for $\J_3$ there is no difference in the running times. However, we can see that our method is much faster than the brute force algorithm while keeping a high accuracy for $\J_4$ and $\J_5$ as the join size increases. 

%Additionally we record the number of entries of $\J^T \J$ that are computed by the FAQ algorithm. Either in our approach or the brute force algorithm, the time to compute the entries of $\J^T \J$ dominates the total running time because the time complexity of other parts (e.g., solving the linear equations) only depends on $d$ and we have $N\gg d$ for our datasets. Therefore, this platform-independent number is essentially proportional to the total running time. We find that for our algorithm, this quantity is much less than $d(d+1)/2$ in our experiments, and indicates that our approach indeed runs much faster than the brute force algorithm.

\section{Connecting the Two Algorithms}

In this paper, we described two algorithms for computing subspace embeddings for database joins. The first applied only to the case of two-table joins, and the second applied to general joins queries. However, one may ask whether there are theoretical of empirical reasons to use the latter over the former, even for the case of two-table joins. In this section, we briefly compare the two algorithms in the context of two-table joins to address this question.

\subsection{Theoretical Comparison}

By Theorem \ref{thm:subspace}, the total running time to obtain a subspace embedding is the minimum of $\Tilde{O}((n_1 + n_2)d/\eps^2 + d^3 / \eps^2)$ and $\Tilde{O}((\nnz(\T_1) + \nnz(\T_2))/\eps^2 +(n_1 + n_2)/\eps^2 + d^5/\eps^2)$ using Algorithm \ref{alg:1}.

Now we consider the algorithm stated in Corollary \ref{corollary:general:join}. When running on the join of two tables, the algorithm is equivalent to applying the Fast Tensor-Sketch to each block and summing them up. Thus, the running time is  $\Tilde{O}(\sum_{i\in\mathcal{B}}(\nnz(\T_1^{(i)})/\eps^2 + \nnz(\T_2^{(i)})/\eps^2 + d^2/\eps^2)) = \Tilde{O}((\nnz(\T_1) + \nnz(\T_2))/\eps^2 + (n_1 + n_2)d^2/\eps^2)$.

The running time of the algorithm in Corollary \ref{corollary:general:join} is greater than the running time of Algorithm \ref{alg:1}. In the extreme case the number of blocks can be really large, and even if each block has only a few rows we still need to pay an extra $\Tilde{O}(d^2/\eps^2)$ time for it. This is the reason why we split the blocks into two sets ($\mathcal{B_{\text{big}}}$ and $\mathcal{B_{\text{small}}}$) and use a different approach when designing the algorithm for two-table joins.

\subsection{Experimental Comparison}

In our experiments we replace the Fast Tensor-Sketch with Tensor-Sketch. The theoretical analysis is similar since we still need to pay an extra $O(kd\log k)$ time to sketch a block for target dimension $k$.

We run the algorithm for the general case on joins $\J_1$ and $\J_2$ for different target dimensions. As shown in Table \ref{tab:Experiment:J_1,J_2-general}, in order to achieve the same relative error as Algorithm \ref{alg:1}, we need to set a large target dimension and the running time would be significantly greater than it is in Table \ref{tab:Experiment_two-table}, even compared with the  FAQ-based algorithm. This experimental result agrees with our theoretical analysis.

\begin{table}
\centering
\caption{General Algorithm on Two-Table Joins}
\label{tab:Experiment:J_1,J_2-general}
\begin{tabular}{|c|c|c|c|}
\hline
    & $k$ & running time & relative error\\
\hline
    \multirow{5}{*}{$\J_1$} & 40 & 0.086 & 42.3\% \\
\cline{2-4}
    & 80 & 0.10 & 9.79\% \\
\cline{2-4}
    & 120 & 0.14 & 3.67\% \\
\cline{2-4}
    & 160 & 0.16 & 0.87\% \\
\cline{2-4}
    & 200 & 0.19& 1.05\% \\
\hline
    \multirow{5}{*}{$\J_2$} & 400 & 1.25 & 5.79\% \\
\cline{2-4}
    & 800 & 2.02 & 2.32\% \\
\cline{2-4}
    & 1200 & 2.93 & 1.85\% \\
\cline{2-4}
    & 1600 & 3.74 & 1.19\% \\
\cline{2-4}
    & 2000 & 4.50 & 0.96\% \\
\hline
\end{tabular}
\end{table}

\section{Conclusion}
In this work, we demonstrate that subspace embeddings for database join queries can be computed in time substantially faster than forming the join, yielding input sparsity time algorithms for regression on joins of two tables up to machine precision, and we extend our results to ridge regression on arbitrary joins. Our results improve on the state-of-the-art FAQ-based algorithms for performing in-database regression on joins. Empirically, our algorithms are substantially faster than the state-of-the-art algorithms for this problem.
%and demonstrate the utility of sketching as a tool for query evaluation. 
%The extension of our in-database implicit leverage score sampling based techniques to multi-table and non-acyclic joins is an interesting topic for future research. Moreover, the application of our subspace embeddings to problems such as low rank approximation of database joins and other linear-algebraic queries is an exciting future research direction.

%{\bf Acknowledgments:} The authors would like to thank support from the National Science Foundation (NSF) grant No. CCF-1815840, the Office of Naval Research (ONR) grant N00014-18-1-256, and a Simons Investigator Award.

%%
%% The next two lines define the bibliography style to be used, and
%% the bibliography file.
%\newpage
\bibliography{References}
\bibliographystyle{alpha}

\end{document}